\DeclareMathOperator{\diag}{diag}
\DeclareMathOperator*{\Res}{Res}
\newcommand{\Bes}{\mathrm{Bes}}
\newcommand{\C}{\mathbb{C}}
\newcommand{\Lam}{\Lambda}
\renewcommand{\Re}{\mathrm{Re}\,}
\renewcommand{\Im}{\mathrm{Im}\,}
\newcommand{\Pe}{\mathrm{Pe}}
\renewcommand{\vec}{\mathbf}
\newcommand{\ud}{\,\mathrm{d}}
\newcommand{\ds}{\displaystyle}
\newcommand{\Boh}{\mathcal{O}}
\newtheorem{theorem}{Theorem}[section]
\newtheorem{proposition}[theorem]{Proposition}
\newtheorem{corollary}[theorem]{Corollary}
\newtheorem{rhp}[theorem]{RH problem}
\theoremstyle{definition}
\theoremstyle{remark}
\newtheorem{remark}[theorem]{Remark}
\numberwithin{equation}{section}
\begin{document}

\title{Asymptotics of Fredholm determinant associated with the Pearcey kernel}

\author{Dan Dai\footnotemark[1], ~Shuai-Xia Xu\footnotemark[2] ~and Lun Zhang\footnotemark[3]}

\renewcommand{\thefootnote}{\fnsymbol{footnote}}
\footnotetext[1]{Department of Mathematics, City University of Hong Kong, Tat Chee
Avenue, Kowloon, Hong Kong. E-mail: \texttt{dandai@cityu.edu.hk}}
\footnotetext[2]{Institut Franco-Chinois de l'Energie Nucl\'{e}aire, Sun Yat-sen University,
Guangzhou 510275, China. E-mail: \texttt{xushx3@mail.sysu.edu.cn}}
\footnotetext[3] {School of Mathematical Sciences and Shanghai Key Laboratory for Contemporary Applied Mathematics, Fudan University, Shanghai 200433, China. E-mail: \texttt{lunzhang@fudan.edu.cn }}
\date{\today}

\maketitle

\begin{abstract}
The Pearcey kernel is a classical and universal kernel arising from random matrix theory, which describes the local statistics of eigenvalues when the limiting mean eigenvalue density exhibits a cusp-like singularity. It appears in a variety of statistical physics models beyond matrix models as well. We consider the Fredholm determinant of a trace class operator acting on $L^2\left(-s, s\right)$ with the Pearcey kernel. Based on a steepest descent analysis for a $3\times 3$ matrix-valued Riemann-Hilbert problem, we obtain asymptotics of the Fredholm determinant as $s\to +\infty$, which is also interpreted as large gap asymptotics in the context of random matrix theory.

\end{abstract}

\setcounter{tocdepth}{2} \tableofcontents

\section{Introduction and statement of the results}
In a series of papers \cite{BH}--\cite{BH96a}, Br\'ezin and Hikami initiated the studies of deformed complex Gaussian unitary ensemble (GUE) of the form
\begin{equation}
\frac{1}{Z_n}e^ {-n \textrm{Tr}\left(\frac{M^2}{2}-AM \right)}\ud M,
\end{equation}
defined on the space of $n \times n$ Hermitian matrices, where $Z_n$ is a normalization constant and $A$ is a deterministic matrix also known as the external source. An interesting feature of this matrix ensemble is that it provides a simple model to create a phase transition for the eigenvalues of $M$ in the large $n$ limit. Indeed, by assuming the matrix $A$ is diagonal with two eigenvalues $a$ and $-a$ of equal multiplicity, it follows from the work of Pastur \cite{Pastur} that if $a>1$, the eigenvalues are distributed on two disjoint intervals, while for $0<a<1$, the eigenvalues are distributed on a single interval. In the critical case when $a\to 1$ as $n\to \infty$, the gap closes at the origin and the limiting mean eigenvalue density exhibits a cusp-like singularity, i.e., the density vanishes like $|x|^{1/3}$ as $x\to 0$. Upon letting $n\to\infty$ and after proper scaling, a new local eigenvalue process characterized by the so-called Pearcey kernel emerges near the origin.

The Pearcey kernel $K^\Pe$ is defined as (see \cite{BH,BH1})
\begin{align}\label{eq: pearcey kernel}
K^\Pe(x,y;\rho)&=\int_0^{\infty}p(x+z)q(y+z)\ud z
\nonumber
\\
&=\frac{p(x)q''(y)-p'(x)q'(y)+p''(x)q(y)-\rho
p(x)q(y)}{x-y},
\end{align}
where $\rho\in\mathbb{R}$,
\begin{equation}\label{eq:pearcey integral}
p(x)=\frac{1}{2\pi}\int_{-\infty}^\infty
e^{-\frac14s^4-\frac{\rho}{2}s^2+isx} \ud s \qquad \text{and} \qquad
q(y)=\frac{1}{2\pi} \int_\Sigma e^{\frac14
t^4+\frac{\rho}{2}t^2+ity} \ud t.
\end{equation}
The contour $\Sigma$ in the definition of $q$ consists of the four
rays $\arg t=\pi/4,3\pi/4,5\pi/4,7\pi/4$, where the first and the
third rays are oriented from infinity to zero while the second and
the last rays are oriented outwards; see Figure \ref{fig: sigma} for an illustration. The functions $p$ and $q$ in \eqref{eq:pearcey integral} are solutions of the third order differential equations
\begin{align}
p'''(x)&=xp(x)+\rho p'(x), \label{eq:Pearcey1}
\\
q'''(y)&=-yq(y)+\rho q'(y),
\end{align}
respectively. Since $p$ and $q$ were first introduced by Pearcey in the context of electromagnetic fields \cite{Pear}, the kernel $K^\Pe$ bears the name Pearcey kernel. To see how $K^\Pe$ describe the aforementioned phase transition, note that the eigenvalues of $M$ form a determinantal point process with a correlation kernel $K_n(x,y;a)$ depending on $a$ (see \cite{BH,BH1,Zinn}), it was established in \cite{BH,BH1} (for $\rho=0$) and in \cite{BK3,TW} (for general $\rho \in \mathbb{R}$) that
$$
\lim_{n\to\infty}\frac{1}{n^{3/4}}K_n\left(\frac{x}{n^{3/4}},\frac{y}{n^{3/4}}; 1+\frac{\rho}{2\sqrt{n}}\right)=K^\Pe(x,y;\rho),
$$
i.e., the correlation kernel $K_n$ converges to the Pearcey kernel near the origin as $n\to \infty$ in a double scaling regime.

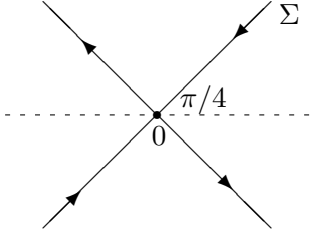
\begin{figure}[h]
\begin{center}
   \setlength{\unitlength}{1truemm}
   \begin{picture}(100,70)(-5,2)
        \dashline{0.8}(20,40)(60,40)
       \put(40,40){\line(1,1){15}}
       \put(40,40){\line(-1,-1){15}}
       \put(40,40){\line(-1,1){15}}
       \put(40,40){\line(1,-1){15}}
       \put(40,40){\thicklines\circle*{1}}
       \put(39.3,36){$0$}
       \put(43,41){$\pi/4$}
       \put(50,50){\thicklines\vector(-1,-1){.0001}}
       \put(30,50){\thicklines\vector(-1,1){.0001}}
       \put(50,30){\thicklines\vector(1,-1){.0001}}
       \put(30,30){\thicklines\vector(1,1){.0001}}
       \put(56,52){$\Sigma$}
\end{picture}
\vspace{-2cm}
\caption{The contour $\Sigma$ in the definition of $q(y)$.}
\label{fig: sigma}
\end{center}
\end{figure}

Like the classical kernels (sine kernel and Airy kernel) arising from random matrix theory \cite{Forrester,metha}, the Pearcey kernel is a universal object as evidenced by its appearance in a variety of stochastic models. On one hand, the Pearcey statistics have been established in specific matrix models including large complex correlated Wishart matrices \cite{HHNa,HHNb}, a two-matrix model with special quartic potential \cite{GZ}, and quite recently for general complex Hermitian Wigner-type matrices at the cusps \cite{EKS}, where the requirement on the identical distribution in Wigner matrices is dropped. It is worthwhile to mention that, for Wigner-type matrices, the density of states exhibits only square root or cubic root cusp singularities; see the classification theorem in \cite{AjEK2017,AlEK2018}. On the other hand, one also encounters the Pearcey kernel beyond matrix models, as can be seen from its connection with non-intersecting Brownian motions at cusps \cite{AOV,AM,BK3} and a combinatorial model on random partitions \cite{OR}.

Let $K^\Pe_{s,\rho}$ be the trace class operator acting on $L^2\left(-s, s\right)$ with the Pearcey kernel \eqref{eq: pearcey kernel}, it is well-known that the associated Fredholm determinant $\det\left(I-K^\Pe_{s,\rho}\right)$ gives us the probability of finding no particles (also known as the gap probability) on the interval $(-s,s)$  in a determinantal point process on the real line characterized by the Pearcey kernel. Moreover, it is shown in \cite{AM,BC1,BH,TW} that the gap probability satisfies some nonlinear differential equations under more general settings. Since one cannot evaluate the Fredholm determinant explicitly for any fixed $s$, a natural and fundamental question is then to ask for its large $s$ asymptotics, which will be the aim of the present work. Denote by
\begin{equation}\label{def:Fnotation}
F(s;\rho):=\ln \det\left(I-K^\Pe_{s,\rho}\right)
\end{equation}
the logarithm of Fredholm determinant associated with the Pearcey kernel, our main result is the following theorem.
\begin{theorem} \label{main-thm}
With $F(s;\rho)$ defined in \eqref{def:Fnotation}, we have, as $s\to +\infty$,
\begin{equation} \label{main-F-asy}
F(s;\rho)= -\frac{9 s^{\frac83}}{2^{\frac{17}3}} + \frac{\rho s^2}{4} - \frac{\rho^2 s^{{\frac43}}}{2^{{\frac{10}3}}} - \frac{2}{9} \ln s +\frac{\rho^4}{216} + C + \Boh(s^{-\frac{2}{3}}),
\end{equation}
uniformly for $\rho$ in any compact subset of $\mathbb{R}$, where $C$ is an undetermined constant independent of $\rho$ and $s$.
\end{theorem}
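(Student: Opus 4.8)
The plan is to follow the standard Riemann–Hilbert/integrable-operator route for large gap asymptotics, in the spirit of the Deift–Its–Krasovsky approach for the sine and Airy kernels. First I would express the Fredholm determinant $\det(I-K^\Pe_{s,\rho})$ in terms of the solution of a $3\times 3$ matrix-valued RH problem. Since the Pearcey kernel is an integrable kernel built from the functions $p$ and $q$ in \eqref{eq:pearcey integral} (and their derivatives), the operator $K^\Pe_{s,\rho}$ acting on $L^2(-s,s)$ admits a natural RH characterization: there is a $3\times 3$ matrix $Y=Y(z;s,\rho)$ with jumps on $(-s,s)$ whose large-$z$ expansion encodes the resolvent, and the key identity is a differential-in-$s$ formula
\begin{equation*}
\frac{\partial}{\partial s}\ln\det(I-K^\Pe_{s,\rho}) = -\bigl(\text{explicit bilinear expression in the entries of } Y \text{ at } \pm s\bigr),
\end{equation*}
together with an analogous $\partial_\rho$ formula. (There should also be a Lax-pair/total-integrability structure, i.e. a Hamiltonian or $\sigma$-form identity as in \cite{AM,BC1,BH,TW}, which one uses to integrate consistently in both $s$ and $\rho$.) Deriving these differential identities and setting up the RH problem is routine but must be done carefully so that the quantities appearing are exactly the ones the steepest descent analysis controls.

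The technical core is the Deift–Zhou steepest descent analysis of this $3\times 3$ RH problem as $s\to+\infty$. The main steps are: (i) introduce a $g$-function (equivalently a suitable scalar or vector equilibrium problem associated with the cubic symbol $\tfrac14 s^4+\tfrac\rho2 s^2$) that normalizes the RH problem at infinity and opens up the oscillations; the cubic nature of the Pearcey exponent means the relevant "genus zero" algebraic object is defined by a cubic equation, and the exponents $s^{8/3}$, $s^2$, $s^{4/3}$ in \eqref{main-F-asy} will come out of the leading, subleading, and sub-subleading contributions of this $g$-function; (ii) perform the standard transformations $Y\mapsto T\mapsto S$ (opening lenses around the relevant cuts) to arrive at an RH problem with jumps that are exponentially close to the identity away from the endpoints $\pm s$ and from any interior critical points; (iii) construct a global parametrix solving the model RH problem on the cuts, built from algebraic functions on the cubic curve; (iv) construct local parametrices near $\pm s$ — these will be Airy parametrices, since the density vanishes like a square root at the soft edges $\pm s$ — and, if an interior stationary point appears on the relevant contour, a local parametrix there as well; (v) prove that the ratio $R = S\cdot(\text{parametrix})^{-1}$ solves a small-norm RH problem, hence $R = I + \Boh(s^{-\alpha})$ for an appropriate power, with a full asymptotic expansion obtained by iterating.

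Once $R$ is controlled, I would feed the parametrices back into the differential identities for $\partial_s F$ and $\partial_\rho F$. The global parametrix contributions produce the polynomial-in-$s$ terms $-\tfrac{9 s^{8/3}}{2^{17/3}}+\tfrac{\rho s^2}{4}-\tfrac{\rho^2 s^{4/3}}{2^{10/3}}$ after integration in $s$; the Airy local parametrices contribute the logarithmic term, and tracking the precise prefactor should give exactly $-\tfrac29\ln s$ (two soft edges, each contributing the universal $-\tfrac1{12}$-type coefficient, modulated by the Jacobian of the local variable — the factor $\tfrac29$ rather than $\tfrac16$ is a genuine feature of the cubic scaling and must be computed from the conformal map at the endpoints); the $\Boh(s^{-2/3})$ error comes from $R=I+\Boh(s^{-2/3})$. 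Integrating $\partial_\rho F$ and matching with the $s$-integration pins down the $\rho$-dependence of the constant of integration, yielding the $\tfrac{\rho^4}{216}$ term, and leaves one genuinely undetermined constant $C$ independent of both $s$ and $\rho$ (this is the usual situation: the multiplicative constant in such large-gap asymptotics typically requires separate, often much harder, input — e.g. a connection to a known determinant or a Toda/isomonodromy argument — and we do not attempt to determine it here).

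I expect the main obstacle to be step (i)–(iii): identifying the correct $g$-function / equilibrium measure for the cubic Pearcey symbol on the symmetric interval $(-s,s)$ and verifying that the resulting global parametrix has the right analytic structure (in particular that the "band/void" structure is as expected, with no unexpected secondary phase transition for $\rho$ in a compact set), and then extracting the first three terms of its large-$s$ expansion with the exact rational constants. A secondary but delicate point is the bookkeeping of the constant and $\rho^4$ terms: one must integrate the differential identities along a path in $(s,\rho)$-space and control the relevant integration constants, which requires the sub-sub-subleading accuracy of the parametrix analysis. The Airy-parametrix step and the small-norm estimate for $R$ are, by contrast, standard.
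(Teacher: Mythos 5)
Your overall roadmap matches the paper's: integrable-kernel differential identities for $\partial_s F$ and $\partial_\rho F$, Deift--Zhou steepest descent on a $3\times 3$ RH problem using $\lambda$-functions built from the cubic algebraic curve $w^3-3w+2z=0$, a global parametrix on a genus-zero Riemann surface, local parametrices at $\pm s$, small-norm estimate, then integrating the identities and matching the $s$- and $\rho$-integrations to pin down the $\rho^4/216$ term while leaving one undetermined constant. The $g$-function language is a slight mismatch with the paper's $\lambda$-function formalism, but the role is the same.

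However, step (iv) of your plan contains a genuine error that would derail the computation of the logarithmic term. You predict Airy local parametrices at $\pm s$, asserting that these are "soft edges" where "the density vanishes like a square root." That is not what happens here. The endpoints $\pm s$ (rescaled to $\pm 1$) are the endpoints of the \emph{gap} interval, and the RH solution $X$ behaves like $\mathcal O(\ln(z\mp s))$ there — logarithmic growth, not the boundedness characteristic of an Airy (soft-edge) point. Correspondingly, the conformal map $f(z)=\tfrac14(\lambda_1(z)-\lambda_3(z))^2$ has a simple zero at $-1$ ($f(z)\sim C_1^2(z+1)$), not a cubic zero, so the correct local model is the Bessel parametrix $\Phi^{(\Bes)}_0$, whose exponent is $e^{z^{1/2}\sigma_3}$. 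Your heuristic for the $-\tfrac29\ln s$ term ("two soft edges, each contributing the universal $-\tfrac{1}{12}$-type coefficient") is therefore built on the wrong parametrix. In the paper, $-\tfrac29\ln s$ arises from the sub-leading $1/(8z^{1/2})$ correction in the large-$z$ expansion of the Bessel parametrix, fed through the prefactor $E$ and the residue calculus for $R_1'(-1)$; the precise rational value comes from the combination $\tfrac{C_1-12C_3}{32 C_1}\to\tfrac29$, where $C_1,C_3$ are the Taylor coefficients of $\lambda_1\pm\lambda_3$ at $-1$. With Airy parametrices the local expansion, the power of $s$ in the conformal map, and the subleading coefficient would all be different, and the $\Boh(s^{-2/3})$ error rate would not emerge correctly either. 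So while the skeleton is right, the identification of the endpoint type (hard-edge/Bessel with $\alpha=0$, not soft-edge/Airy) is a necessary ingredient you would need to fix before the rest of the argument could be carried out.
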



In the literature, the large $s$ asymptotics of $F(s;\rho)$ was formally derived in \cite{BH} for $\rho = 0$, based on the coupled nonlinear differential equations satisfied by $F(s;0)$. Moreover, the asymptotics therein contains the leading term alone, without providing any information about the error estimate or the sub-leading terms. Our asymptotic expansion  \eqref{main-F-asy} includes more terms and improves the result in \cite{BH}. After a change of variable $s\mapsto s/2$, the leading term of the asymptotic formula \eqref{main-F-asy}, i.e., $-9 s^{\frac83}/2^{\frac{17}3}$, agrees with that obtained in \cite[(3.36)]{BH}. Furthermore, we wish to emphasize that our derivation is rigorous, which makes use of integrable structure of the Pearcey kernel in the sense of Its-Izergin-Korepin-Slavnov \cite{IIKS90} and involves a steepest descent analysis of the relevant Riemann-Hilbert (RH) problem. In a companion work \cite{DXZ2020}, we also establish asymptotics of the deformed Pearcey determinant $\det(I-\gamma K_{s,\rho}^\Pe)$ for $0<\gamma<1$, in the context of thinned Pearcey process, which exhibits significantly different behavior from that of the undeformed case presented in Theorem \ref{main-thm}.


As also observed in \cite{BH}, the leading term of the asymptotic formula \eqref{main-F-asy} confirms the so-called Forrester-Chen-Eriksen-Tracy conjecture \cite{CET95,Forrester93}. This conjecture asserts that if the density of state behaves as $|x-x^*|^\beta$ near a point $x^*$, then the probability $E(s)$ of emptiness of the interval $(x^* -s, x^* +s)$ behaves like
\begin{equation}
  E(s) \sim  \exp \biggl(-C s^{2\beta +2} \biggr), \qquad \textrm{as } s \to +\infty.
\end{equation}
In the present Pearcey case, we have $\beta = \frac{1}{3}$ and $2\beta +2 = \frac83$.

Evaluation of the constant $C$ in \eqref{main-F-asy} is a challenging problem in the studies of large gap asymptotics \cite{Kra1}. For the classical sine, Airy and Bessel kernels encountered in random matrix theory, one could resolve this problem either by investigating the relevant Hankel or Toeplitz determinants which approximate the Fredholm determinants \cite{DIK2008,DIKZ,dkv,Kra2}, or by studying the total integrals of the Painlev\'{e} transcendents on account of Tracy-Widom type formulas for the gap probability \cite{BRD08}; see also \cite{BE,E10,E06} for the approach of operator theory. It seems unlikely that these methods are applicable in the present case. One rough idea to tackle this problem is based on the observation that, as the parameter $\rho$ tends to $-\infty$, the cusp singularity at the origin disappears and the origin becomes a regular point inside the bulk. Thus, we expect that $F(s;\rho)$ might be related to the determinant of (generalized) sine kernel under certain scaling limits when $\rho  \to -\infty$, from which the constant term can be derived. We will leave this issue to a future publication.


Finally, we note that the asymptotics of Fredholm determinant associated with the Pearcey kernel is also investigated from the viewpoint of phase transition in \cite{ACV,BC1}, i.e., to show how the Pearcey process becomes an Airy process by sending both $s$ and the parameter $\rho$ to positive infinity. We emphasize the asymptotic results therein are essentially different from ours.

The rest of this paper is devoted to the proof of Theorem \ref{main-thm}. We mainly follow the general strategy established in \cite{Bor:Dei2002,DIZ97}.  In Section \ref{sec:DiffIdentity}, we relate the partial derivatives of $F(s;\rho)$ to a $3 \times 3$ RH problem with constant jumps, which is essential in the proof. After introducing some auxiliary functions defined on a Riemann surface with a specified sheet structure in Section \ref{sec:auxiliary function}, we then perform a Deift-Zhou steepest descent analysis \cite{DZ93} on this RH problem for large positive $s$ in Section \ref{sec:asymanalyX}. This asymptotic outcome, together with the differential identities for $F(s;\rho)$, will finally lead to the proof of Theorem \ref{main-thm}, as presented in Section \ref{sec:proof}.

\section{Differential identities for the Fredholm determinant}\label{sec:DiffIdentity}
\subsection{A  Riemann-Hilbert characterization of the Pearcey kernel}
The starting point toward the proof of Theorem \ref{main-thm} is an alternative representation of the Pearcey kernel $K^\Pe$ via a $3 \times 3$ RH problem, as shown in \cite{BK3} and stated next.

\begin{rhp} \label{rhp: Pearcey}
We look for a $3 \times 3$ matrix-valued function $\Psi(z)=\Psi(z;\rho)$ satisfying
\begin{itemize}
\item[\rm (1)] $\Psi(z)$ is defined and analytic in $\C \setminus
\{\cup_{j=0}^5\Sigma_j \cup \{ 0 \} \}$,
where
\begin{equation}\label{def:sigmai}
\begin{aligned}
&\Sigma_0=(0,+\infty), ~~ \Sigma_1=e^{\frac{\pi i}{4}}(0,+\infty), ~~\Sigma_2=e^{\frac{ 3 \pi i}{4}}(0,+\infty),
\\
&\Sigma_3=(-\infty, 0), ~~ \Sigma_4=e^{-\frac{3\pi i}{4}}(0,+\infty),~~ \Sigma_5=e^{-\frac{\pi i}{4}}(0,+\infty),
\end{aligned}
\end{equation}
with the orientations as shown in Figure \ref{fig:Pearcey}.

\begin{figure}[h]
\begin{center}
   \setlength{\unitlength}{1truemm}
   \begin{picture}(100,70)(-5,2)
       \put(40,40){\line(-1,-1){20}}
       \put(40,40){\line(-1,1){20}}
       \put(40,40){\line(-1,0){30}}
       \put(40,40){\line(1,0){30}}
       \put(40,40){\line(1,1){20}}
       \put(40,40){\line(1,-1){20}}

       \put(30,50){\thicklines\vector(1,-1){1}}
       \put(30,40){\thicklines\vector(1,0){1}}
       \put(30,30){\thicklines\vector(1,1){1}}
       \put(50,50){\thicklines\vector(1,1){1}}
       \put(50,40){\thicklines\vector(1,0){1}}
       \put(50,30){\thicklines\vector(1,-1){1}}

       \put(39,36.3){$0$}

       \put(20,15){$\Sigma_4$}
       \put(20,63){$\Sigma_2$}
       \put(3,40){$\Sigma_3$}
       \put(60,15){$\Sigma_5$}
       \put(60,63){$\Sigma_1$}
       \put(75,40){$\Sigma_0$}

       \put(22,44){$\Theta_2$}
       \put(22,34){$\Theta_3$}
       \put(55,44){$\Theta_0$}
       \put(55,34){$\Theta_5$}
       \put(38,53){$\Theta_1$}
       \put(38,22){$\Theta_4$}

       \put(40,40){\thicklines\circle*{1}}

   \end{picture}
   \caption{The jump contours $\Sigma_{k}$ and the regions $\Theta_k$, $k=0,1,\ldots,5$, for the RH problem for $\Psi$.}
   \label{fig:Pearcey}
\end{center}
\end{figure}
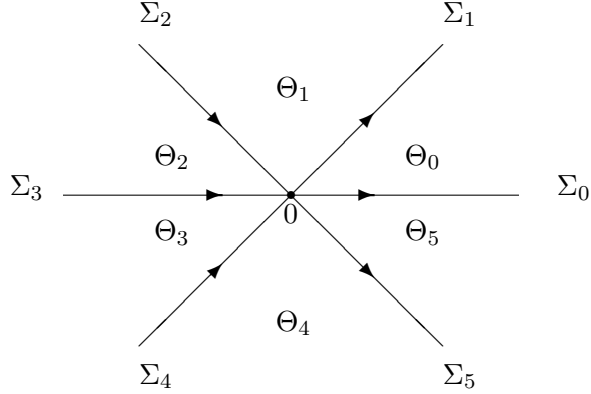

\item[\rm (2)] For $z\in \Sigma_k$, $k=0,1,\ldots,5$, the limiting values
\[ \Psi_+(z) = \lim_{\substack{\zeta \to z \\\zeta\textrm{ on $+$-side of }\Sigma_k}}\Psi(\zeta), \qquad
   \Psi_-(z) = \lim_{\substack{\zeta \to z \\\zeta\textrm{ on $-$-side of }\Sigma_k}}\Psi(\zeta), \]
exist, where the $+$-side and $-$-side of $\Sigma_k$ are the sides
which lie on the left and right of $\Sigma_k$, respectively, when
traversing $\Sigma_k$ according to its orientation. These limiting
values satisfy the jump relation
\begin{equation}\label{jumps:M}
\Psi_{+}(z) = \Psi_{-}(z)J_{\Psi}(z),\qquad z\in \cup_{j=0}^5\Sigma_j,
\end{equation}
where
\begin{equation}\label{def:JPsi}
J_{\Psi}(z):= \left\{
        \begin{array}{ll}
          \begin{pmatrix} 0&1&0 \\ -1&0&0 \\ 0&0&1 \end{pmatrix}, & \qquad \hbox{$z\in \Sigma_0$,} \\
          \begin{pmatrix} 1&0&0 \\ 1&1&1 \\ 0&0&1 \end{pmatrix}, & \qquad \hbox{$z\in \Sigma_1$,} \\
          \begin{pmatrix} 1&0&0 \\ 0&1&0 \\ 1&1&1 \end{pmatrix}, & \qquad \hbox{$z\in \Sigma_2$,} \\
          \begin{pmatrix} 0&0&1 \\ 0&1&0 \\ -1&0&0 \end{pmatrix}, & \qquad \hbox{$z\in \Sigma_3$,} \\
          \begin{pmatrix} 1&0&0 \\ 0&1&0 \\ 1&-1&1 \end{pmatrix}, & \qquad \hbox{$z\in \Sigma_4$,} \\
          \begin{pmatrix} 1&0&0 \\ 1&1&-1 \\ 0&0&1 \end{pmatrix}, & \qquad \hbox{$z\in \Sigma_5$.}
        \end{array}
      \right.
\end{equation}

\item[\rm (3)]  As $z \to \infty$ and $\pm\Im z>0$, we have
\begin{equation}\label{eq:asyPsi}
 \Psi(z)=
\sqrt{\frac{2\pi}{3}}i e^{\frac{\rho^2}{6}} \Psi_0
\left(I+ \frac{\Psi_1}{z} +\mathcal O(z^{-2}) \right)\diag \left(z^{-\frac13},1,z^{\frac13} \right)L_{\pm} e^{\Theta(z)},
\end{equation}
where
\begin{equation} \label{asyPsi:coeff}
\Psi_0 =  \begin{pmatrix}
1 & 0 & 0 \\
0 & 1  & 0 \\
\kappa_3(\rho)+\frac{2\rho}{3} & 0 & 1
\end{pmatrix}, \qquad
\Psi_1 = \begin{pmatrix}
0 & \kappa_3(\rho) & 0 \\
\widetilde\kappa_6(\rho) & 0 & \kappa_3(\rho) + \frac{\rho}{3} \\
0 & \widehat \kappa_6(\rho)  & 0
\end{pmatrix},
\end{equation}
with
\begin{equation} \label{poly:kappa3}
	\kappa_3(\rho) = \frac{\rho^3}{54} - \frac{\rho}{6},
\end{equation}
$\widetilde\kappa_6(\rho) = \kappa_6(\rho) + \frac{\rho}{3} \kappa_3(\rho) - \frac{1}{3}$, $\widehat\kappa_6(\rho) = \kappa_6(\rho) -\kappa_3^2(\rho) + \frac{\rho^2}{9} - \frac{1}{3},$ and
\begin{equation} \label{poly:kappa6}
	\kappa_6(\rho) = \frac{\rho^6}{5832} -  \frac{\rho^4}{162} - \frac{\rho^2}{72} + \frac{7}{36}.
\end{equation}
Moreover, $L_{\pm}$ are constant matrices
\begin{align}\label{def:Lpm}
L_{+}=
\begin{pmatrix}
-\omega & \omega^2 & 1 \\ -1&1&1 \\ -\omega^2 & \omega & 1
\end{pmatrix},
\qquad
L_{-}=
\begin{pmatrix}
\omega^2 & \omega & 1 \\ 1&1&1 \\ \omega & \omega^2 & 1
\end{pmatrix},
\end{align}
with $\omega=e^{2\pi i/3}$, and $\Theta(z)$ is given by
\begin{align}\label{def:Theta}
\Theta(z)=\Theta(z;\rho)&= \begin{cases}
\diag (\theta_1(z;\rho),\theta_2(z;\rho),\theta_3(z;\rho)), & \text{$\Im z >0$,} \\
\diag (\theta_2(z;\rho),\theta_1(z;\rho),\theta_3(z;\rho)), & \text{$\Im z <0$,} \\
\end{cases}
\end{align}
with
\begin{equation} \label{eq: theta-k-def}
\theta_k(z;\rho)=\frac34 \omega^{2k}z^{\frac43}+\frac{\rho}{2}\omega^kz^{\frac23}, \qquad k=1,2,3.
\end{equation}

\item[\rm (4)] $\Psi(z)$ is bounded near the origin.
\end{itemize}
\end{rhp}

It is shown in \cite[Section 8.1]{BK3} that the above RH problem has a unique solution expressed in terms of solutions of the Pearcey differential equation
\eqref{eq:Pearcey1}. Indeed, note that \eqref{eq:Pearcey1} admits the following solutions:
\begin{equation}\label{def:pj}
p_j(z)=p_j(z;\rho)=\int_{\Gamma_j}e^{-\frac14 s^4-\frac{\rho}{2}s^2+is z}\ud s, \qquad j=0,1,\ldots,5,
\end{equation}
where
\begin{align*}
\Gamma_0&=(-\infty,+\infty), \quad && \Gamma_1=(i\infty, 0]\cup[0,\infty),
\\
\Gamma_2&=(i\infty,0]\cup[0,-\infty), \quad && \Gamma_3=(-i\infty, 0]\cup[0,-\infty),
\\
\Gamma_4&=(-i\infty,0]\cup[0,\infty), \quad && \Gamma_5=(-i\infty, i\infty).
\end{align*}
We then have
\begin{equation}\label{def:soltopsi}
\Psi(z)=\left\{
        \begin{array}{ll}
          \begin{pmatrix} -p_2(z) & p_1(z) & p_5(z)\\ -p_2'(z) & p_1'(z) & p_5'(z) \\-p_2''(z) & p_1''(z) & p_5''(z) \end{pmatrix}, & \quad \hbox{$z\in \Theta_0$,} \\
          \begin{pmatrix}  p_0(z) & p_1(z) & p_4(z)\\ p_0'(z) & p_1'(z) & p_4'(z) \\ p_0''(z) & p_1''(z) & p_4''(z) \end{pmatrix}, & \quad \hbox{$z\in \Theta_1$,} \\
          \begin{pmatrix} -p_3(z) & -p_5(z) & p_4(z) \\ -p_3'(z) & -p_5'(z) & p_4'(z) \\-p_3''(z) & -p_5''(z) & p_4''(z) \end{pmatrix}, & \quad \hbox{$z\in \Theta_2$,} \\
          \begin{pmatrix} p_4(z) & -p_5(z) & p_3(z)\\ p_4'(z) & -p_5'(z) & p_3'(z) \\ p_4''(z) & -p_5''(z) & p_3''(z)\end{pmatrix}, & \quad \hbox{$z\in \Theta_3$,} \\
          \begin{pmatrix} p_0(z) & p_2(z) & p_3(z)\\ p_0'(z) & p_2'(z) & p_3'(z) \\ p_0''(z) & p_2''(z) & p_3''(z) \end{pmatrix}, & \quad \hbox{$z\in \Theta_4$,} \\
          \begin{pmatrix} p_1(z) & p_2(z) & p_5(z)\\ p_1'(z) & p_2'(z) & p_5'(z) \\ p_1''(z) & p_2''(z) & p_5''(z) \end{pmatrix}, & \quad \hbox{$z\in \Theta_5$,}
        \end{array}
      \right.
\end{equation}
where $\Theta_k$, $k=0,1,\ldots,5$, is the region bounded by the rays $\Sigma_k$ and $\Sigma_{k+1}$ (with $\Sigma_6:=\Sigma_0$); see Figure \ref{fig: sigma} for an illustration.

\begin{remark} For our purpose, we present a refined asymptotics of $\Psi$ at infinity in \eqref{eq:asyPsi}, which can be verified directly from \eqref{def:soltopsi}. To see this, let us focus on the case $z\in\Theta_1$, since the asymptotics in other regions can be derived in a similar way. Carrying out a steepest descent analysis to the integrals defined \eqref{def:pj} (cf. \cite{BK3,Miyamoto}), we have, as $z\to \infty$,
\renewcommand{\arraystretch}{1.2}
\begin{equation}\label{eq:asyp0}
  p_0(z) = \begin{cases}
     -\sqrt{\frac{2\pi}{3}}i e^{\frac{\rho^2}{6}}  \omega  z^{-\frac{1}{3}} e^{\theta_1(z;\rho)} \biggl( 1 + \frac{\kappa_3(\rho)}{\omega} z^{-\frac{2}{3}} + \frac{\kappa_6(\rho)}{\omega^2} z^{-\frac{4}{3}} + \mathcal O(z^{-2} ) \biggr), & \Im z >0, \\
      \sqrt{\frac{2\pi}{3}}i e^{\frac{\rho^2}{6}}  \omega^2  z^{-\frac{1}{3}} e^{\theta_2(z;\rho)} \biggl( 1 + \frac{\kappa_3(\rho)}{\omega^2} z^{-\frac{2}{3}} + \frac{\kappa_6(\rho)}{\omega^4} z^{-\frac{4}{3}} + \mathcal O(z^{-2} ) \biggr), & \Im z <0,
   \end{cases}
\end{equation}
\begin{equation}
  p_1(z) = \sqrt{\frac{2\pi}{3}}i e^{\frac{\rho^2}{6}}  \omega^2  z^{-\frac{1}{3}} e^{\theta_2(z;\rho)} \biggl( 1 + \frac{\kappa_3(\rho)}{\omega^2} z^{-\frac{2}{3}} + \frac{\kappa_6(\rho)}{\omega^4} z^{-\frac{4}{3}} + \mathcal O(z^{-2} ) \biggr)
\end{equation}
for $-\frac{3\pi}{4} < \arg z < \frac{5\pi}{4},$ and
\begin{equation}\label{eq:asyp4}
  p_4(z) = \sqrt{\frac{2\pi}{3}}i e^{\frac{\rho^2}{6}} z^{-\frac{1}{3}} e^{\theta_3(z;\rho)} \biggl( 1 + \kappa_3(\rho) z^{-\frac{2}{3}} + \kappa_6(\rho) z^{-\frac{4}{3}} + \mathcal O(z^{-2} ) \biggr)
\end{equation}
for $-\frac{\pi}{4} < \arg z < \frac{7\pi}{4},$ where $\theta_k(z;\rho)$, $k=1,2,3$, are given in \eqref{eq: theta-k-def}, $\kappa_3(\rho)$ and $\kappa_6(\rho)$ are polynomials in $\rho$ given in \eqref{poly:kappa3} and \eqref{poly:kappa6}. The asymptotic expansions of $p_j'(z)$ and $p_j''(z)$, $j=0,1,4$, can be derived in a similar fashion or simply by taking derivatives on the right hand sides of \eqref{eq:asyp0}--\eqref{eq:asyp4}. A combination of all these asymptotic results and \eqref{def:soltopsi} then gives us \eqref{eq:asyPsi} after a straightforward calculation.
\end{remark}

Now, define
\begin{equation} \label{eq: tilde-psi}
\widetilde \Psi (z)=\widetilde \Psi (z;\rho)=
\begin{pmatrix}
p_0(z) & p_1(z) & p_4(z)
\\
p_0'(z) & p_1'(z) & p_4'(z)
\\
p_0''(z) & p_1''(z) & p_4''(z)
\end{pmatrix}, \qquad z\in \mathbb{C},
\end{equation}
that is, $\widetilde \Psi$ is the analytic extension of the restriction of $\Psi$ on the region $\Theta_1$ to the whole complex plane. The Pearcey kernel \eqref{eq: pearcey kernel} then admits the following equivalent representation in terms of $\widetilde \Psi$ (see \cite[Equation (10.19)]{BK3}):
\begin{equation}\label{eq:PearceyRH}
K^\Pe(x,y;\rho)=\frac{1}{2 \pi i(x-y)}
\begin{pmatrix}
0 & 1 & 1
\end{pmatrix}
\widetilde \Psi^{-1}(y;\rho)\widetilde \Psi(x;\rho)
\begin{pmatrix}
1
\\
0
\\
0
\end{pmatrix}, \qquad x,y \in \mathbb{R}.
\end{equation}

\subsection{Differential equations for $\Psi$}
For later use, we need the following linear differential equations for $\Psi$ with respect to $z$ and $\rho$.
\begin{proposition}\label{prop:ODE}
Let $\Psi=\Psi(z;\rho)$ be the unique solution of the RH problem \ref{rhp: Pearcey}. We then have
\begin{equation}\label{eq: ODE1}
\frac{\partial \Psi}{\partial z}=
\begin{pmatrix}
0 & 1 & 0
\\
0 & 0 & 1
\\
z & \rho & 0
\end{pmatrix}\Psi,
\end{equation}
and
\begin{equation}\label{eq:ODE2}
\frac{\partial \Psi}{\partial \rho}=\frac12
\begin{pmatrix}
0 & 0 & 1
\\
z & \rho & 0
\\
1 & z & \rho
\end{pmatrix}\Psi.
\end{equation}
\end{proposition}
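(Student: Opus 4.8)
The plan is to bypass RH problem \ref{rhp: Pearcey} itself and work directly with the explicit solution \eqref{def:soltopsi}, which exhibits $\Psi$ piecewise on the six sectors $\Theta_0,\dots,\Theta_5$, each of its columns being $\pm v_j(z)$ for an appropriate index $j$, where I set $v_j(z)=v_j(z;\rho):=\begin{pmatrix} p_j(z;\rho) \\ p_j'(z;\rho) \\ p_j''(z;\rho) \end{pmatrix}$ with $p_j$ given in \eqref{def:pj}. Both \eqref{eq: ODE1} and \eqref{eq:ODE2} assert that $\partial_z\Psi$, respectively $\partial_\rho\Psi$, equals a fixed $3\times3$ matrix times $\Psi$; since such a relation acts column by column and is insensitive to an overall sign, it suffices to establish it for a single vector $v_j$ and then reassemble the three columns within each sector. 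Moreover, once the identities hold off the rays $\Sigma_k$ they extend across $\cup_k\Sigma_k$ automatically, because the jump matrices in \eqref{def:JPsi} are independent of both $z$ and $\rho$.

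For \eqref{eq: ODE1}, I would differentiate $v_j$ in $z$, which replaces it by $\begin{pmatrix} p_j' \\ p_j'' \\ p_j''' \end{pmatrix}$, and then use the Pearcey equation \eqref{eq:Pearcey1}, $p_j'''=z\,p_j+\rho\,p_j'$ --- satisfied by every $p_j$ --- to rewrite the bottom entry. The outcome is exactly that $\partial_z v_j$ equals the coefficient matrix of $\Psi$ in \eqref{eq: ODE1} applied to $v_j$, and assembling the columns on each $\Theta_k$ gives \eqref{eq: ODE1}.

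For \eqref{eq:ODE2}, the key point is the elementary scalar identity $\partial_\rho p_j = \tfrac12\,p_j''$, valid for every $j$; it follows by differentiating the absolutely convergent integral \eqref{def:pj} under the integral sign, since $\partial_\rho$ brings down the factor $-\tfrac12 s^2$ in the integrand while $\partial_z^2$ brings down $(is)^2=-s^2$. (The interchange of $\partial_\rho$ with the integral is legitimate because along each of the rays making up $\Gamma_j$ one has $\Re(-\tfrac14 s^4)\to-\infty$, so the integral and all its $\rho$-derivatives converge absolutely, locally uniformly in $\rho$.) Applying $\partial_z$ to $\partial_\rho p_j=\tfrac12 p_j''$, using the commutativity of $\partial_z$ and $\partial_\rho$ and the equation \eqref{eq:Pearcey1} once more, one gets $\partial_\rho p_j' = \tfrac12(z\,p_j+\rho\,p_j')$ and $\partial_\rho p_j'' = \tfrac12(p_j+z\,p_j'+\rho\,p_j'')$, that is, $\partial_\rho v_j$ equals the coefficient matrix of $\Psi$ in \eqref{eq:ODE2} applied to $v_j$; reassembling columns over the $\Theta_k$ then proves \eqref{eq:ODE2}.

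There is no serious obstacle here; the only points demanding a little care are the bookkeeping of confirming that each $\Gamma_j$ in \eqref{def:pj} lies in the sectors $\{\Re(s^4)>0\}$ where differentiation under the integral is justified, and of reading off from \eqref{def:soltopsi} that every column of $\Psi$ is indeed $\pm v_j$ for some $j$ so that the linear relations transfer without change. Alternatively, one could give a contour-free proof: since the jump matrices are constant, $(\partial_z\Psi)\Psi^{-1}$ and $(\partial_\rho\Psi)\Psi^{-1}$ have no jumps across $\cup_k\Sigma_k$; because $\det\Psi$ is a nonzero constant (all jumps in \eqref{def:JPsi} have determinant $1$, and $\det e^{\Theta(z)}\equiv1$ since $\omega^2+\omega^4+\omega^6=\omega+\omega^2+\omega^3=0$), the matrix $\Psi^{-1}$ is analytic wherever $\Psi$ is, so both quotients are entire; the refined asymptotics \eqref{eq:asyPsi} then bound them by $\mathcal O(z)$, hence by Liouville's theorem they are matrix polynomials of degree at most one in $z$, whose coefficients are pinned down by matching the first two terms of \eqref{eq:asyPsi}. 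I would present the direct computation as the proof and relegate this Riemann--Hilbert argument to a remark.
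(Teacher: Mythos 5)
Your argument is correct and is essentially the paper's proof: both deduce \eqref{eq: ODE1} columnwise from the Pearcey equation \eqref{eq:Pearcey1} applied to the explicit solution \eqref{def:soltopsi}, and both obtain \eqref{eq:ODE2} by differentiating \eqref{def:pj} under the integral sign to get $\partial_\rho p_j=\tfrac12 p_j''$ together with the corresponding formulas for $\partial_\rho p_j'$ and $\partial_\rho p_j''$ (the paper computes all three $\rho$-derivatives directly from the integral representation, whereas you derive the latter two by taking $z$-derivatives of the first and invoking \eqref{eq:Pearcey1}; the two bookkeepings are equivalent). The Liouville-theorem observation you sketch as a remark is a pleasant complement, but the direct columnwise computation you give as the main proof is exactly the paper's route.
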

\begin{proof}
The differential equation \eqref{eq: ODE1} follows directly from \eqref{eq:Pearcey1}, \eqref{def:pj} and \eqref{def:soltopsi}. To show \eqref{eq:ODE2}, we obtain from \eqref{def:pj}, \eqref{eq:Pearcey1} and direct calculations that, for $j=1,\ldots,5$,
\begin{equation}
\frac{\partial p_j}{\partial \rho}=-\frac12 \int_{\Gamma_j}s^2e^{-\frac14 s^4-\frac{\rho}{2}s^2+is z}\ud s=\frac12 p_j'',
\end{equation}
\begin{equation}
\frac{\partial p_j'}{\partial \rho}=-\frac i 2 \int_{\Gamma_j}s^3 e^{-\frac14 s^4-\frac{\rho}{2}s^2+is z}\ud s= \frac12 p_j'''=\frac12 (zp_j+\rho p_j'),
\end{equation}
and
\begin{equation}
\frac{\partial p_j''}{\partial \rho}= \frac 1 2 \int_{\Gamma_j}s^4 e^{-\frac14 s^4-\frac{\rho}{2}s^2+is z}\ud s
= \frac12 p_j''''
=\frac12 (zp_j+\rho p_j')'=\frac12 (p_j+zp_j'+\rho p_j''),
\end{equation}
which leads to \eqref{eq:ODE2}.

This completes the proof of Proposition \ref{prop:ODE}.
\end{proof}

\subsection{Differential identities for $F$}
By \eqref{eq:PearceyRH}, it is readily seen that
\begin{equation}\label{eq:tildeKdef}
K^\Pe(x,y;\rho) = \frac{\vec{f}^t(x)\vec{h}(y)}{x-y},
\end{equation}
where
\begin{equation}\label{def:fh}
\vec{f}(x)=\begin{pmatrix}
f_1
\\
f_2
\\
f_3
\end{pmatrix}:=\widetilde \Psi(x)
\begin{pmatrix}
1
\\
0
\\
0
\end{pmatrix}, \qquad
\vec{h}(y)=\begin{pmatrix}
h_1
\\
h_2
\\
h_3
\end{pmatrix}
:=
\frac{1}{2 \pi i}
\widetilde \Psi^{-t}(y) \begin{pmatrix}
0
\\
1
\\
1
\end{pmatrix}.
\end{equation}

With the function $F$ defined in \eqref{def:Fnotation}, we have
\begin{equation}\label{eq:derivatives}
\frac{\partial}{\partial s}F(s;\rho)=\frac{\ud}{\ud s} \ln \det(I-K_{s,\rho}^\Pe)
=-\textrm{tr}\left((I-K_{s,\rho}^\Pe)^{-1}
\frac{\ud}{\ud s}K_{s,\rho}^\Pe\right)=-R(s,s)-R(-s,-s),
\end{equation}
where $R(u,v)$ stands for the kernel of the resolvent operator, that is,
$$R=\left(I-K_{s,\rho}^\Pe\right)^{-1}-I=K_{s,\rho}^\Pe\left(I-K_{s,\rho}^\Pe\right)^{-1}=\left(I-K_{s,\rho}^\Pe\right)^{-1}K_{s,\rho}^\Pe.$$
Since the kernel of the operator $K_{s,\rho}^\Pe$ is integrable in the sense of \cite{IIKS90}, its resolvent kernel is integrable as well; cf. \cite{DIZ97,IIKS90}. Indeed, by setting
\begin{equation}\label{def:FH}
\vec{F}(u)=
\begin{pmatrix}
F_1 \\
F_2 \\
F_3
\end{pmatrix}:=\left(I-K_{s,\rho}^\Pe\right)^{-1}\vec{f}, \qquad \vec{H}(v)=\begin{pmatrix}
H_1 \\
H_2 \\
H_3
\end{pmatrix}
:=\left(I-K_{s,\rho}^\Pe\right)^{-1}\vec{h},
\end{equation}
we have
\begin{equation}\label{eq:resolventexpli}
R(u,v)=\frac{\vec{F}^t(u)\vec{H}(v)}{u-v}.
\end{equation}

We could also represent $\frac{\partial}{\partial \rho}F(s;\rho)$ in terms of $\vec{F}$ and $\vec{H}$ defined in \eqref{def:FH}. To proceed, we note from \eqref{def:fh} and \eqref{eq:ODE2} that
\begin{equation}\label{def:fderivative}
\frac{\partial \vec{f}}{\partial \rho}(x)
=
\frac12
\begin{pmatrix}
0 & 0 & 1
\\
x & \rho & 0
\\
1 & x & \rho
\end{pmatrix}
\vec{f}(x).
\end{equation}
Moreover, by taking derivative with respect to $\rho$ on both sides of $\Psi \cdot \Psi^{-1}=I$, it is readily seen from \eqref{eq:ODE2} that
\begin{equation}
\frac{\partial \Psi^{-1}}{\partial \rho} = -\frac{\Psi^{-1}}{2}
\begin{pmatrix}
0 & 0 & 1
\\
z & \rho & 0
\\
1 & z & \rho
\end{pmatrix},
\end{equation}
which gives us
\begin{equation}
\frac{\partial \vec{h}}{\partial \rho}(y)
=
-\frac12
\begin{pmatrix}
0 & y & 1
\\
0 & \rho & y
\\
1 & 0 & \rho
\end{pmatrix}
\vec{h}(y).
\end{equation}
This, together with \eqref{eq:tildeKdef} and \eqref{def:fderivative}, implies
\begin{multline}
\frac{\ud}{\ud \rho}K^{\Pe}=\frac{\frac{\partial \vec{f}^t}{\partial \rho}(x)\vec{h}(y)+\vec{f}^t(x)\frac{\partial \vec{h}}{\partial \rho}(y)}{x-y}
\\
=\vec{f}^t(x)
\begin{pmatrix}
0 & 1/2 & 0
\\
0 & 0 & 1/2
\\
0 & 0 & 0
\end{pmatrix}\vec{h}(y)=\frac 12 (f_1(x)h_2(y)+f_2(x)h_3(y)).
\end{multline}
Hence, we obtain
\begin{multline}\label{eq:lamdaderivative}
\frac{\partial}{\partial \rho} F(s;\rho)=
\frac{\ud}{\ud \rho} \ln \det\left(I-K_{s,\rho}^\Pe\right)
=-\textrm{tr}\left(\left(I-K_{s,\rho}^\Pe\right)^{-1}
\frac{\ud}{\ud \rho}K_{s,\rho}^\Pe\right)
\\
=-\frac 12 \int_{-s}^s (F_1(v)h_2(v)+F_2(v)h_3(v))\ud v.
\end{multline}

We next establish the connection between the functions $\frac{\partial}{\partial s} F(s;\rho)$, $\frac{\partial}{\partial \rho} F(s;\rho)$ and an RH problem with constant jumps, which is based on the fact that the resolvent kernel $R(u,v)$ is related to the following RH problem.
\begin{rhp} \label{rhp:Y}
We look for a $3 \times 3$ matrix-valued function
$Y(z)$ satisfying the following properties:
\begin{enumerate}
\item[\rm (1)] $Y(z)$ is defined and analytic in $\mathbb{C}\setminus [-s,s]$, where the orientation is taken from the left to the right.

\item[\rm (2)] For $x\in(-s,s)$, we have
\begin{equation}\label{eq:Y-jump}
 Y_+(x)=Y_-(x)(I-2\pi i \vec{f}(x)\vec{h}^t(x)),
 \end{equation}
where the functions $\vec{f}$ and $\vec{h}$ are defined in \eqref{def:fh}.
\item[\rm (3)] As $z \to \infty$,
\begin{equation}\label{eq:Y-infty}
 Y(z)=I+\frac{\mathsf{Y}_1}{z}+\mathcal O(z^{-2}).
 \end{equation}

\item[\rm (4)] As $z \to \pm s$, we have $Y(z) = \mathcal O(\ln(z \mp s))$.

 \end{enumerate}
\end{rhp}
By \cite{DIZ97}, it follows that
\begin{equation}\label{eq:Yexpli}
Y(z)=I-\int_{-s}^s\frac{\vec{F}(w)\vec{h}^t(w)}{w-z}\ud w
\end{equation}
and
\begin{equation}\label{def:FH2}
\vec{F}(z)=Y(z)\vec{f}(z), \qquad \vec{H}(z)=(Y^t(z))^{-1}\vec{h}(z).
\end{equation}

Recall the RH problem \ref{rhp: Pearcey} for $\Psi$, we make the following undressing transformation to arrive at an RH problem with constant jumps. To proceed, the four rays $\Sigma_k$, $k=1,2,4,5$, emanating from the origin are replaced by their parallel lines emanating from some special points on the real line. More precisely, we replace $\Sigma_1$ and $\Sigma_5$ by their parallel rays $\Sigma_1^{(s)}$ and $\Sigma_5^{(s)}$ emanating from the point $s$, replace $\Sigma_2$ and $\Sigma_4$ by their parallel rays $\Sigma_2^{(s)}$ and $\Sigma_4^{(s)}$ emanating from the point $-s$. Furthermore, these rays, together with the real axis, divide the complex plane into six regions $\texttt{I-VI}$, as illustrated in Figure \ref{fig:X}.

\begin{figure}[h]
\begin{center}
   \setlength{\unitlength}{1truemm}
   \begin{picture}(100,70)(-5,2)
       \put(25,40){\line(-1,0){30}}
       \put(55,40){\line(1,0){30}}

       \dashline{0.8}(25,40)(55,40)

       \put(25,40){\line(-1,-1){25}}
       \put(25,40){\line(-1,1){25}}

       \put(55,40){\line(1,1){25}}
       \put(55,40){\line(1,-1){25}}

       \put(15,40){\thicklines\vector(1,0){1}}
       \put(65,40){\thicklines\vector(1,0){1}}

       \put(10,55){\thicklines\vector(1,-1){1}}
       \put(10,25){\thicklines\vector(1,1){1}}
       \put(70,25){\thicklines\vector(1,-1){1}}
       \put(70,55){\thicklines\vector(1,1){1}}

       \put(39,36.3){$0$}
       \put(-2,11){$\Sigma_4^{(s)}$}

       \put(-2,67){$\Sigma_2^{(s)}$}
       \put(3,42){$\Sigma_3^{(s)}$}
       \put(80,11){$\Sigma_5^{(s)}$}
       \put(80,67){$\Sigma_1^{(s)}$}
       \put(73,42){$\Sigma_0^{(s)}$}

       \put(10,46){$\texttt{III}$}
       \put(10,34){$\texttt{IV}$}
       \put(68,46){$\texttt{I}$}
       \put(68,34){$\texttt{VI}$}
       \put(38,55){$\texttt{II}$}
       \put(38,20){$\texttt{V}$}

       \put(40,40){\thicklines\circle*{1}}
       \put(25,40){\thicklines\circle*{1}}
       \put(55,40){\thicklines\circle*{1}}

       \put(24,36.3){$-s$}
       \put(54,36.3){$s$}

   \end{picture}
   \caption{Regions $\texttt{I-VI}$ and the contours $\Sigma_{k}^{(s)}$, $k=0,1,\ldots,5$, for the RH problem for $X$.}
   \label{fig:X}
\end{center}
\end{figure}
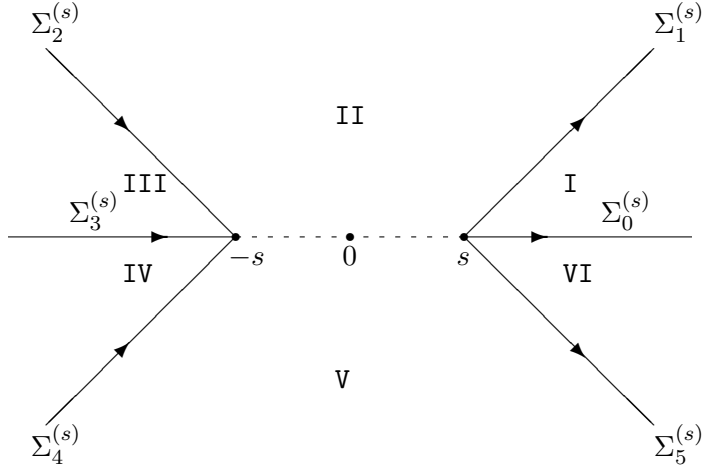

We now define
\begin{align}\label{eq:YtoX}
 X(z) = \left\{
       \begin{array}{ll}
         Y(z)\Psi(z), & \hbox{for $z$ in the region $\texttt{I}\cup \texttt{III}\cup \texttt{IV} \cup \texttt{VI}$,} \\
         Y(z)\widetilde \Psi(z), & \hbox{for $z$ in the region $\texttt{II}$,} \\
         Y(z)\widetilde \Psi(z)
\begin{pmatrix}
1 & -1 & -1
\\
0 & 1 & 0
\\
0 & 0 & 1
\end{pmatrix}, & \hbox{for $z$ in the region $\texttt{V}$,}
       \end{array}
     \right.
\end{align}
where $\widetilde \Psi$ is defined in \eqref{eq: tilde-psi}. Then, $X$ satisfies the following RH problem.

\begin{proposition}\label{rhp:X}
The function $X$ defined in \eqref{eq:YtoX} has the following properties:
\begin{enumerate}
\item[\rm (1)] $X(z)$ is defined and analytic in $\mathbb{C}\setminus \{\cup^5_{j=0}\Sigma_j^{(s)}\cup \{-s\} \cup\{s\}\}$, where
\begin{equation}\label{def:sigmais}
\begin{aligned}
&\Sigma_0^{(s)}=(s,+\infty), ~~ &&\Sigma_1^{(s)}=s+e^{\frac{\pi i}{4}}(0,+\infty), ~~ &&&\Sigma_2^{(s)}=-s+e^{\frac{ 3 \pi i}{4}}(0,+\infty),
\\
&\Sigma_3^{(s)}=(-\infty, -s), ~~ &&\Sigma_4^{(s)}=-s+e^{-\frac{3\pi i}{4}}(0,+\infty), ~~ &&&\Sigma_5^{(s)}=s+e^{-\frac{\pi i}{4}}(0,+\infty),
\end{aligned}
\end{equation}
with the orientations from the left to the right; see the solid lines in Figure \ref{fig:X}.

\item[\rm (2)] $X$ satisfies the jump condition
\begin{equation}\label{eq:X-jump}
 X_+(z)=X_-(z)J_X(z), \qquad z\in \cup^5_{j=0}\Sigma_j^{(s)},
\end{equation}
where
\begin{equation}\label{def:JX}
J_X(z):=\left\{
 \begin{array}{ll}
          \begin{pmatrix} 0&1&0 \\ -1&0&0 \\ 0&0&1 \end{pmatrix}, & \qquad \hbox{$z\in \Sigma_0^{(s)}$,} \\
          \begin{pmatrix} 1&0&0 \\ 1&1&1 \\ 0&0&1 \end{pmatrix},  & \qquad  \hbox{$z\in \Sigma_1^{(s)}$,} \\
          \begin{pmatrix} 1&0&0 \\ 0&1&0 \\ 1&1&1 \end{pmatrix},  & \qquad \hbox{$z\in \Sigma_2^{(s)}$,} \\
          \begin{pmatrix} 0&0&1 \\ 0&1&0 \\ -1&0&0 \end{pmatrix}, & \qquad  \hbox{$z\in \Sigma_3^{(s)}$,} \\
          \begin{pmatrix} 1&0&0 \\ 0&1&0 \\ 1&-1&1 \end{pmatrix}, & \qquad  \hbox{$z\in \Sigma_4^{(s)}$,} \\
          \begin{pmatrix} 1&0&0 \\ 1&1&-1 \\ 0&0&1 \end{pmatrix}, & \qquad  \hbox{$z\in \Sigma_5^{(s)}$.}
        \end{array}
      \right.
 \end{equation}

\item[\rm (3)]As $z \to \infty$ and $\pm \Im z>0$, we have
\begin{equation}\label{eq:asyX}
X(z)=
\sqrt{\frac{2\pi}{3}}i e^{\frac{\rho^2}{6}} \Psi_0
\left(I+ \frac{\mathsf{X}_1}{z} +\mathcal O(z^{-2}) \right)\diag \left(z^{-\frac13},1,z^{\frac13} \right)L_{\pm} e^{\Theta(z)},
\end{equation}
where $\Psi_0$, $L_{\pm}$ and $\Theta(z)$ are given in \eqref{asyPsi:coeff}, \eqref{def:Lpm} and \eqref{def:Theta}, respectively, and
\begin{equation} \label{X1-formula}
	\mathsf{X}_1 = \Psi_1 + \Psi_0^{-1} \mathsf{Y}_1 \Psi_0
\end{equation}
with $\Psi_1$ and $\mathsf{Y}_1$ given in \eqref{asyPsi:coeff} and \eqref{eq:Y-infty}.

\item[\rm (4)] As $z \to \pm s$, we have $X(z) = \mathcal O(\ln(z \mp s))$.
\end{enumerate}
\end{proposition}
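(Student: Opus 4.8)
The plan is to exploit the fact that all the jump matrices $J_\Psi$ in \eqref{def:JPsi} are \emph{constant}. Since every $p_j$ in \eqref{def:pj} is an entire function of $z$, each of the six matrices on the right-hand side of \eqref{def:soltopsi} extends to an entire $3\times3$ matrix function; write $\Psi^{(k)}$ for the one attached to the region $\Theta_k$, so that $\Psi|_{\Theta_k}=\Psi^{(k)}$ and, in particular, $\widetilde\Psi=\Psi^{(1)}$. Taking boundary values on the ray $\Sigma_k$ and using that two entire functions agreeing on a ray agree everywhere, the jump relation \eqref{jumps:M} upgrades to $\Psi^{(k)}(z)=\Psi^{(k-1)}(z)\,C_k$ for all $z\in\C$, where $C_k$ is the constant value of $J_\Psi(\cdot)$ on $\Sigma_k$ (indices modulo $6$). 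A short manipulation of the contours in \eqref{def:pj} moreover gives the identities $p_1-p_0=p_2$ and $p_4-p_0=p_3$, so that, with $T$ denoting the upper-triangular matrix occurring in \eqref{eq:YtoX},
\begin{equation*}
\widetilde\Psi(z)\,T=\begin{pmatrix}p_0&p_2&p_3\\p_0'&p_2'&p_3'\\p_0''&p_2''&p_3''\end{pmatrix}=\Psi^{(4)}(z).
\end{equation*}
Comparing with \eqref{eq:YtoX} and Figure \ref{fig:X}, one checks that on the regions $\texttt{I},\texttt{II},\texttt{III},\texttt{IV},\texttt{V},\texttt{VI}$ the function $X$ equals $Y\Psi^{(0)},Y\Psi^{(1)},Y\Psi^{(2)},Y\Psi^{(3)},Y\Psi^{(4)},Y\Psi^{(5)}$, respectively.

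Granting this, item (1) is immediate: $Y$ is analytic off $[-s,s]$ and each $\Psi^{(k)}$ is entire, so $X$ is analytic in the interior of every region, and the local representation $X=Y\Psi^{(k)}$ shows $X$ has no jump along the portions of the old rays $\Sigma_1,\Sigma_2,\Sigma_4,\Sigma_5$ that now lie inside a region. For item (2) one notes that $\bigcup_j\Sigma_j^{(s)}$ meets $[-s,s]$ only at $\pm s$, hence $Y$ is continuous across each $\Sigma_j^{(s)}$; writing $X_\pm=Y\Psi^{(k_\pm)}$ on the two sides and invoking $\Psi^{(k)}=\Psi^{(k-1)}C_k$ gives $J_X(z)=C_j$ for $z\in\Sigma_j^{(s)}$, which is precisely \eqref{def:JX}. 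The only remaining contour is $(-s,s)$, where $X_+=Y_+\widetilde\Psi$ (region $\texttt{II}$) and $X_-=Y_-\widetilde\Psi\,T$ (region $\texttt{V}$); since a direct computation from \eqref{def:fh} gives $I-2\pi i\,\vec f(x)\vec h^t(x)=\widetilde\Psi(x)\,T\,\widetilde\Psi^{-1}(x)$, inserting the jump \eqref{eq:Y-jump} for $Y$ yields $X_+=X_-$ on $(-s,s)$. This establishes items (1) and (2).

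For item (3), work in a region where $X=Y\Psi^{(k)}$, say $\texttt{I}$, substitute \eqref{eq:Y-infty} and \eqref{eq:asyPsi}, and use
\begin{equation*}
\Bigl(I+\tfrac{\mathsf Y_1}{z}+\mathcal{O}(z^{-2})\Bigr)\Psi_0\Bigl(I+\tfrac{\Psi_1}{z}+\mathcal{O}(z^{-2})\Bigr)=\Psi_0\Bigl(I+\tfrac{\Psi_1+\Psi_0^{-1}\mathsf Y_1\Psi_0}{z}+\mathcal{O}(z^{-2})\Bigr)
\end{equation*}
to read off \eqref{eq:asyX} together with \eqref{X1-formula}. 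The same computation works in the other five regions once one knows that each relevant $\Psi^{(k)}$ has, in the sector occupied by the corresponding enlarged region, the asymptotic form \eqref{eq:asyPsi} (with $L_+$ in the upper half-plane and $L_-$ in the lower); this follows from the expansions \eqref{eq:asyp0}--\eqref{eq:asyp4} of $p_0,p_1,p_4$, whose sectors of validity contain the enlarged regions, together with the identity $\widetilde\Psi\,T=\Psi^{(4)}$ in region $\texttt{V}$, which is exactly why the factor $T$ in \eqref{eq:YtoX} restores the form \eqref{eq:asyPsi} with the $L_-$ appropriate to the lower half-plane. Item (4) is immediate: near $z=\pm s$ the factors $\Psi$ and $\widetilde\Psi$ in \eqref{eq:YtoX} are analytic, hence bounded, while $Y(z)=\mathcal{O}(\ln(z\mp s))$, so $X(z)=\mathcal{O}(\ln(z\mp s))$.

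The genuinely new input beyond RH problem \ref{rhp: Pearcey} is the pair of contour identities $p_1-p_0=p_2$ and $p_4-p_0=p_3$: these are what make the jump of $Y$ across $(-s,s)$ cancel under the substitution \eqref{eq:YtoX}, and everything else is bookkeeping. The step I expect to require the most care is pinning down the region/orientation dictionary so that the six constant jumps come out exactly as in \eqref{def:JX}, together with verifying that \eqref{eq:asyX} holds with error $\mathcal{O}(z^{-2})$ \emph{uniformly} up to the rays $\Sigma_j^{(s)}$ — the latter being inherited from the analogous uniformity already built into \eqref{eq:asyPsi} via the exponential decay of the off-diagonal Stokes entries along those rays.
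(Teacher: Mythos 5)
Your proof is correct and essentially matches the paper's, whose explicit content is just the cancellation of the jump across $(-s,s)$ via the identity $I-2\pi i\,\vec f(x)\vec h^{t}(x)=\widetilde\Psi(x)\,T\,\widetilde\Psi^{-1}(x)$ together with the jump of $Y$. Your reformulation through the entire extensions $\Psi^{(k)}$ and the contour identities $p_1-p_0=p_2$, $p_4-p_0=p_3$ (hence $\widetilde\Psi\,T=\Psi^{(4)}$) is a clean way of filling in the paper's remark that the remaining items ``follow directly'' from \eqref{eq:YtoX} and RH problem \ref{rhp: Pearcey}.
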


\begin{proof}
We only need to show that $X$ does not have a jump over $(-s,s)$, while the other claims follow directly from \eqref{eq:YtoX} and the RH problem \ref{rhp: Pearcey} for $\Psi$.

By \eqref{def:fh}, we have, for $-s<x<s$,
\begin{equation}
I-2\pi i \vec{f}(x)\vec{h}^t(x)=\widetilde\Psi(x)
\begin{pmatrix}
1 & -1 & -1
\\
0 & 1  & 0
\\
0 & 0 & 1
\end{pmatrix}\widetilde\Psi(x)^{-1}.
\end{equation}
This, together with \eqref{eq:Y-jump} and \eqref{eq:YtoX}, implies that for $x\in(-s,s)$,
\begin{multline}
X_+(x)=Y_+(x)\widetilde \Psi(x)=Y_-(x)(I-2\pi i \vec{f}(x)\vec{h}^t(x))\widetilde \Psi(x)
\\
=Y_-(x)\widetilde \Psi(x)\begin{pmatrix}
1 & -1 & -1
\\
0 & 1  & 0
\\
0 & 0 & 1
\end{pmatrix}=X_-(x),
\end{multline}
as desired.

This completes the proof of Proposition \ref{rhp:X}.
\end{proof}

The connections between the above RH problem and the partial derivatives of $F(s;\rho)$ are revealed in the following proposition.
\begin{proposition}\label{prop:derivativeandX}
With $F$ defined in \eqref{def:Fnotation}, we have
\begin{align}
	\frac{\partial}{\partial \rho} F(s;\rho) &=
\frac{\ud}{\ud \rho} \ln \det\left(I-K_{s,\rho}^\Pe\right)
= -\frac{1}{2} \left[ (\mathsf{X}_1)_{12} + (\mathsf{X}_1)_{23} \right] + \frac{\rho^3}{54},  \label{eq:derivativein-rho-X}
\end{align}
where $\mathsf{X}_1$ is given in \eqref{eq:asyX} and $(M)_{ij}$ stands for the $(i,j)$th entry of a matrix $M$, and
\begin{align}
\frac{\partial}{\partial s} F(s;\rho)&=\frac{\ud}{\ud s} \ln \det\left(I-K_{s,\rho}^{\Pe}\right) \nonumber
\\
&=-\frac{1}{ \pi i} \left[\lim_{z \to -s} \left(\left(X^{-1}(z)X'(z)\right)_{21}+\left(X^{-1}(z)X'(z)\right)_{31}\right)
\right],
\label{eq:derivativeinsX-2}
\end{align}
where the above limit is taken from the region $\texttt{II} \, \cup \texttt{V}$.
\end{proposition}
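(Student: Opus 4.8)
The plan is to establish the two identities in Proposition~\ref{prop:derivativeandX} by converting the trace formulas \eqref{eq:derivatives} and \eqref{eq:lamdaderivative} into statements about the large-$z$ and local behavior of $X$. For the $\rho$-derivative \eqref{eq:derivativein-rho-X}, I would start from \eqref{eq:lamdaderivative}, namely $\partial_\rho F = -\tfrac12\int_{-s}^s (F_1 h_2 + F_2 h_3)\ud v$. Using \eqref{def:FH2}, $\vec F = Y\vec f$ and $\vec H = (Y^t)^{-1}\vec h$, together with the representation \eqref{eq:Yexpli} of $Y$, one sees that $\mathsf{Y}_1 = \int_{-s}^s \vec F(w)\vec h^t(w)\,\ud w$. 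Hence the integrand $F_i(v) h_j(v)$ integrates to the $(i,j)$ entry of $\mathsf{Y}_1$, and $\int_{-s}^s(F_1 h_2 + F_2 h_3)\ud v = (\mathsf{Y}_1)_{12} + (\mathsf{Y}_1)_{23}$. This reduces \eqref{eq:derivativein-rho-X} to relating $(\mathsf{Y}_1)_{12}+(\mathsf{Y}_1)_{23}$ to $(\mathsf{X}_1)_{12}+(\mathsf{X}_1)_{23}$ via the conjugation formula \eqref{X1-formula}, $\mathsf{X}_1 = \Psi_1 + \Psi_0^{-1}\mathsf{Y}_1\Psi_0$. Because $\Psi_0$ is lower-triangular with $1$'s on the diagonal and its only nontrivial entry is in position $(3,1)$, conjugation by $\Psi_0$ affects the $(1,2)$ and $(2,3)$ entries only through that entry, and I expect $(\Psi_0^{-1}\mathsf{Y}_1\Psi_0)_{12} + (\Psi_0^{-1}\mathsf{Y}_1\Psi_0)_{23} = (\mathsf{Y}_1)_{12} + (\mathsf{Y}_1)_{23}$ exactly (the correction terms cancel in the sum). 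Then $(\mathsf{X}_1)_{12}+(\mathsf{X}_1)_{23} = (\Psi_1)_{12}+(\Psi_1)_{23} + (\mathsf{Y}_1)_{12}+(\mathsf{Y}_1)_{23}$; reading off $(\Psi_1)_{12} = \kappa_3(\rho)$ and $(\Psi_1)_{23} = \kappa_3(\rho)+\rho/3$ from \eqref{asyPsi:coeff}, and using $\kappa_3(\rho) = \rho^3/54 - \rho/6$ from \eqref{poly:kappa3}, gives $(\Psi_1)_{12}+(\Psi_1)_{23} = 2\kappa_3(\rho)+\rho/3 = \rho^3/27 - \rho/3 + \rho/3 = \rho^3/27$. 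Hmm—this needs care: comparing with the claimed $+\rho^3/54$ in \eqref{eq:derivativein-rho-X}, one sees the constant should work out to $\tfrac12\cdot(\text{something})$, so I would track the factor of $-\tfrac12$ in front of the integral carefully: $\partial_\rho F = -\tfrac12[(\mathsf{Y}_1)_{12}+(\mathsf{Y}_1)_{23}] = -\tfrac12[(\mathsf{X}_1)_{12}+(\mathsf{X}_1)_{23}] + \tfrac12[(\Psi_1)_{12}+(\Psi_1)_{23}] = -\tfrac12[(\mathsf{X}_1)_{12}+(\mathsf{X}_1)_{23}] + \tfrac{\rho^3}{54}$, which is exactly \eqref{eq:derivativein-rho-X}. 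The one subtlety to verify is the vanishing of the $\Psi_0$-conjugation correction in the relevant entry-sum, which is a short direct computation with the explicit $3\times 3$ matrix $\Psi_0$.

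For the $s$-derivative \eqref{eq:derivativeinsX-2}, I would begin from \eqref{eq:derivatives}, $\partial_s F = -R(s,s) - R(-s,-s)$, where by \eqref{eq:resolventexpli} $R(u,v) = \vec F^t(u)\vec H(v)/(u-v)$, so that the diagonal values $R(\pm s,\pm s)$ require the $L'$Hôpital-type limit $R(u,u) = \vec F^t(u)\vec H'(u)$ (using $\vec F^t \vec H \equiv 0$ on the diagonal, which follows since $\vec f^t\vec h = 0$ pointwise—this is the integrability/rank condition that makes the kernel's numerator vanish on the diagonal, or more precisely one uses $(u-v)R(u,v)\to 0$). The next step is to express $\vec F$ and $\vec H$ near $\pm s$ through $X$. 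From \eqref{def:FH2} and the definition \eqref{eq:YtoX}, in the regions $\texttt{II}\cup\texttt{V}$ one has $X = Y\widetilde\Psi$ (up to the constant triangular factor in $\texttt{V}$, which does not affect the first column), so $\vec F = Y\vec f = Y\widetilde\Psi\,\vec e_1 = X\vec e_1$ in $\texttt{II}$, i.e. $\vec F$ is the first column of $X$ there (and similarly in $\texttt{V}$ since that constant factor fixes $\vec e_1$). Dually, $\vec H = (Y^t)^{-1}\vec h = \tfrac{1}{2\pi i}(Y^t)^{-1}\widetilde\Psi^{-t}\vec e_{23}$ where $\vec e_{23} = (0,1,1)^t$, and $(X^t)^{-1} = (Y^t)^{-1}\widetilde\Psi^{-t}$ in $\texttt{II}$, giving $\vec H = \tfrac{1}{2\pi i}(X^t)^{-1}\vec e_{23}$ (again the constant factor in $\texttt{V}$ needs to be checked but its transpose-inverse fixes the combination $(0,1,1)^t$ since the triangular matrix has that as a left-null-type relation). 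Then $R(u,u) = \vec F^t\vec H' = \tfrac{1}{2\pi i}(X\vec e_1)^t \big((X^t)^{-1}\big)'\vec e_{23}$, and using $\big((X^t)^{-1}\big)' = -(X^t)^{-1}(X^t)'(X^t)^{-1}$ together with $(X\vec e_1)^t (X^t)^{-1} = \vec e_1^t$ one gets $R(u,u) = -\tfrac{1}{2\pi i}\vec e_1^t (X')^t (X^t)^{-1}\vec e_{23} = -\tfrac{1}{2\pi i}\,\vec e_{23}^{\,t}\,(X^{-1}X')\,\vec e_1$, i.e. $R(u,u) = -\tfrac{1}{2\pi i}\big[(X^{-1}X')_{21}+(X^{-1}X')_{31}\big]$ evaluated at $u$. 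Taking the limit $u\to -s$ from $\texttt{II}\cup\texttt{V}$ and noting that by the constant-jump property of $X$ the analogous limit at $u\to +s$ gives the same value $-R(-s,-s)$ as $-R(s,s)$ (by the symmetry of the Pearcey kernel $K^\Pe(x,y;\rho) = K^\Pe(-x,-y;\rho)$, so $R(s,s) = R(-s,-s)$), we get $\partial_s F = -2R(-s,-s) = \tfrac{1}{\pi i}\big[(X^{-1}X')_{21}+(X^{-1}X')_{31}\big]\big|_{z\to -s}$. Comparing signs with \eqref{eq:derivativeinsX-2}, I would recheck the orientation of the $-$ sign; the claimed formula has an overall $-\tfrac1{\pi i}$, which should emerge once the sign in $R(u,u) = +\vec F^t\vec H'$ versus the expansion of $R$ near the diagonal is pinned down correctly (there is a standard sign depending on whether one writes $R = (I-K)^{-1} - I$ or $(I-K)^{-1}K$).

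The main obstacle I anticipate is the careful bookkeeping near $z = \pm s$: the functions $\vec f, \vec h$ (hence $\vec F, \vec H$) and $X$ all have $\log$-type singularities at $\pm s$ by item~(4) of the RH problems, so "$R(u,u)$" and "$\lim_{z\to -s}$" must be understood as limits of the smooth resolvent kernel from inside $(-s,s)$ (equivalently from $\texttt{II}\cup\texttt{V}$ on the appropriate side), and one must justify that $\vec F^t(u)\vec H(u) = 0$ and that $R(u,v) = \vec F^t(u)\vec H'(u) + o(1)$ as $v\to u$ remains valid up to the endpoints, using the explicit integrable structure rather than naive Taylor expansion. A second, more routine, obstacle is verifying that the extra constant triangular factor appearing in the definition of $X$ on region $\texttt{V}$ in \eqref{eq:YtoX} does not disturb either the first column ($\vec F$) or the $(0,1,1)^t$-combination governing $\vec H$; this is a direct check that $\begin{pmatrix}1&-1&-1\\0&1&0\\0&0&1\end{pmatrix}\vec e_1 = \vec e_1$ and that its inverse-transpose fixes $(0,1,1)^t$, which holds because the matrix is unipotent upper-triangular with the special structure that its first column is $\vec e_1$. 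With these two points handled, the rest is linear algebra with the explicitly given $\Psi_0, \Psi_1$ and the asymptotic normalizations, plus invocation of the identities $\mathsf{Y}_1 = \int \vec F\vec h^t$ and \eqref{X1-formula} already recorded in the excerpt.
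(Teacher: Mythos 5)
Your treatment of the $\rho$-derivative \eqref{eq:derivativein-rho-X} is essentially the paper's argument: form $\mathsf{Y}_1=\int_{-s}^s\vec F\vec h^t\,\ud v$, read off $(\mathsf{Y}_1)_{12}+(\mathsf{Y}_1)_{23}$, pass to $\mathsf{X}_1$ via \eqref{X1-formula}, and observe that conjugation by the unipotent $\Psi_0$ (whose only nonzero off-diagonal entry is in position $(3,1)$) leaves the $(1,2)$ and $(2,3)$ entries unchanged. In fact each of those two entries is individually fixed, so nothing needs to ``cancel in the sum''; otherwise your bookkeeping and the final constant $\rho^3/54$ are exactly the paper's.

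For the $s$-derivative \eqref{eq:derivativeinsX-2}, two points. The small one is a sign in the diagonal limit: L'H\^opital gives $R(u,u)=\lim_{v\to u}\vec F^t(u)\vec H(v)/(u-v)=-\vec F^t(u)\vec H'(u)$, not $+\vec F^t\vec H'$; carrying this through yields $R(u,u)=+\tfrac{1}{2\pi i}\bigl[(X^{-1}X')_{21}+(X^{-1}X')_{31}\bigr]$, which with $\partial_sF=-R(s,s)-R(-s,-s)$ lands on the paper's intermediate identity with the overall $-\tfrac{1}{2\pi i}$ prefactor. You flagged this as something to recheck, so it is a fixable slip rather than a conceptual error.

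The substantive gap is the justification of $R(s,s)=R(-s,-s)$. You invoke a reflection symmetry $K^\Pe(x,y;\rho)=K^\Pe(-x,-y;\rho)$, but this is false. Since $p$ and $q$ are even, the numerator of the second line of \eqref{eq: pearcey kernel} is invariant under $(x,y)\mapsto(-x,-y)$ while the denominator $x-y$ changes sign, so in fact $K^\Pe(-x,-y;\rho)=-K^\Pe(x,y;\rho)$ for $x\neq y$ (equivalently, from the integral representation, $K^\Pe(x,y)+K^\Pe(-x,-y)=\delta(x-y)$). One therefore cannot conclude $R(s,s)=R(-s,-s)$ from a naive kernel symmetry. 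The paper instead proves a nontrivial symmetry of the RH solution itself, $X(z)=\widetilde C\,X(-z)\,\Lambda$ with $\Lambda$ as in \eqref{eq: MatLam}, by checking that $\widetilde X(z):=X(-z)\Lambda$ satisfies the same RH problem as $X$ (same jumps, removable singularities at $\pm s$, matching asymptotics) and applying Liouville's theorem. Taking derivatives, $X'(z)=-\widetilde C X'(-z)\Lambda$, and the explicit entry relations $(\Lambda M\Lambda)_{21}=-m_{31}$, $(\Lambda M\Lambda)_{31}=-m_{21}$ then give equality of the limits at $z\to s$ and $z\to-s$ of $(X^{-1}X')_{21}+(X^{-1}X')_{31}$. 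This RH-level symmetry is the key ingredient your proof is missing, and it cannot be replaced by the kernel-level symmetry you proposed.
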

\begin{proof}
We begin with the proof of \eqref{eq:derivativein-rho-X}. From \eqref{eq:Y-infty} and \eqref{eq:Yexpli}, it follows that
\begin{equation}
	\mathsf{Y}_1 = \int_{-s}^s \vec{F}(v)\vec{h}^t(v) \ud v  = \int_{-s}^s \begin{pmatrix}
F_1(v) \\
F_2(v) \\
F_3(v)
\end{pmatrix} \begin{pmatrix}
h_1(v) & h_2(v) & h_3(v)
\end{pmatrix} \ud v .
\end{equation}
This, together with \eqref{eq:lamdaderivative}, implies that
\begin{equation}\label{eq:Frho1}
	\frac{\partial}{\partial \rho} F(s;\rho) = - \frac{1}{2} \left[ (\mathsf{Y}_1)_{12} + (\mathsf{Y}_1)_{23} \right].
\end{equation}
By \eqref{X1-formula}, it is readily seen that $\mathsf{Y}_1 = \Psi_0 (\mathsf{X}_1 - \Psi_1 ) \Psi_0^{-1}$. With the aid of the explicit expressions of $\Psi_0$ and $\Psi_1$ in \eqref{asyPsi:coeff}, we then have
$(\mathsf{Y}_1)_{12} + (\mathsf{Y}_1)_{23}=(\mathsf{X}_1)_{12} + (\mathsf{X}_1)_{23}-\frac{\rho^3}{27}$, which gives us \eqref{eq:derivativein-rho-X} in view of \eqref{eq:Frho1}.

We next consider $\frac{\partial}{\partial s} F(s;\rho)$. For $z\in \texttt{II}$, we see from \eqref{def:fh}, \eqref{def:FH2} and \eqref{eq:YtoX} that
\begin{equation}\label{eq:FHinX}
\vec{F}(z)=Y(z)\vec{f}(z)=Y(z)\widetilde \Psi(z)\begin{pmatrix}
1
\\
0
\\
0
\end{pmatrix}=X(z)
\begin{pmatrix}
1
\\
0
\\
0
\end{pmatrix}
\end{equation}
and
\begin{equation}
\vec{H}(z)=Y^{-t}(z)\vec{h}(z)=X^{-t}(z)\widetilde \Psi^{t}(z)\cdot \frac{\widetilde \Psi^{-t}(z)}{2 \pi i}\begin{pmatrix}
0
\\
1
\\
1
\end{pmatrix}=\frac{X^{-t}(z)}{2 \pi i}
\begin{pmatrix}
0
\\
1
\\
1
\end{pmatrix}.
\end{equation}
Combining the above formulas, \eqref{eq:derivatives}, \eqref{eq:resolventexpli} and L'H\^{o}spital's rule then gives us
\begin{align}
\frac{\partial}{\partial s} F(s;\rho)
=&-\frac{1}{2 \pi i} \bigg[\lim_{z \to -s} \left(\left(X^{-1}(z)X'(z)\right)_{21}+\left(X^{-1}(z)X'(z)\right)_{31}\right)
\nonumber
\\
& +
\lim_{z \to s} \left(\left(X^{-1}(z)X'(z)\right)_{21}+\left(X^{-1}(z)X'(z)\right)_{31}\right)\bigg],
\label{eq:derivativeinsX}
\end{align}
where the above limits are taken from the region $\texttt{II}$. Similarly, one can show that \eqref{eq:derivativeinsX} also holds provided the limits are taken from the region $\texttt{V}$.

The expression \eqref{eq:derivativeinsX} can be further simplified via the following symmetric relation of $X(z)$:
\begin{equation} \label{eq: Xz+-}
  X(z) = \widetilde {C} X(-z) \Lam,
\end{equation}
where
\begin{equation}\label{def:tildeC}
  \widetilde {C}= \Psi_0 \begin{pmatrix}
    -1 & 0 & 0 \\ 0 & 1 & 0 \\ 0 & 0 & -1
  \end{pmatrix} \Psi_0^{-1}
\end{equation}
is a nonsingular matrix with $\Psi_0$ given in \eqref{asyPsi:coeff},
and
\begin{equation} \label{eq: MatLam}
	 \Lam =
\begin{pmatrix}
-1 & 0 & 0
\\
0 & 0 & 1
\\
0 & 1 & 0
\end{pmatrix}.
\end{equation}
To see \eqref{eq: Xz+-}, let us consider the function
\begin{equation}\label{def:tildeX}
  \widetilde X(z):=X(-z) \Lam.
\end{equation}
It is straightforward to check that $\widetilde X(z)$ satisfies the same jump condition as $X(z)$ shown in \eqref{eq:X-jump} and \eqref{def:JX}. This means $X(z)\widetilde X^{-1}(z)$ is analytic in the complex plane with possible isolated singular points located at $z = \pm s$. Since $ X(z) = \mathcal O(\ln(z \mp s)) $  as $z \to \pm s$, we conclude that the possible singular points $\pm s$ are removable. In view of the asymptotics of $X(z)$ given in \eqref{eq:asyX}, we further obtain from \eqref{def:tildeX} and a bit more cumbersome calculation that
\begin{equation*}
  X(z)\widetilde X^{-1}(z) = \widetilde{C} + \Boh(z^{-1}), \qquad \textrm{as } z \to \infty,
\end{equation*}
with $\widetilde{C}$ given in \eqref{def:tildeC}. An appeal to Liouville's theorem then leads us to \eqref{eq: Xz+-}.

As a consequence of \eqref{eq: Xz+-}, we have
$$X'(z) = -\widetilde {C} X'(-z) \Lam. $$
This, together with \eqref{eq: Xz+-} and the fact that $\Lam^{-1}=\Lam$, implies
\begin{eqnarray*}
  \lim_{z \to s} \left( X^{-1}(z)X'(z) \right) =  - \lim_{z \to s} \left( \Lam X^{-1}(-z) X'(-z) \Lam \right)  =  - \lim_{z \to -s} \left( \Lam X^{-1}(z) X'(z) \Lam \right) .
\end{eqnarray*}
To this end, we observe that for an arbitrary $3\times 3$ matrix $ M = (m_{ij})_{i,j=1}^3$,
\begin{equation*}
 \left( \Lam M \Lam  \right)_{21} = - m_{31} \quad \textrm{and} \quad  \left( \Lam M \Lam  \right)_{31} = - m_{21}.
\end{equation*}
A combination of the above two formulas shows that
$$\lim_{z \to -s} \left(X^{-1}(z)X'(z)\right)_{21}+\left(X^{-1}(z)X'(z)\right)_{31}
=
\lim_{z \to s} \left(X^{-1}(z)X'(z)\right)_{21}+\left(X^{-1}(z)X'(z)\right)_{31}.$$
Inserting this formula into \eqref{eq:derivativeinsX} gives us \eqref{eq:derivativeinsX-2}.

This completes the proof of Proposition \ref{prop:derivativeandX}.
\end{proof}

\section{Auxiliary functions} \label{sec:auxifuncs}
\label{sec:auxiliary function}

In this section, we introduce some auxiliary functions and study
their properties. The aim is to construct the so-called
$\lambda$-functions, of which the analytic continuation defines a
meromorphic function on a Riemann surface with a specified sheet
structure. The $\lambda$-functions have desired behavior around each
branch point, and will be crucial in our further asymptotic analysis of the RH problem \ref{rhp:X} for $X$.

Throughout this section, unless specified differently, we shall take
the principal branch for all fractional powers.

\subsection{A three-sheeted Riemann surface and the $w$-functions}
\label{sec:w function}
We introduce a three-sheeted Riemann surface $\mathcal R$ with sheets
\begin{align*}
\mathcal R_1 &= \C \setminus \{(-\infty,-1]\cup[1,+\infty) \}, & \mathcal R_2 &= \C \setminus [1,+\infty), \\
\mathcal R_3 &=\C \setminus (-\infty,-1] .
\end{align*}
We connect the sheets $\mathcal R_j$, $j=1,2,3$, to each other in the usual crosswise manner along the cuts $(-\infty,-1]$ and $[1,+\infty)$. More precisely, $\mathcal R_1$ is connected to $\mathcal R_2$ along the cut $[1,+\infty)$ and $\mathcal R_1$ is connected to $\mathcal R_3$ along the cut $(-\infty,-1]$. We then compactify the resulting surface by adding a common point at $\infty$ to the sheets $\mathcal R_1$ and $\mathcal R_2$, and a common point at $\infty$ to the sheets $\mathcal R_1$ and $\mathcal R_3$. We denote this compact Riemann surface by $\mathcal R$, which has genus zero and is shown in Figure \ref{fig: Riemann surface}.

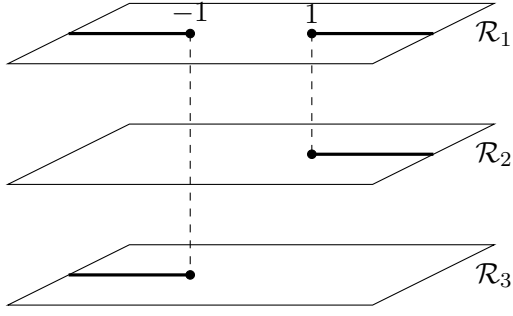
\begin{figure}[h]
\centering
\begin{tikzpicture}[scale=0.8]
\draw (0,0)--(6,0)--(8,1)--(2,1)--cycle
      (0,-2)--(6,-2)--(8,-1)--(2,-1)--cycle
      (0,-4)--(6,-4)--(8,-3)--(2,-3)--cycle;
\draw[very thick] (1,0.5)--(3,0.5) (5,0.5)--(7,0.5)
                  (1,-3.5)--(3,-3.5)
                  (5,-1.5)--(7,-1.5);
\draw (8,0.5) node{$\mathcal R_1$}
      (8,-1.5) node{$\mathcal R_2$}
      (8,-3.5) node{$\mathcal R_3$};
\draw[dashed] (3,0.5)--(3,-3.5);
\draw[dashed] (5,0.5)--(5,-1.5);
\filldraw [black] (3,0.5) circle (2pt) node [above] (q1) {$-1$};
\filldraw [black] (5,0.5) circle (2pt) node [above] (q1) {$1$};
\filldraw [black] (5,-1.5) circle (2pt) node [above] (q1) {};
\filldraw [black] (3,-3.5) circle (2pt) node [above] (q1) {};
\end{tikzpicture}
\caption{Riemann surface $\mathcal R$.}
\label{fig: Riemann surface}
\end{figure}

We intend to find functions $\lambda_j$, $j=1,2,3$, on these sheets, such that each $\lambda_j$ is analytic on $\mathcal R_j$ and admits an analytic continuation across the cuts. For this purpose, we start with an elementary function $w(z)$ that is
meromorphic on $\mathcal R$, which satisfies the following algebraic equation
\begin{equation} \label{eq: alg eq w}
w(z)^3-3w(z)+2z=0.
\end{equation}
We choose three solutions $w_j(z)$, $j = 1,2,3$, to \eqref{eq: alg eq w} such that they are defined and analytic on $\mathcal R_j$, respectively. Each function $w_j(z)$ maps $\mathcal R_j$ to certain domain in the extended complex $w$-plane $\overline \C$. The correspondences between some points $z\in \mathcal R$ and the points $w \in \overline \C$ are given in Table \ref{Table:z-w-corres}, where $z^{(j)}$ denotes the point $z$ on the closure of the $j$-th sheet $\mathcal R_j$.
\begin{table}[h]
  \centering
  \begin{tabular}{|c||l|l|l|l|l|l|}
   \hline  & & & &  & & \\ [-1em]
   $z \in \mathcal R$ & $-1^{(1)}$ & $\infty^{(1)}$ & $1^{(1)}$ & $0^{(1)}$ & $0^{(2)}$ & $0^{(3)}$ \\ [1ex]
    \hline & & & &  & & \\ [-1em]  $w \in \overline \C$ & $-1$ & $\infty$ & 1 & 0 & $\sqrt{3}$ & $-\sqrt{3}$ \\ [1ex]
   \hline
  \end{tabular}
\medskip
  \caption{Images of some points $z\in \mathcal R$ under the mappings $w_j$, $j=1,2,3$.} \label{Table:z-w-corres}
\end{table}
Due to the sheet structure shown in Figure \ref{fig: Riemann surface}, it follows that $\infty^{(1)} = \infty^{(2)} = \infty^{(3)}$, $-1^{(1)} = -1^{(3)}$ and $1^{(1)} = 1^{(2)}$. Moreover, we actually have the following explicit expressions for $w_j(z)$.

\begin{proposition} \label{prop:w-expression}
The three solutions $w_j(z)$, $j=1,2,3$, to the algebraic equation \eqref{eq: alg eq w} with the mapping properties shown in Table \ref{Table:z-w-corres} are given by
\begin{equation} \label{eq: w-eta}
  w_j(z) = \omega^{j-2}\eta (z)^{\frac{1}{3}} + \omega^{2-j}\eta (z)^{-\frac{1}{3}},
\end{equation}
where
\begin{equation} \label{eq: eta-def}
  \eta(z) = (z^2 -1 )^{\frac{1}{2}} - z, \qquad z \in \mathbb{C} \setminus \{ (-\infty, -1] \cup [1,+\infty)\},
\end{equation}
with
\begin{equation} \label{eq: eta-arg}
  \arg \eta(z) \in (0, \pi).
\end{equation}
\end{proposition}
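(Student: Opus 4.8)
The plan is to verify directly that the functions $w_j$ defined by \eqref{eq: w-eta} solve the cubic \eqref{eq: alg eq w}, and then to pin down the correct branch and sheet assignments using the normalization in Table \ref{Table:z-w-corres}. The starting observation is that \eqref{eq: alg eq w} is a depressed cubic in $w$ of the classical form $w^3 + pw + q = 0$ with $p = -3$ and $q = 2z$, so Cardano's formula applies. Setting $w = u + v$ with the resolvent constraint $3uv = -p = 3$, i.e. $uv = 1$, one gets $u^3 + v^3 = -q = -2z$ together with $u^3 v^3 = 1$; hence $u^3$ and $v^3$ are the two roots of $X^2 + 2zX + 1 = 0$, namely $X = -z \pm (z^2-1)^{1/2}$. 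With $\eta(z) = (z^2-1)^{1/2} - z$ as in \eqref{eq: eta-def} one has $u^3 = \eta(z)$ and $v^3 = 1/\eta(z)$ (note $\eta(z)\cdot(-z-(z^2-1)^{1/2}) = z^2 - (z^2-1) = 1$, so the two roots are genuine reciprocals). Choosing $u = \omega^{j-2}\eta(z)^{1/3}$ and, forced by $uv = 1$, $v = \omega^{2-j}\eta(z)^{-1/3}$, we land exactly on \eqref{eq: w-eta}. Substituting back, $w_j^3 - 3w_j = (u+v)^3 - 3(u+v) = u^3 + v^3 + 3uv(u+v) - 3(u+v) = u^3 + v^3 = -2z$, which is \eqref{eq: alg eq w}; this computation is purely formal and independent of branch choices.

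Next I would fix the branch of $\eta$. Formula \eqref{eq: eta-def} requires a branch of $(z^2-1)^{1/2}$ on $\mathbb{C}\setminus\{(-\infty,-1]\cup[1,+\infty)\}$; the prescription \eqref{eq: eta-arg} that $\arg\eta(z)\in(0,\pi)$ singles it out. I would record the standard consequences: $\eta$ is analytic and nonvanishing on $\mathbb{C}\setminus\{(-\infty,-1]\cup[1,+\infty)\}$, $|\eta(z)|<1$ there (so that $\eta(z)^{1/3}$, taken with the principal branch, is well defined), and $\eta(z)\to 0$ as $z\to\infty$ with $\eta(z)\sim -1/(2z)$. From $\arg\eta\in(0,\pi)$ one reads off $\eta(0) = i$ (since $(0-1)^{1/2}$ with argument in $(0,\pi)$ equals $i$), hence $\eta(0)^{1/3} = e^{i\pi/6}$. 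Then $w_j(0) = \omega^{j-2}e^{i\pi/6} + \omega^{2-j}e^{-i\pi/6} = 2\cos\!\big(\tfrac{\pi}{6} + \tfrac{2\pi}{3}(j-2)\big)$, which gives $w_1(0) = 2\cos(-\tfrac{\pi}{2}) = 0$, $w_2(0) = 2\cos(\tfrac{\pi}{6}) = \sqrt{3}$, $w_3(0) = 2\cos(\tfrac{5\pi}{6}) = -\sqrt{3}$ — matching the three entries of Table \ref{Table:z-w-corres} under $0^{(j)}$. Similarly, as $z\to\infty$ one has $\eta(z)^{1/3}\to 0$ and $\eta(z)^{-1/3}\to\infty$, so $w_2,w_3\to\infty$ while $w_1 = \eta^{1/3} + \eta^{-1/3} = (z^2-1)^{1/2}-z$ plus $1/((z^2-1)^{1/2}-z)$; using $\eta^{-1/3}$'s leading behavior one checks $w_1(z)\to\infty$ as well but at the rate dictated by the sheet $\mathcal R_1$, and the remaining table entries $-1^{(1)}\mapsto -1$, $1^{(1)}\mapsto 1$ follow from $\eta(\pm 1) = \mp 1$.

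Finally I would confirm that each $w_j$ is analytic on the claimed sheet $\mathcal R_j$ and continues analytically across the appropriate cut, so that the triple $(w_1,w_2,w_3)$ genuinely assembles into the single meromorphic function $w$ on $\mathcal R$. The key point is that across $[1,+\infty)$ the branch of $(z^2-1)^{1/2}$ flips sign, hence $\eta$ is replaced by $1/\eta$, which swaps $\eta^{1/3}\leftrightarrow\eta^{-1/3}$ up to a cube root of unity; tracking the arguments shows this interchanges $w_1$ and $w_2$ while fixing $w_3$, consistent with $\mathcal R_1$ being glued to $\mathcal R_2$ along $[1,+\infty)$. Likewise across $(-\infty,-1]$ the boundary values interchange $w_1$ and $w_3$ and fix $w_2$. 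I expect the bookkeeping of the branch of $\eta^{1/3}$ across the cuts — keeping careful track of which cube root of unity is picked up when $\arg\eta$ crosses $0$ or $\pi$ — to be the only delicate part; everything else is a direct substitution into Cardano's formula plus evaluation at the three marked points. The genus-zero claim and the compactification at $\infty$ are then immediate since $w$ is a rational parametrization: inverting \eqref{eq: alg eq w} gives $z = \tfrac12(3w - w^3)$, a rational map $\overline{\mathbb{C}}_w \to \mathcal R$, so $\mathcal R \cong \overline{\mathbb{C}}$.
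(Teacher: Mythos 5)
Your argument is essentially the paper's: verify that $\eta$ satisfies the quadratic $\eta^2 + 2z\eta + 1 = 0$, substitute \eqref{eq: w-eta} into the cubic to get $u^3 + v^3 + 2z = \eta + \eta^{-1} + 2z = 0$, then read off $\eta(0) = i$ from $\arg\eta\in(0,\pi)$ and evaluate $w_j(0)$ to match Table \ref{Table:z-w-corres}. The Cardano framing is a nice motivation but the computation is identical. Two small slips worth flagging: $|\eta(z)|<1$ does \emph{not} hold on all of $\mathbb{C}\setminus\{(-\infty,-1]\cup[1,+\infty)\}$ (e.g.\ $\eta(z)\sim -2z$ as $z\to\infty$ in the lower half-plane, so $|\eta|\to\infty$ there, and $|\eta(0)|=1$), and likewise $\eta(z)\sim -1/(2z)$ holds only for $\Im z>0$; fortunately neither fact is needed — what makes the principal branch of $\eta^{1/3}$ well defined is exactly $\arg\eta\in(0,\pi)$, i.e.\ that $\eta$ avoids $(-\infty,0]$, not any bound on $|\eta|$.
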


\begin{proof}
Clearly, $\eta(z)\neq 0$ and satisfies the quadratic equation
  \begin{equation}\label{eq:etaeq}
    \eta(z)^2 + 2z \eta(z) + 1 = 0.
  \end{equation}
Hence, with $w_j(z)$, $j=1,2,3$, defined in \eqref{eq: w-eta}, we have
  \begin{align*}
    w_j(z)^3 -3w_j(z) + 2z & =  \left( \omega^{j-2}\eta (z)^{\frac{1}{3}} + \omega^{2-j}\eta (z)^{-\frac{1}{3}} \right)^3 - 3 \left( \omega^{j-2}\eta (z)^{\frac{1}{3}} + \omega^{2-j}\eta (z)^{-\frac{1}{3}} \right) + 2z
     \\
     & =  \eta(z) + \frac{1}{\eta(z)} + 2z = 0,
  \end{align*}
as expected, where in the last step we have made use of \eqref{eq:etaeq}. Furthermore, an elementary analysis shows that $\Im \eta(z) >0$ for $z \in \mathbb{C} \setminus \{ (-\infty, -1] \cup [1,+\infty)\}$. Thus, $\arg \eta(z) \in (0, \pi)$ and $\eta(0) = e^{\frac{\pi i}{2}}=i $. This, together with
\eqref{eq: w-eta}, implies that $w_1(0) = 0$,  $ w_2(0) = \sqrt{3}$, and $w_3(0) = -\sqrt{3}$. The other relations in Table \ref{Table:z-w-corres}
can then be verified similarly, and we omit the details here.

This completes the proof of Proposition \ref{prop:w-expression}.
\end{proof}

As a consequence of Proposition \ref{prop:w-expression}, we have the following properties of $w_j(z)$.
\begin{proposition} \label{prop:w}
The functions $w_j(z)$, $j=1,2,3$, given in \eqref{eq: w-eta} satisfy the following
properties.
\begin{itemize}
\item[\rm (a)] $w_j(z)$ is analytic on $\mathcal R_j$, $j=1,2,3$, and
\begin{align}
w_{1,\pm}(x) &= w_{3,\mp}(x), \qquad x \in (-\infty,-1), \label{eq: w1pm-w3pm}
\\
w_{1,\pm}(x) &= w_{2,\mp}(x), \qquad x \in (1,\infty). \label{eq: w1pm-w2pm}
\end{align}
Here, we orient $(-\infty,-1)$ and $(1,\infty)$ from the left to the right. Hence, the function
$\cup_{j=1}^3 \mathcal R_j \to \C: \mathcal R_j \ni z \mapsto
w_j(z)$ has an analytic continuation to a meromorphic function $w:
\mathcal R \to \overline \C$. This function is a bijection.

\item[\rm (b)] $w$ satisfies the symmetry properties
\begin{align}
w_j(\overline z) &= \overline{w_j(z)},   \qquad~~\,~~\,  z\in\mathcal R_j, \quad j=1,2,3, \label{eq: wz-wzbar} \\
w_1(-z) & = - w_1(z) ,   \qquad~~\,  z\in  \C\setminus \{ (-\infty,-1] \cup [1,+\infty) \} , \label{eq: w1z+-} \\
w_2(-z)&=-w_3(z),   \qquad~~\,  z \in \C\setminus (-\infty,-1]. \label{eq: w2-w3}
\end{align}

\item[\rm (c)] As $z \to \infty$ and $-\pi<\arg z <\pi$, we have
\begin{equation} \label{eq: w1-large-z}
w_1(z) =\left\{
          \begin{array}{ll}
            -2^{\frac13}\omega^2z^{\frac13}-\frac{\omega}{2^{\frac13}} z^{-\frac13}+\frac{\omega^2}{6\cdot 2^{\frac23}}z^{-\frac53}-\frac{\omega}{12\cdot 2^{\frac13}}z^{-\frac73}+\frac{\omega^2}{18\cdot 4^{\frac13}}z^{-\frac{11}{3}}+\Boh(z^{-\frac{13}{3}}), & \hbox{$\Im z>0$,}
\\
          [2mm] -2^{\frac13}\omega z^{\frac13}-\frac{\omega^2}{2^{\frac13}} z^{-\frac13}+\frac{\omega}{6\cdot 2^{\frac23}}z^{-\frac53}-\frac{\omega^2}{12\cdot 2^{\frac13}}z^{-\frac73}+\frac{\omega}{18\cdot 4^{\frac13}}z^{-\frac{11}{3}}+\Boh(z^{-\frac{13}{3}}), & \hbox{$\Im z<0$,}
          \end{array}
        \right.
\end{equation}
and
\begin{multline} \label{eq: w3-large-z}
w_3(z) = -2^{\frac13} z^{\frac13}-\frac{1}{2^{\frac13}} z^{-\frac13}+\frac{1}{6\cdot 2^{\frac23}}z^{-\frac53}-\frac{1}{12\cdot 2^{\frac13}}z^{-\frac73}
\\
+\frac{1}{18\cdot 4^{\frac13}}z^{-\frac{11}{3}}+\Boh(z^{-\frac{13}{3}}), \quad z\in\C \setminus (-\infty,-1].
\end{multline}

\item[\rm (d)] As $z \to -1$ and $-\pi<\arg(z+1) <\pi$, we have
\begin{equation} \label{eq: w1-n1-asy}
w_1(z)=-1+\sqrt{\frac23}(z+1)^{\frac12}+\frac{z+1}{9}+\frac{5}{54\cdot \sqrt{6}}(z+1)^{\frac32}+\Boh(z+1)^2,
\end{equation}
and
\begin{equation} \label{eq: w3-n1-asy}
w_3(z)=-1-\sqrt{\frac23}(z+1)^{\frac12}+\frac{z+1}{9}-\frac{5}{54\cdot \sqrt{6}}(z+1)^{\frac32}+\Boh(z+1)^2.
\end{equation}

\item[\rm (e)] As $z \to 1$ and $-\pi<\arg(z-1) <\pi$, we have
\begin{equation} \label{eq: w1-p1-asy}
w_{1}(z)= \left\{
            \begin{array}{ll}
              1+i\sqrt{\frac23}(z-1)^{\frac12}+\frac{z-1}{9}-i\frac{5}{54\cdot \sqrt{6}}(z-1)^{\frac32}+\Boh(z-1)^2, & \hbox{$\Im z>0$,} \\
            [2mm]  1-i\sqrt{\frac23}(z-1)^{\frac12}+\frac{z-1}{9}+i\frac{5}{54\cdot \sqrt{6}}(z-1)^{\frac32}+\Boh(z-1)^2, & \hbox{$\Im z<0$.}
            \end{array}
          \right.
\end{equation}
and
\begin{equation} \label{eq: w2-p1-asy}
w_{2}(z)= \left\{
            \begin{array}{ll}
              1-i\sqrt{\frac23}(z-1)^{\frac12}+\frac{z-1}{9}+i\frac{5}{54\cdot \sqrt{6}}(z-1)^{\frac32}+\Boh(z-1)^2, & \hbox{$\Im z>0$,} \\
            [2mm]  1+i\sqrt{\frac23}(z-1)^{\frac12}+\frac{z-1}{9}-i\frac{5}{54\cdot \sqrt{6}}(z-1)^{\frac32}+\Boh(z-1)^2, & \hbox{$\Im z<0$.}
            \end{array}
          \right.
\end{equation}

\end{itemize}
\end{proposition}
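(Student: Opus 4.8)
The proof is a direct verification of all assertions from the closed form \eqref{eq: w-eta}, so the plan is to first assemble a small toolkit of properties of $\eta$ and then read off (a)--(e) mechanically. The toolkit consists of: analyticity and nonvanishing of $\eta$ on $D:=\C\setminus\{(-\infty,-1]\cup[1,+\infty)\}$ together with $\Im\eta>0$ there (established already in the proof of Proposition~\ref{prop:w-expression}, and it is exactly what makes $\eta^{1/3}$ and $\eta^{-1/3}$ single-valued analytic on $D$ once \eqref{eq: eta-arg} is imposed); the quadratic relation $\eta^2+2z\eta+1=0$ from \eqref{eq:etaeq}; the boundary relation $\eta_+(x)\eta_-(x)=1$ for $x\in(-\infty,-1)\cup(1,+\infty)$, which follows at once since on these intervals $\eta_\pm(x)$ are the two roots $-x\pm(x^2-1)^{1/2}$ of that quadratic and their product is the constant term; and the reflection identities $\eta(\bar z)=1/\overline{\eta(z)}$ and $\eta(-z)=-1/\eta(z)$, each proved by noting that the right-hand side is analytic on $D$, satisfies the same quadratic, and agrees with $\eta$ at $z=0$ (where $\eta(0)=i$), hence equals $\eta$ by uniqueness. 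Finally, for (c)--(e) one needs the local expansions of $\eta$: a Laurent-type expansion as $z\to\infty$ --- with $\eta(z)\sim-\tfrac1{2z}$ for $\Im z>0$ and $\eta(z)\sim-2z$ for $\Im z<0$, the switch reflecting that $\infty$ is a branch point of $\mathcal R$ --- and Puiseux expansions at $z=\pm1$ of the form $\eta(z)+1=\Boh((z\mp1)^{1/2})$, all obtained by solving $\eta^2+2z\eta+1=0$ locally with the branch pinned by \eqref{eq: eta-arg}.

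With this in hand, part (a) goes as follows. Analyticity of $w_j$ on the open part of $\mathcal R_j$ away from the cuts is immediate. For the analytic continuation of $w_2$ across $(-\infty,-1)$, of $w_3$ across $[1,+\infty)$, and for the gluing formulas \eqref{eq: w1pm-w3pm}--\eqref{eq: w1pm-w2pm}, one substitutes $\eta_-=1/\eta_+$ into \eqref{eq: w-eta} and tracks $\arg\eta_\pm$ along each cut to determine the integer $k\in\{0,1,2\}$ with $\eta_-^{1/3}=\omega^{k}\eta_+^{-1/3}$; combined with the $\omega^{\pm(j-2)}$ prefactors this produces $w_{1,\pm}=w_{3,\mp}$ on $(-\infty,-1)$, $w_{1,\pm}=w_{2,\mp}$ on $(1,+\infty)$, and $w_{2,+}=w_{2,-}$, $w_{3,+}=w_{3,-}$ on the respective interior cuts, so the three branches glue to a single meromorphic $w$ on the surface $\mathcal R$ of Figure~\ref{fig: Riemann surface}. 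Bijectivity of $w:\mathcal R\to\overline\C$ then follows because \eqref{eq: alg eq w} rewrites as $z=(3w-w^3)/2$, a degree-$3$ rational map, so the Riemann surface of the algebraic function defined by \eqref{eq: alg eq w} --- which the gluing has just identified with $\mathcal R$ --- carries $w$ as a global holomorphic coordinate, i.e. $w$ is a biholomorphism onto $\overline\C$.

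Part (b) is pure substitution into \eqref{eq: w-eta}: inserting $\eta(\bar z)=1/\overline{\eta(z)}$ and using \eqref{eq: eta-arg} gives $\eta(\bar z)^{\pm1/3}=\overline{\eta(z)^{\mp1/3}}$, whence $w_j(\bar z)=\overline{w_j(z)}$ after the two powers of $\omega$ are interchanged by the conjugation; inserting $\eta(-z)=-1/\eta(z)$ and tracking arguments gives $\eta(-z)^{\pm1/3}=-\omega^{\mp2}\eta(z)^{\mp1/3}$, and the powers of $\omega$ recombine to $w_1(-z)=-w_1(z)$ and $w_2(-z)=-w_3(z)$. The portions of the domains in \eqref{eq: w1z+-}--\eqref{eq: w2-w3} lying outside $D$ are then handled by analytic continuation across the relevant cut, using part (a). Parts (c), (d), (e) are obtained by plugging the local expansions of $\eta$ listed above into $w_j=\omega^{j-2}\eta^{1/3}+\omega^{2-j}\eta^{-1/3}$ and expanding: at $\infty$ the dominant of the two terms yields the leading $-2^{1/3}(\cdot)z^{1/3}$ and the listed correction series, with the $\Im z\gtrless0$ dichotomy for $w_1$ mirroring the two behaviours of $\eta$; at $z=\pm1$ the half-integer powers $\eta^{\pm1/3}$ combine into the displayed $(z\mp1)^{1/2}$-series.

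I expect the only genuinely delicate part to be the branch bookkeeping: on each cut, and near each of $z=\pm1$ and $z=\infty$, one must determine precisely which cube root (equivalently which power of $\omega$) the quantity $\eta^{\pm1/3}$ acquires once \eqref{eq: eta-arg} is enforced, and confirm that it cancels against the $\omega^{\pm(j-2)}$ prefactor in \eqref{eq: w-eta} to give the stated identities. The way to keep this under control is to express every intermediate quantity through $\eta$ with $\arg\eta\in(0,\pi)$ and to reduce each claim to one of the $\eta$-relations ($\eta_+\eta_-=1$, $\eta(\bar z)=1/\overline{\eta(z)}$, $\eta(-z)=-1/\eta(z)$, and the local expansions) collected in the first step.
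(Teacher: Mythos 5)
Your proof is correct and follows essentially the same route as the paper: a direct verification from the closed form $w_j=\omega^{j-2}\eta^{1/3}+\omega^{2-j}\eta^{-1/3}$, driven by the $\eta$-identities $\eta_+\eta_-=1$ on the cuts, $\eta(-z)=-1/\eta(z)$, $\eta(\bar z)=1/\overline{\eta(z)}$, and the local expansions of $\eta$ at $\infty$ and at $z=\pm1$ (the only cosmetic difference being that you collect these into a toolkit first, while the paper interleaves them with the individual items, and when invoking "the same quadratic" for $\eta(-z)=-1/\eta(z)$ you should note the relevant equation is $w^2-2zw+1=0$). The one place you go beyond the paper is the bijectivity of $w:\mathcal R\to\overline\C$, which the paper asserts without proof; your observation that $z=(3w-w^3)/2$ is a degree-three rational map, so that $\mathcal R$ is precisely the Riemann surface of the algebraic function with $w$ as a global coordinate, is a clean way to justify it.
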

\begin{proof}
To show \eqref{eq: w1pm-w3pm}, we see from \eqref{eq: w-eta} that, if $x<-1$,
\begin{align*}
w_{1,+}(x)&=\omega^{-1}\eta_{+}(x)^{\frac13}+\omega\eta_{+}(x)^{-\frac13}=\omega^{-1}(-x-\sqrt{x^2-1})^{\frac13}+\omega(-x-\sqrt{x^2-1})^{-\frac13}
\nonumber
\\
&=\omega^{-1}\eta_{-}(x)^{-\frac13}+\omega\eta_{-}(x)^{\frac13}=w_{3,-}(x).
\end{align*}
Similarly, it is easy to check $w_{1,-}(x)=w_{3,+}(x)$ for $x<-1$ and \eqref{eq: w1pm-w2pm}.

While \eqref{eq: wz-wzbar} follows directly from  \eqref{eq: w-eta}, the proof of  \eqref{eq: w1z+-} relies on the fact that
\begin{equation*}
  \eta(z) \eta(-z) = -1.
\end{equation*}
Hence, by \eqref{eq: w-eta}, it follows that
\begin{align*}
  w_1(-z) &= \omega^{-1} \eta(-z)^{\frac{1}{3}} + \omega  \eta(-z)^{-\frac{1}{3}} = \omega^{-1} e^{\frac{1}{3} \pi i}  \eta(z)^{-\frac{1}{3}} + \omega e^{-\frac{1}{3} \pi i}   \eta(z)^{\frac{1}{3}} \\
  & = - \omega^{-1}  \eta(z) ^{\frac{1}{3}} -\omega \eta(z) ^{-\frac{1}{3}}  = -w_1(z),
\end{align*}
which is \eqref{eq: w1z+-}. The relation \eqref{eq: w2-w3} follows in a similar manner.

To obtain the asymptotics of $w_j(z)$, $j=1,3$, as $z \to \infty$, we observe from the definition of $\eta(z)$ in \eqref{eq: eta-def} that
\begin{eqnarray*}
  \eta(z) = \begin{cases}
    \ds -\frac{1}{2z} - \frac{1}{8z^3} - \frac{1}{16z^5} + \Boh(z^{-7}), & \Im z >0, \vspace{1mm} \\
    \ds -2z +\frac{1}{2z} + \frac{1}{8z^3} + \frac{1}{16z^5} + \Boh(z^{-7}), & \Im z <0.
  \end{cases}
\end{eqnarray*}
Inserting the above formula into \eqref{eq: w-eta}, it is readily seen that, if $\Im z>0$ and $z \to \infty$,
\begin{align*}
  w_1(z) & = \omega^{-1} \eta(z) ^{\frac{1}{3}} + \omega \eta(z)^{-\frac{1}{3}} \\
  & =  \frac{e^{-\frac{\pi i}{3}}}{\sqrt[3]{2} z^{\frac{1}{3}}} \left(1 +   \frac{1}{4z^2} + \frac{1}{8z^4} + \Boh(z^{-6}) \right)^{\frac{1}{3}} + e^{\frac{\pi i}{3}} \sqrt[3]{2} z^{\frac{1}{3}} \left(1 +   \frac{1}{4z^2} + \frac{1}{8z^4} + \Boh(z^{-6}) \right)^{-\frac{1}{3}}
  \\
  &=  -2^{\frac13}\omega^2z^{\frac13}-\frac{\omega}{2^{\frac13}} z^{-\frac13}+\frac{\omega^2}{6\cdot 2^{\frac23}}z^{-\frac53}-\frac{\omega}{12\cdot 2^{\frac13}}z^{-\frac73}+\frac{\omega^2}{18\cdot 4^{\frac13}}z^{-\frac{11}{3}}+\Boh(z^{-\frac{13}{3}}),
\end{align*}
which is the first formula in \eqref{eq: w1-large-z}. This, together with the fact that $w_1(-z)=-w_1(z)$, implies the second formula of \eqref{eq: w1-large-z}. The asymptotics of $w_3(z)$ in \eqref{eq: w3-large-z} can be derived through similar computations, we omit the details here.

We next come to the asymptotics of $w_j(z)$, $j=1,3$, as $z \to -1$. Since
$$(z-1)^{\frac12}=\sqrt{2}i-\frac{i}{2\sqrt{2}}(z+1)+\Boh(z+1)^2, \qquad z\in\mathbb{C}\setminus [1,+\infty),\qquad z\to -1,$$
it follows from \eqref{eq: eta-def} that
\begin{align*}
  \eta(z) = (z^2-1)^{\frac12}-z = 1 + i \sqrt{2} (z+1)^{\frac{1}{2}} - (z+1) - i \frac{ (z + 1)^{\frac{3}{2}}}{ 2 \sqrt2}  + \Boh(z+1)^{\frac52}, \qquad z\to -1,
\end{align*}
with $-\pi <\arg(z+1)< \pi$. A combination of the above formula and \eqref{eq: w-eta} then gives us
\begin{align*}
  w_1(z) & = \omega^{-1} \eta(z)^{\frac{1}{3}} + \omega \eta(z)^{-\frac{1}{3}}
  \\
   & =  \omega^{-1} \left(1 + i \sqrt{2} (z+1)^{\frac{1}{2}} - (z+1) - i \frac{ (z + 1)^{\frac{3}{2}}}{ 2 \sqrt2}
    + \Boh(z+1)^{\frac52} \right)^{\frac{1}{3}}
    \\
    & \quad + \omega \left(1 + i \sqrt{2}  (z+1)^{\frac{1}{2}} - (z+1) - i \frac{ (z + 1)^{\frac{3}{2}}}{ 2 \sqrt2}  + \Boh(z+1)^{\frac52} \right)^{-\frac{1}{3}}
    \\
    &=-1+\sqrt{\frac23}(z+1)^{\frac12}+\frac{z+1}{9}+\frac{5}{54\cdot \sqrt{6}}(z+1)^{\frac32}+\Boh(z+1)^2,
\end{align*}
as shown in \eqref{eq: w1-n1-asy}. The asymptotics of $w_3$ in \eqref{eq: w1-n1-asy} can be proved in a similar manner, we omit the details here.

Finally, we note that, as $z \to 1$,
\begin{eqnarray*}
  \eta(z) = \left\{
              \begin{array}{ll}
                -1  +  \sqrt{2}(z-1)^{\frac{1}{2}} - (z-1) + \frac{(z - 1)^{\frac{3}{2}}}{ 2 \sqrt2}  + \Boh(z-1)^{\frac52}, & \hbox{$\Im z>0$,} \\
                 -1  -  \sqrt{2}(z-1)^{\frac{1}{2}} - (z-1) - \frac{(z - 1)^{\frac{3}{2}}}{ 2 \sqrt2}  + \Boh(z-1)^{\frac52}, & \hbox{$\Im z<0$,}
              \end{array}
            \right.
\end{eqnarray*}
with $-\pi <\arg(z-1)< \pi$. Inserting the above formula into \eqref{eq: w-eta} then gives us \eqref{eq: w1-p1-asy} and \eqref{eq: w2-p1-asy} after straightforward calculations.

This completes the proof of Proposition \ref{prop:w}.
\end{proof}

The image of the map $w:\mathcal R\mapsto \overline{\mathbb{C}}$ is illustrated in Figure \ref{fig: image of w}.

\begin{figure}[h]
\centering
\begin{tikzpicture}[scale=0.6]
\draw (-7.5,0)--(7.5,0) ;
\draw[very thick] (-4,0).. controls (-4,2) and  (-5,4)..(-7,5)
      (4,0).. controls (4,2) and  (5,4)..(7,5);
\draw[very thick] (-4,0).. controls (-4,-2) and  (-5,-4)..(-7,-5)
           (4,0).. controls (4,-2) and  (5,-4)..(7,-5);
\draw[dashed] (-7,5.5)--(7,-5.5) (-7,-5.5)--(7,5.5);

\draw (5mm,0mm) arc(0:45:5mm);
\draw (1.3,-0.2) node[above]{$\pi/3$};

\filldraw    (0,0) circle (2pt)  (-4,0) circle (2pt) (4,0) circle (2pt);
\draw  (0,0) node[below right]{$0$}
      (-4,0) node[below right]{$-1$}
      (4,0) node[below right]{$1$};
\draw (-6.1,-3)node[right]{$\gamma_1^-$}
      (-6.1, 3)node[right]{$\gamma_1^+$}
      (5,-3)node[right]{$\gamma_2^-$}
      (5,3)node[right]{$\gamma_2^+$}
      (-0.5,4.5)node[right]{$ \widehat{\mathcal R}_1$}
      (-6,1)node[right]{$ \widehat{\mathcal R}_3$}
       (6,1)node[right]{$ \widehat{\mathcal R}_2$};

\end{tikzpicture}
\caption{Image of the map $w:\mathcal R\mapsto
\overline{\mathbb{C}}$. The thick lines $\gamma_i^{\pm}$, $i=1,2$, are the images of
the cuts in the Riemann surface $\mathcal R$ under this map. More precisely, $\gamma_1^\pm=w_{1,\pm}((-\infty,-1))$, $\gamma_2^\pm=w_{1,\pm}((1,\infty))$ and $w(\mathcal R_k)=\widehat{\mathcal R}_k$, $k=1,2,3$. }
\label{fig: image of w}
\end{figure}
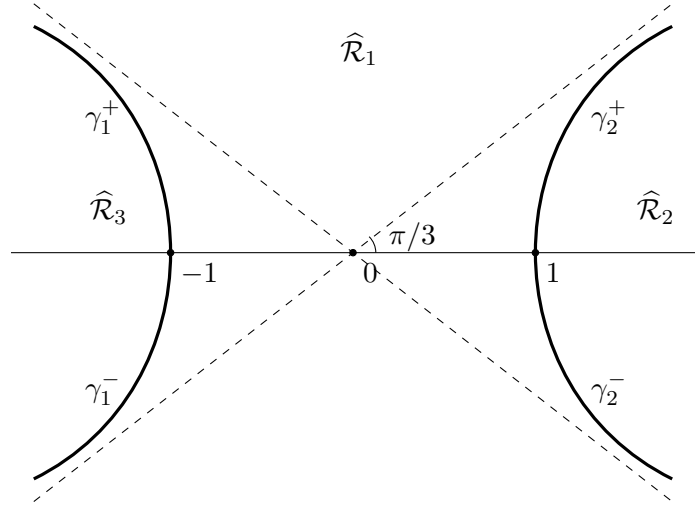

\subsection{The $\lambda$-functions}
\label{sec:lamda function}

With the functions $w_j(z)$ given in Proposition \ref{prop:w-expression}, we define the
$\lambda$-functions as
\begin{equation}\label{def: lambda function}
\lambda_j(z)=\frac{3}{4^{\frac53}}w_j(z)^4+\left(\frac{\rho}{2^{\frac53}s^{\frac23}}-\frac{3}{2^{\frac43}}\right)w_j(z)^2, \qquad j=1,2,3,
\end{equation}
which depend on the parameters $s>0$ and $\rho \in \mathbb{R}$.
The properties of the $\lambda$-functions are listed in the following proposition.
\begin{proposition}\label{prop: prop of lambda}
The functions $\lambda_j(z)$, $j=1,2,3$, defined by \eqref{def: lambda function} have the
following properties.
\begin{itemize}
\item[\rm (a)] $\lambda_j(z)$ is analytic on $\mathcal R_j$, $j=1,2,3$, and
\begin{align}
\lambda_{1,\pm}(x) &= \lambda_{3,\mp}(x),  \qquad x \in (-\infty,-1), \label{eq:lamda 1 lamda 3}
\\
\lambda_{1,\pm}(x) &= \lambda_{2,\mp}(x), \qquad x \in (1,\infty),
\end{align}
Hence the function $\cup_{j=1}^3 \mathcal R_j \to \C:  \mathcal
R_j \ni z \mapsto \lambda_j(z)$ has an analytic continuation to a
meromorphic function on the Riemann surface $\mathcal R $.

\item[\rm (b)] We have the following symmetry properties
\begin{align}
\lambda_j(\overline z) & = \overline{\lambda_j(z)}, \qquad z \in \mathcal R_j, \quad j=1,2,3, \label{eq:symlambdaj}
\\
\lambda_1(-z) &  = \lambda_1(z) ,   \qquad  z\in  \C\setminus \{ (-\infty,-1] \cup [1,+\infty) \} , \label{eq: lambda1z+-}
\\
\lambda_2(-z)&=\lambda_3(z), \qquad z \in \C\setminus (-\infty,-1]. \label{eq: lambda23z}
\end{align}

\item[\rm (c)]
As $z \to \infty$ and $-\pi<\arg z <\pi$, we have
\begin{equation}
\lambda_1(z)=\left\{
               \begin{array}{ll}
                 \frac34 \omega^2 z^{\frac43}+\frac{\rho \omega}{2s^{\frac23}}z^{\frac23}-D_0+D_1 \omega^2 z^{-\frac23}+\Boh(z^{-\frac43}), & \hbox{$\Im z>0$,} \\ [1ex]
                 \frac34 \omega z^{\frac43}+\frac{\rho \omega^2}{2s^{\frac23}}z^{\frac23}-D_0+D_1 \omega z^{-\frac23} +\Boh(z^{-\frac43}), & \hbox{$\Im z<0$,}
               \end{array}
             \right.
\end{equation}
and
\begin{equation}
\lambda_3(z)= \frac34 z^{\frac43}+\frac{\rho}{2s^{\frac23}}z^{\frac23}-D_0+D_1 z^{-\frac23} + \Boh(z^{-\frac43}), \quad z\in\C \setminus (-\infty,-1],
\end{equation}
where
\begin{equation} \label{D0-def}
D_0:=3\cdot 2^{-\frac73}-(2s)^{-\frac23}\rho, \qquad D_1:=\frac{1}{8}\left(-2+\left(\frac2s\right)^{\frac23}\rho\right).
\end{equation}

\item[\rm (d)] As $z\to -1$  and $-\pi<\arg (z+1) <\pi$, we have
\begin{equation}\label{eq:lambda-1 at -1}
\lambda_1(z)=C_0+C_1(z+1)^{\frac12}+C_2(z+1)+C_3(z+1)^{\frac32}+\Boh(z+1)^2,
\end{equation}
and
\begin{equation}\label{eq:lambda-3 at -1}
\lambda_3(z)=C_0-C_1(z+1)^{\frac12}+C_2(z+1)-C_3(z+1)^{\frac32}+\Boh(z+1)^2,
\end{equation}
where
\begin{equation}\label{def:Ci}
\begin{aligned}
C_0&:=\frac{\rho}{2^{\frac 53}s^{\frac 23}}-\frac{9}{2^{\frac {10}{3}}},  \quad &&C_1:=\frac{\sqrt{3}}{2^{\frac56}}-\frac{\rho}{\sqrt{3}\cdot2^{\frac16}s^{\frac23}},
\\
C_2&:=\frac{2^{\frac23}}{3}+\frac{2^{\frac13}\rho}{9s^{\frac23}}, \quad &&C_3:=\frac{7\rho}{108\sqrt{3}\cdot 2^{\frac16}s^{\frac23}}-\frac{165}{108\sqrt{3}\cdot2^{\frac56}}.
\end{aligned}
\end{equation}

\item[\rm (e)] As $z\to 1$  and $-\pi<\arg (z-1) <\pi$, we have
\begin{equation}
\lambda_1(z)=\left\{
               \begin{array}{ll}
C_0-iC_1(z-1)^{\frac12}-C_2(z-1)+iC_3(z-1)^{\frac32}+\Boh(z-1)^2, & \hbox{$\Im z>0$,}
 \\
C_0+iC_1(z-1)^{\frac12}-C_2(z-1)-iC_3(z-1)^{\frac32}+\Boh(z-1)^2, & \hbox{$\Im z<0$,}
               \end{array}
             \right.
\end{equation}
and
\begin{equation}
\lambda_2(z)=\left\{
               \begin{array}{ll}
                 C_0+iC_1(z-1)^{\frac12}-C_2(z-1)-iC_3(z-1)^{\frac32}+\Boh(z-1)^2, & \hbox{$\Im z>0$,} \\
                 C_0-iC_1(z-1)^{\frac12}-C_2(z-1)+iC_3(z-1)^{\frac32}+\Boh(z-1)^2, & \hbox{$\Im z<0$,}
               \end{array}
             \right.
\end{equation}
where the constants $C_i$, $i=0,1,2,3$, are given in \eqref{def:Ci}.

\item[\rm (f)] For $z\in \C\setminus \{ (-\infty,-1] \cup [1,+\infty) \}$, we have
\begin{equation}\label{eq:sumlambda}
    \lambda_1(z) + \lambda_2(z) + \lambda_3(z) = - \frac{9}{2^{\frac{7}{3}}} + \frac{3 \rho}{2^{\frac{2}{3}} s^{\frac{2}{3}}}.
  \end{equation}
\end{itemize}
\end{proposition}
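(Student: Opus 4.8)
The plan is to reduce every assertion in Proposition~\ref{prop: prop of lambda} to the corresponding property of the functions $w_j$ established in Proposition~\ref{prop:w}, supplemented by elementary symmetric-function identities for part (f). The structural fact that makes this work is that, by \eqref{def: lambda function}, each $\lambda_j$ is the \emph{same} polynomial
\[
  P(w) := \frac{3}{4^{\frac53}}\,w^4 + \Bigl(\frac{\rho}{2^{\frac53}s^{\frac23}} - \frac{3}{2^{\frac43}}\Bigr)w^2
\]
evaluated at $w_j(z)$; this polynomial has real coefficients (recall $s>0$ and $\rho\in\mathbb R$) and involves only even powers of $w$.

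For (a): since $P$ is a polynomial, $\lambda_j=P\circ w_j$ is analytic on $\mathcal R_j$ by Proposition~\ref{prop:w}(a), and the boundary identities $w_{1,\pm}=w_{3,\mp}$ on $(-\infty,-1)$ and $w_{1,\pm}=w_{2,\mp}$ on $(1,\infty)$ transfer immediately to $\lambda$, yielding \eqref{eq:lamda 1 lamda 3} and its companion and hence the analytic continuation to a meromorphic function on $\mathcal R$. For (b): $\lambda_j(\overline z)=\overline{\lambda_j(z)}$ follows from $w_j(\overline z)=\overline{w_j(z)}$ together with the reality of the coefficients of $P$; and since $P$ is even, $w_1(-z)=-w_1(z)$ gives $\lambda_1(-z)=\lambda_1(z)$ while $w_2(-z)=-w_3(z)$ gives $\lambda_2(-z)=\lambda_3(z)$.

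For (c), (d), (e): substitute the local expansions of $w_j$ from Proposition~\ref{prop:w}(c)--(e) into $P$ and collect powers of $z$ (respectively of $(z+1)^{1/2}$ near $z=-1$, and of $(z-1)^{1/2}$ near $z=1$). For instance, near $z=\infty$ with $\Im z>0$ one has $w_1(z)=-2^{1/3}\omega^2 z^{1/3}-2^{-1/3}\omega\,z^{-1/3}+\cdots$, so, using $\omega^3=1$, the leading term of $\lambda_1(z)$ is $\tfrac{3}{4^{5/3}}\cdot 2^{4/3}\omega^2 z^{4/3}=\tfrac34\omega^2 z^{4/3}$, and the coefficients $D_0,D_1$ (respectively $C_0,\dots,C_3$) emerge after routine algebra. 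I expect the only genuinely laborious step to be this bookkeeping: one must carry enough terms in the expansions of $w_j$ to pin down every coefficient listed in \eqref{D0-def} and \eqref{def:Ci}, while tracking the cube-root branches in the representation $w_j=\omega^{j-2}\eta^{1/3}+\omega^{2-j}\eta^{-1/3}$ of Proposition~\ref{prop:w-expression}.

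For (f): rather than taking the asymptotic route I would use that $w_1(z),w_2(z),w_3(z)$ are precisely the three roots of the cubic \eqref{eq: alg eq w}, so their elementary symmetric functions are $e_1=0$, $e_2=-3$, $e_3=-2z$. Newton's identities then give the $z$-independent power sums $w_1^2+w_2^2+w_3^2=e_1^2-2e_2=6$ and $w_1^4+w_2^4+w_3^4=-e_2\cdot(w_1^2+w_2^2+w_3^2)=18$. Hence
\[
  \lambda_1(z)+\lambda_2(z)+\lambda_3(z)
  = \frac{3}{4^{\frac53}}\cdot 18 + \Bigl(\frac{\rho}{2^{\frac53}s^{\frac23}}-\frac{3}{2^{\frac43}}\Bigr)\cdot 6
  = -\frac{9}{2^{\frac73}} + \frac{3\rho}{2^{\frac23}s^{\frac23}},
\]
which is \eqref{eq:sumlambda}. (As a consistency check, the same identity also follows by summing the three expansions of part (c) and comparing the $\Boh(1)$ terms, but the symmetric-function computation is the cleanest.)
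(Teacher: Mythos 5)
Your proposal is correct and follows essentially the same route as the paper: items (a)--(e) are reduced (as the paper does, tersely) to the corresponding facts about $w_j$ from Proposition~\ref{prop:w} via the observation that each $\lambda_j=P(w_j)$ for a fixed real, even polynomial $P$, and item (f) rests on Vieta's formulas for the cubic \eqref{eq: alg eq w}. The only cosmetic difference is in the bookkeeping for the power sum $w_1^4+w_2^4+w_3^4=18$: you invoke Newton's identity $p_4=-e_2\,p_2$ directly, whereas the paper first computes $w_1^2w_2^2+w_1^2w_3^2+w_2^2w_3^2=9$ and then uses $p_4=p_2^2-2\cdot 9=18$; both are one-line consequences of Vieta and give the same result.
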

\begin{proof}
The proofs of items (a)--(e) follow directly from the definition of $\lambda_j(z)$ in \eqref{def: lambda function} and Proposition \ref{prop:w}. It then remains to prove \eqref{eq:sumlambda}.
Since $w_j$, $j=1,2,3$, are three solutions of the algebraic equation \eqref{eq: alg eq w}, it follows from Vieta's rule that, for $z\in \C\setminus \{ (-\infty,-1] \cup [1,+\infty) \}$,
\begin{align*}
w_1(z)+w_2(z)+w_3(z)&=0,
\\
w_1(z)w_2(z)+w_1(z)w_3(z)+w_2(z)w_3(z)&=-3,
\\
w_1(z)w_2(z)w_3(z)&=-2z.
\end{align*}
Hence,
\begin{align}\label{eq:sumwsquare}
&w_1(z)^2+w_2(z)^2+w_3(z)^2
\nonumber
\\
&=(w_1(z)+w_2(z)+w_3(z))^2-2(w_1(z)w_2(z)+w_1(z)w_3(z)+w_2(z)w_3(z))=6,
\end{align}
\begin{align*}
&w_1(z)^2w_2(z)^2+w_1(z)^2w_3(z)^2+w_2(z)^2w_3(z)^2
\\
&=(w_1(z)w_2(z)+w_1(z)w_3(z)+w_2(z)w_3(z))^2-2w_1(z)w_2(z)w_3(z)(w_1(z)+w_2(z)+w_3(z))
\\
&=9,
\end{align*}
and
\begin{align}\label{eq:sumwquartic}
&w_1(z)^4+w_2(z)^4+w_3(z)^4
\nonumber \\
&=(w_1(z)^2+w_2(z)^2+w_3(z)^2)^2-2(w_1(z)^2w_2(z)^2+w_1(z)^2w_3(z)^2+w_2(z)^2w_3(z)^2)
\nonumber \\
&=36-18=18.
\end{align}
A combination of \eqref{def: lambda function}, \eqref{eq:sumwsquare} and \eqref{eq:sumwquartic} yields \eqref{eq:sumlambda}.

This completes the proof of Proposition \ref{prop: prop of lambda}.
\end{proof}

%

\section{Asymptotic analysis of the Riemann-Hilbert problem for $X$}\label{sec:asymanalyX}
In this section, we shall perform a Deift-Zhou steepest descent analysis \cite{DZ93} to the RH problem for $X$ as $s \to +\infty$. It consists of a series of explicit and invertible transformations which leads to an RH problem tending to the identity matrix as $s \to +\infty$.

\subsection{First transformation: $X \to T$}
This transformation is a rescaling of the RH problem for $X$, which is  defined by
\begin{equation}\label{def:XtoT}
T(z)=X(sz).
\end{equation}
It is then straightforward to check that $T$ satisfies the following RH problem.

\begin{rhp}
The function $T$ defined in \eqref{def:XtoT} has the following properties:
\begin{enumerate}
\item[\rm (1)] $T(z)$ is defined and analytic in $\mathbb{C}\setminus \{\cup^5_{j=0}\Sigma_j^{(1)}\cup \{-1\} \cup\{1\}\}$, where the contours $\Sigma_j^{(1)}$ are defined in \eqref{def:sigmais} with $s=1$.

\item[\rm (2)] $T$ satisfies the jump condition
\begin{equation}\label{eq:T-jump}
 T_+(z)=T_-(z)J_T(z), \qquad z\in\cup^5_{j=0}\Sigma_j^{(1)},
\end{equation}
where
\begin{equation}\label{def:JT}
J_T(z):=\left\{
 \begin{array}{ll}
          \begin{pmatrix} 0&1&0 \\ -1&0&0 \\ 0&0&1 \end{pmatrix}, &\qquad  \hbox{$z\in \Sigma_0^{(1)}$,} \\
          \begin{pmatrix} 1&0&0 \\ 1&1&1 \\ 0&0&1 \end{pmatrix},  &\qquad  \hbox{$z\in \Sigma_1^{(1)}$,} \\
          \begin{pmatrix} 1&0&0 \\ 0&1&0 \\ 1&1&1 \end{pmatrix},  &\qquad  \hbox{$z\in \Sigma_2^{(1)}$,} \\
          \begin{pmatrix} 0&0&1 \\ 0&1&0 \\ -1&0&0 \end{pmatrix}, &\qquad  \hbox{$z\in \Sigma_3^{(1)}$,} \\
          \begin{pmatrix} 1&0&0 \\ 0&1&0 \\ 1&-1&1 \end{pmatrix}, &\qquad  \hbox{$z\in \Sigma_4^{(1)}$,} \\
          \begin{pmatrix} 1&0&0 \\ 1&1&-1 \\ 0&0&1 \end{pmatrix}, &\qquad  \hbox{$z\in \Sigma_5^{(1)}$.}
        \end{array}
      \right.
 \end{equation}

\item[\rm (3)]As $z \to \infty$ and $\pm\Im z>0$, we have
\begin{equation}\label{eq:asyT}
 T(z)=
\sqrt{\frac{2\pi}{3}}i e^{\frac{\rho^2}{6}} \Psi_0
\left(I+\frac{\mathsf{X}_1}{sz} + \mathcal O(z^{-2}) \right)\diag \left((sz)^{-\frac13},1,(sz)^{\frac13} \right)L_{\pm}e^{\Theta(sz)},
\end{equation}
where we recall that
\begin{align}
\Theta(sz)&= \begin{cases}
\diag (\theta_1(sz;\rho),\theta_2(sz;\rho),\theta_3(sz;\rho)), & \text{$\Im z >0$,} \\
\diag (\theta_2(sz;\rho),\theta_1(sz;\rho),\theta_3(sz;\rho)), & \text{$\Im z <0$,} \\
\end{cases}
\end{align}
with
\begin{equation}\label{eq:thetasz}
\theta_k(sz;\rho)=s^{\frac43}\left(\frac34 \omega^{2k}z^{\frac43}+\frac{\rho\omega^k}{2s^{\frac23}}z^{\frac23}\right), \qquad k=1,2,3.
\end{equation}

\item[\rm (4)] As $z \to \pm 1$, we have $T(z) = \mathcal O(\ln(z \mp 1))$.
\end{enumerate}
\end{rhp}

\subsection{Second transformation: $T \to S$}
In this transformation we normalize the large $z$ behavior of $T$ using the $\lambda$-functions introduced in Section \ref{sec:lamda function}. We define
\begin{equation}\label{def:TtoS}
S(z)= S_0 \frac{\diag \left(s^{\frac13},1,s^{-\frac13} \right)}{e^{D_0s^{\frac43}} \sqrt{\frac{2\pi}{3}}i e^{\frac{\rho^2}{6}}} \Psi_0^{-1}  T(z)\diag\left(e^{-s^{\frac43}\lambda_1(z)},e^{-s^{\frac43}\lambda_2(z)},e^{-s^{\frac43}\lambda_3(z)}\right),
\end{equation}
where
\begin{equation}
  S_0 =  \begin{pmatrix}
1 & 0 & 0 \\
0 & 1  & 0 \\
D_1 s^{\frac{4}{3}} & 0 & 1
\end{pmatrix}
\end{equation}
with $D_0$ and $D_1$ being the constants given in \eqref{D0-def}, and the functions $\lambda_i$, $i=1,2,3$, are defined in \eqref{def: lambda function}. Then, $S$ satisfies the following RH problem.

\begin{proposition}\label{prop:S}
The function $S$ defined in \eqref{def:TtoS} has the following properties:
\begin{enumerate}
\item[\rm (1)] $S(z)$ is defined and analytic in $\mathbb{C}\setminus \{\cup^5_{j=0}\Sigma_j^{(1)}\cup \{-1\} \cup\{1\}\}$.

\item[\rm (2)] $S$ satisfies the jump condition
\begin{equation}\label{eq:S-jump}
 S_+(z)=S_-(z)J_S(z), \qquad z\in \cup^5_{j=0}\Sigma_j^{(1)},
\end{equation}
where
\begin{equation}\label{def:JS}
J_S(z):=\left\{
 \begin{array}{ll}
          \begin{pmatrix} 0&1&0 \\ -1&0&0 \\ 0&0&1 \end{pmatrix}, &\quad  \hbox{$z\in \Sigma_0^{(1)}$,} \\
          \begin{pmatrix} 1&0&0 \\
         e^{s^{\frac43}(\lambda_2(z)-\lambda_1(z))}&1&e^{s^{\frac43}(\lambda_2(z)-\lambda_3(z))}
         \\
          0&0&1
        \end{pmatrix},  &\quad  \hbox{$z\in \Sigma_1^{(1)}$,} \\
          \begin{pmatrix} 1&0&0 \\ 0&1&0 \\ e^{s^{\frac43}(\lambda_3(z)-\lambda_1(z))} & e^{s^{\frac43}(\lambda_3(z)-\lambda_2(z))} &1 \end{pmatrix},  &\quad  \hbox{$z\in \Sigma_2^{(1)}$,} \\
          \begin{pmatrix} 0&0&1 \\ 0&1&0 \\ -1&0&0 \end{pmatrix}, &\quad  \hbox{$z\in \Sigma_3^{(1)}$,} \\
          \begin{pmatrix} 1&0&0 \\ 0&1&0 \\ e^{s^{\frac43}(\lambda_3(z)-\lambda_1(z))} &-e^{s^{\frac43}(\lambda_3(z)-\lambda_2(z))}&1 \end{pmatrix}, &\quad  \hbox{$z\in \Sigma_4^{(1)}$,} \\
          \begin{pmatrix} 1&0&0 \\ e^{s^{\frac43}(\lambda_2(z)-\lambda_1(z))}&1&-e^{s^{\frac43}(\lambda_2(z)-\lambda_3(z))} \\ 0&0&1 \end{pmatrix}, &\quad  \hbox{$z\in \Sigma_5^{(1)}$.}
        \end{array}
      \right.
 \end{equation}

\item[\rm (3)]As $z \to \infty$ and $\pm\Im z>0$, we have
\begin{equation}\label{eq:asyS}
 S(z)=
\left(I + \frac{\mathsf{S}_1}{z} +  \mathcal O(z^{-2}) \right) \diag \left(z^{-\frac13},1,z^{\frac13} \right)L_{\pm},
\end{equation}
where
\begin{equation} \label{asyS:coeff}
\mathsf{S}_1 = \begin{pmatrix}
* & \frac{(\mathsf{X}_1)_{12}}{ s^{\frac{2}{3}}} - D_1 s^{\frac{4}{3}} & * \\
* & *  & \frac{(\mathsf{X}_1)_{23}}{ s^{\frac{2}{3}}} - D_1 s^{\frac{4}{3}} \\
* & * & *
\end{pmatrix}
\end{equation}
with the constant $D_1$ given in \eqref{D0-def}, and $*$ stands for some unimportant entry.

\item[\rm (4)] As $z \to \pm 1$, we have $S(z) = \mathcal O(\ln(z \mp 1))$.

\end{enumerate}
\end{proposition}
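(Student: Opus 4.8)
The plan is to read off all four items directly from the defining relation \eqref{def:TtoS}, which we write as $S(z)=M_0\,T(z)\,\Lambda_\lambda(z)$, where
$$
M_0:=S_0\,\frac{\diag\!\left(s^{\frac13},1,s^{-\frac13}\right)}{e^{D_0s^{\frac43}}\,\sqrt{\frac{2\pi}{3}}\,i\,e^{\frac{\rho^2}{6}}}\,\Psi_0^{-1}
$$
is independent of $z$ and $\Lambda_\lambda(z):=\diag\!\left(e^{-s^{\frac43}\lambda_1(z)},e^{-s^{\frac43}\lambda_2(z)},e^{-s^{\frac43}\lambda_3(z)}\right)$ is diagonal. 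Items (1) and (4) are then immediate. Since $M_0$ is constant and each $e^{-s^{4/3}\lambda_j(z)}$ is analytic exactly where $\lambda_j$ is, i.e.\ on $\mathcal R_j$ by Proposition \ref{prop: prop of lambda}(a), with branch cuts contained in $(-\infty,-1]\cup[1,+\infty)\subset\Sigma_3^{(1)}\cup\Sigma_0^{(1)}$, the function $S$ has the same analyticity domain as $T$, which is (1); by Proposition \ref{prop: prop of lambda}(d)--(e) each $\lambda_j$ stays bounded as $z\to\pm1$, so $\Lambda_\lambda$ and $\Lambda_\lambda^{-1}$ are bounded there, and together with $T(z)=\mathcal O(\ln(z\mp1))$ this yields (4).

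For item (2), $M_0$ being constant gives $J_S=\Lambda_{\lambda,-}^{-1}\,J_T\,\Lambda_{\lambda,+}$. On the four rays $\Sigma_1^{(1)},\Sigma_2^{(1)},\Sigma_4^{(1)},\Sigma_5^{(1)}$ all $\lambda_j$ are analytic across the contour, so $\Lambda_{\lambda,+}=\Lambda_{\lambda,-}$ and $J_S=\Lambda_\lambda^{-1}J_T\Lambda_\lambda$; a direct matrix multiplication converts the off-diagonal $1$'s of $J_T$ in \eqref{def:JX} into the factors $e^{s^{4/3}(\lambda_i-\lambda_j)}$ appearing in \eqref{def:JS}. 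On $\Sigma_0^{(1)}=(1,+\infty)$ one invokes $\lambda_{1,\pm}=\lambda_{2,\mp}$ (with $\lambda_3$ analytic) from Proposition \ref{prop: prop of lambda}(a), which says $\Lambda_{\lambda,+}=\Pi\,\Lambda_{\lambda,-}\,\Pi$ with $\Pi$ the permutation exchanging indices $1$ and $2$; writing the relevant $J_T=\diag(1,-1,1)\,\Pi$ and using $\Pi^2=I$ together with commutativity of diagonal matrices gives $J_S=\Lambda_{\lambda,-}^{-1}\diag(1,-1,1)\,\Lambda_{\lambda,-}\,\Pi=\diag(1,-1,1)\,\Pi=J_T$. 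The contour $\Sigma_3^{(1)}$ is treated identically with $\lambda_{1,\pm}=\lambda_{3,\mp}$.

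Item (3) is the substantive step. Substituting \eqref{eq:asyT} into $S=M_0T\Lambda_\lambda$, the scalar and $\Psi_0$ prefactors cancel against $M_0$, and $\diag(s^{1/3},1,s^{-1/3})$ turns $\diag((sz)^{-1/3},1,(sz)^{1/3})$ into $\mathcal D(z):=\diag(z^{-1/3},1,z^{1/3})$, so that
$$
S(z)=e^{-D_0s^{\frac43}}\,S_0\left(I+\frac{B_1}{z}+\mathcal O(z^{-2})\right)\mathcal D(z)\,L_\pm\,e^{\Theta(sz)}\Lambda_\lambda(z),
$$
with $B_1:=\tfrac1s\diag(s^{1/3},1,s^{-1/3})\,\mathsf X_1\,\diag(s^{-1/3},1,s^{1/3})$, whose $(1,2)$ and $(2,3)$ entries are $(\mathsf X_1)_{12}/s^{2/3}$ and $(\mathsf X_1)_{23}/s^{2/3}$. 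The heart of the matter is the matching between $\Theta(sz)$ and the $\lambda$-functions. By Proposition \ref{prop: prop of lambda}(c) and \eqref{eq:thetasz}, as $z\to\infty$,
$$
s^{4/3}\lambda_j(z)=\theta_{\sigma(j)}(sz;\rho)-D_0s^{4/3}+D_1s^{4/3}\omega^{a_j}z^{-2/3}+\mathcal O(z^{-4/3}),
$$
where $\sigma$ is the identity for $\Im z>0$ and the transposition $(1\,2)$ for $\Im z<0$ and $a_j$ are the corresponding powers of $\omega$ read off from \eqref{def:Theta}; hence $e^{\Theta(sz)}\Lambda_\lambda(z)=e^{D_0s^{4/3}}\bigl(I-D_1s^{4/3}z^{-2/3}\diag(\omega^{a_1},\omega^{a_2},\omega^{a_3})+\cdots\bigr)$, and the factor $e^{D_0s^{4/3}}$ cancels $e^{-D_0s^{4/3}}$. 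The decisive algebraic identity, checked in each case, is that $L_\pm\diag(\omega^{a_1},\omega^{a_2},\omega^{a_3})L_\pm^{-1}$ equals the cyclic permutation matrix $P=\bigl(\begin{smallmatrix}0&1&0\\0&0&1\\1&0&0\end{smallmatrix}\bigr)$; since moreover $z^{-2/3}\mathcal D(z)P\mathcal D(z)^{-1}=E_{31}+z^{-1}(E_{12}+E_{23})$ (with $E_{ij}$ the matrix units), we get
$$
\mathcal D(z)\,L_\pm\,e^{\Theta(sz)}\Lambda_\lambda(z)=e^{D_0s^{4/3}}\bigl(I-D_1s^{4/3}E_{31}-D_1s^{4/3}z^{-1}(E_{12}+E_{23})+\cdots\bigr)\mathcal D(z)\,L_\pm,
$$
where the omitted terms contribute nothing to the $(1,2)$ and $(2,3)$ entries at orders $1$ and $z^{-1}$. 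The $\mathcal O(1)$ term $-D_1s^{4/3}E_{31}$ is annihilated on the left by $S_0=I+D_1s^{4/3}E_{31}$ because $E_{31}^2=0$, so $S(z)=(I+\mathsf S_1/z+\cdots)\mathcal D(z)L_\pm$; reading off the $z^{-1}$ coefficients of the $(1,2)$ and $(2,3)$ entries --- to which $B_1$ contributes $(\mathsf X_1)_{12}/s^{2/3}$, $(\mathsf X_1)_{23}/s^{2/3}$ and the $z^{-1}(E_{12}+E_{23})$ term contributes $-D_1s^{4/3}$, the $E_{31}$-factors reaching neither slot --- gives \eqref{asyS:coeff}.

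The main obstacle is the bookkeeping in item (3): one must carry the matrices $\Psi_0,S_0,L_\pm$, the constants $D_0,D_1$ and the sub-leading expansions of the $\lambda_j$ through the computation simultaneously, and recognize that the apparent $\mathcal O(1)$ mismatch between $\Theta(sz)$ and $s^{4/3}\diag(\lambda_1,\lambda_2,\lambda_3)$ --- which is precisely why the factor $S_0$ and the shifts by $D_0,D_1$ were built into \eqref{def:TtoS} --- cancels identically. Everything else (items (1), (2), (4)) is routine once the identity $S=M_0T\Lambda_\lambda$ and the asymptotics of Proposition \ref{prop: prop of lambda} are in hand.
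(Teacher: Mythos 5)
Your proof is correct and follows the paper's strategy exactly: items (1), (2), (4) are read off from $S=M_0 T\Lambda_\lambda$, and (3) comes from the identity $e^{\Theta(sz)}\Lambda_\lambda(z)=e^{D_0 s^{4/3}}(I-D_1 s^{4/3}z^{-2/3}\diag(\omega^{a_1},\omega^{a_2},\omega^{a_3})+\Boh(z^{-4/3}))$, which is precisely the paper's displayed "useful formula." The paper then says "we omit the details here," and what you supply is exactly those details — the verification $L_\pm\diag(\omega^{a_1},\omega^{a_2},\omega^{a_3})L_\pm^{-1}=P$, the conjugation $z^{-2/3}\mathcal D(z)P\mathcal D(z)^{-1}=E_{31}+z^{-1}(E_{12}+E_{23})$, the cancellation by $S_0$ via $E_{31}^2=0$, and the bookkeeping isolating the $(1,2)$ and $(2,3)$ entries of $\mathsf S_1$. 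One small point worth making explicit: the assertion that the omitted $\Boh(z^{-4/3})$ terms cannot touch the $(1,2)$, $(2,3)$ slots at orders $1$ and $z^{-1}$ rests on the $\lambda$-expansion having coefficients of $z^{-2k/3}$ proportional to $\omega^{ka_j}$, so that conjugation by $L_\pm$ yields $P^k$ and $\mathcal D(z)P^k\mathcal D(z)^{-1}z^{-2k/3}$ only ever produces integer powers of $z$, with the $k=2$ contribution landing in the $(2,1)$, $(3,2)$, $(1,3)$ slots — but your conclusion is right.
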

\begin{proof}
By \eqref{def:TtoS}, it is clear that $S(z)$ is defined and analytic in $\mathbb{C}\setminus \{\cup^5_{j=0}\Sigma_j^{(1)}\cup \{-1\} \cup\{1\}\}$ and
\begin{align}
J_S(z)&=S_-^{-1}(z)S_+(z)
\nonumber \\
&=\diag\left(e^{s^{\frac43}\lambda_{1,-}(z)},e^{s^{\frac43}\lambda_{2,-}(z)},e^{s^{\frac43}\lambda_{3,-}(z)}\right)J_T(z)
\nonumber \\
& \quad \times \diag\left(e^{-s^{\frac43}\lambda_{1,+}(z)},e^{-s^{\frac43}\lambda_{2,+}(z)},e^{-s^{\frac43}\lambda_{3,+}(z)}\right),
\end{align}
for $z\in \cup^5_{j=0}\Sigma_j^{(1)}$, where $J_T$ is given in \eqref{def:JT}. This, together with item (a) in Proposition \ref{prop: prop of lambda}, gives us \eqref{def:JS}.

Checking the asymptotics \eqref{eq:asyS} is a bit more cumbersome. To that end, we observe from Proposition \ref{prop: prop of lambda} and \eqref{eq:thetasz} that the following formula is useful
\begin{multline*}
\diag\left(e^{-s^{\frac43}\lambda_1(z)+\theta_1(sz)},e^{-s^{\frac43}\lambda_2(z)+\theta_2(sz)},e^{-s^{\frac43}\lambda_3(z)+\theta_3(sz)}\right)
\\
=e^{D_0s^{\frac43}}\left(I- \frac{D_1s^{\frac43}}{z^{\frac23}}\diag\left(\omega^2,\omega,1\right)+ \Boh(z^{-\frac43})\right), \qquad z\to \infty.
\end{multline*}
We omit the details here.

This completes the proof of Proposition \ref{prop:S}.
\end{proof}

\subsection{Estimate of $J_S$ for large $s$}
The jump matrix $J_S(z)$ defined in \eqref{def:JS} is constant on $\Sigma_0^{(1)}$ and on $\Sigma_3^{(1)}$. On the other jump contours, it comes out that the nonzero off-diagonal entries of $J_S$ are all exponentially small for large $s$, which can be seen from the following proposition. In what follows, we denote by $U(z_0,\delta)$ the fixed open disk centered at $z_0$ with small radius $\delta>0$.

\begin{proposition}\label{prop:lambda}
\begin{enumerate}
\item[\rm (a)] There exist positive constants $c_1,c_2>0$ such that
\begin{align}
    \Re (\lambda_2(z)-\lambda_1(z) ) &\leq - c_1 |z|^{\frac43},&&  z\in (\Sigma_1^{(1)} \cup \Sigma_5^{(1)} )\setminus U(1,\delta), \label{eq:lambda21}\\
    \Re (\lambda_3(z)-\lambda_1(z) ) &\leq - c_2 |z|^{\frac43},&&  z\in (\Sigma_2^{(1)} \cup \Sigma_4^{(1)} )\setminus U(-1,\delta),
\end{align}
for $s$ large enough.
\item[\rm (b)]
There exists a constant $c_3>0$ such that
\begin{align}
    \Re \left(\lambda_2(z)-\lambda_3(z) \right) &\leq - c_3 |z|^{\frac43},&&  z\in \Sigma_1^{(1)} \cup \Sigma_5^{(1)}, \label{eq:lambda23ineq}
\\
    \Re \left(\lambda_3(z)-\lambda_2(z) \right) &\leq - c_3 |z|^{\frac43},&&  z\in \Sigma_2^{(1)} \cup \Sigma_4^{(1)},
\label{eq:lambda32ineq}
 \end{align}
for $s$ large enough.
\end{enumerate}
\end{proposition}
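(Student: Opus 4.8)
The plan is to use the symmetries of the $\lambda$-functions to reduce all four estimates to two inequalities on the single ray $\Sigma_1^{(1)}$, and then to split that ray into a neighbourhood of $\infty$, where the large-$z$ expansions of Proposition~\ref{prop: prop of lambda}(c) apply, and a compact piece, where one passes to the limit $s\to+\infty$ and argues directly.

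\textbf{Symmetry reduction.} Since $\Sigma_5^{(1)}=\overline{\Sigma_1^{(1)}}$ and $\Sigma_4^{(1)}=\overline{\Sigma_2^{(1)}}$, the relation $\lambda_j(\bar z)=\overline{\lambda_j(z)}$ shows that the estimates on $\Sigma_5^{(1)}$ and $\Sigma_4^{(1)}$ are equivalent to those on $\Sigma_1^{(1)}$ and $\Sigma_2^{(1)}$; since $\Sigma_2^{(1)}=-\Sigma_5^{(1)}$, the relations $\lambda_1(-z)=\lambda_1(z)$ and $\lambda_2(-z)=\lambda_3(z)$ (hence $\lambda_3(-z)=\lambda_2(z)$) reduce the estimates on $\Sigma_2^{(1)}$ to those on $\Sigma_5^{(1)}$, and the disk $U(-1,\delta)$ to $U(1,\delta)$. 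Hence it suffices to prove, for $s$ large: (i) $\Re(\lambda_2(z)-\lambda_1(z))\le-c_1|z|^{4/3}$ on $\Sigma_1^{(1)}\setminus U(1,\delta)$ and (ii) $\Re(\lambda_2(z)-\lambda_3(z))\le-c_3|z|^{4/3}$ on $\Sigma_1^{(1)}$. Since $\Sigma_1^{(1)}=1+e^{i\pi/4}(0,+\infty)$ lies in the open upper half-plane, on it one may use the $\Im z>0$ expansion of $\lambda_1$ while writing $\lambda_2(z)=\lambda_3(-z)$ and invoking the (cut-free) expansion of $\lambda_3$.

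\textbf{The regime $|z|\ge R$.} Inserting Proposition~\ref{prop: prop of lambda}(c) together with $\lambda_2(z)=\lambda_3(-z)$ yields, for $z\in\Sigma_1^{(1)}$ with $|z|$ large,
\[
\Re\bigl(\lambda_2(z)-\lambda_1(z)\bigr)=\tfrac34|z|^{\frac43}g_1(\arg z)+\Boh(|z|^{\frac23}),\qquad
\Re\bigl(\lambda_2(z)-\lambda_3(z)\bigr)=\tfrac34|z|^{\frac43}g_3(\arg z)+\Boh(|z|^{\frac23}),
\]
where the errors are uniform in $s\ge1$ and $\rho$ in a compact set, and, after a sum-to-product simplification, $g_1(\phi)=\sqrt3\sin(\tfrac43\phi-\pi)$, $g_3(\phi)=\sqrt3\sin(\tfrac43\phi-\tfrac23\pi)$. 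On $\Sigma_1^{(1)}$ the argument $\arg z$ lies in $(0,\pi/4]$ and increases to $\pi/4$; since $g_3\le-\tfrac32$ throughout $(0,\pi/4]$, estimate (ii) follows for $|z|\ge R$; since $g_1<0$ on $(0,\pi/4]$ but $g_1(\phi)\to0$ as $\phi\to0$, and since $|z|\ge R$ on the ray forces $\arg z$ to stay above a positive constant, estimate (i) follows for $|z|\ge R$ once $R$ is chosen large enough.

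\textbf{The compact regime, and the main obstacle.} On the compact arc $z\in\Sigma_1^{(1)}$, $1\le|z|\le R$ (intersected with $\{|z-1|\ge\delta\}$ in case (i)), the weight $|z|^{4/3}$ is bounded above and below, so it suffices to establish the strict inequalities $\Re(\lambda_2-\lambda_1)<0$, $\Re(\lambda_2-\lambda_3)<0$ there, uniformly for $s$ large. Because the $\rho$-dependent part of each $\lambda_j$ is $\Boh(s^{-2/3})$ uniformly on compacts, one may pass to the $s\to+\infty$ limit and work with the $s$-independent functions $\tfrac{3}{4^{5/3}}w_j^4-\tfrac{3}{2^{4/3}}w_j^2$. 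Near $z=1$ this is immediate from Proposition~\ref{prop: prop of lambda}(e): one finds $\lambda_2(z)-\lambda_1(z)=2iC_1(z-1)^{1/2}\bigl(1+\Boh(z-1)\bigr)$ with $C_1\to\sqrt3\,2^{-5/6}>0$, whose real part is strictly negative on a punctured upper-half-plane neighbourhood of $z=1$ (there $\Im(z-1)^{1/2}>0$), while a short computation gives $\lambda_2(1)-\lambda_3(1)\to-9\cdot2^{-10/3}<0$. The remaining bounded piece of the arc is the crux of the proof: there one must verify the inequality directly along the parametrization $z=1+te^{i\pi/4}$, using the cubic $w_j^3-3w_j+2z=0$ from \eqref{eq: alg eq w} to keep the expressions manageable. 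This is the only genuinely rigid step --- it uses the precise position of the contours $\Sigma_k^{(1)}$ and admits no soft argument --- whereas the symmetry reduction, the expansions at infinity, and the passage to $s=+\infty$ are all routine. (Alternatively, since the jump matrices $J_S$ in \eqref{def:JS} are holomorphic off the contours, one is free to deform each $\Sigma_k^{(1)}$ into a steepest-descent path for the relevant $\Re(\lambda_i-\lambda_j)$, along which monotonicity makes the estimate transparent.)
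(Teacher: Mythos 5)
Your proposal is correct and follows the same outline as the paper's proof: reduce by symmetry to the single ray $\Sigma_1^{(1)}$, replace the $\lambda_j$ by the $s$-independent limits $\lambda_j^*(z)=\tfrac{3}{4^{5/3}}w_j(z)^4-\tfrac{3}{2^{4/3}}w_j(z)^2$ (controlling the difference by $O(s^{-2/3}\max\{1,|z|^{4/3}\})$), and then split into a large-$|z|$ regime handled by the expansions in Proposition~\ref{prop: prop of lambda}(c) and a compact arc. Your symmetry reduction is in fact more explicit than the paper's (which simply declares the other three estimates ``analogous''), and your sum-to-product forms $g_1,g_3$ make the large-$|z|$ sign analysis transparent.

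The one substantive difference is the compact-arc step. You call it ``the only genuinely rigid step,'' propose a direct computation along the parametrization $z=1+te^{i\pi/4}$ (or a contour deformation) and leave it unresolved. The paper instead uses a sign-region argument: $\Re(\lambda_2^*-\lambda_1^*)$ is continuous, its zero level set is the system of curves plotted in Figure~\ref{fig: estimates 1}, hence the sign is constant on each connected component of the complement, and since $\Sigma_1^{(1)}\setminus U(1,\delta)$ lies in a single component whose sign is already pinned down by the asymptotics at infinity, the negativity propagates to the whole finite arc without any pointwise computation. Both routes ultimately rest on something external --- a numerical plot of the zero set in the paper, an unperformed calculation in yours --- so neither is fully self-contained, but the paper's topological reduction is the cleaner way to close the argument and is worth absorbing; it also makes uniformity in $s$ and $\rho$ automatic once the $\lambda_j^*$-reduction is in place.
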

\begin{proof}
We will only present the proof of \eqref{eq:lambda21}, since the proofs of other estimates are analogous. In view of the symmetry properties \eqref{eq:symlambdaj}, it suffices to show \eqref{eq:lambda21} for $z\in \Sigma_1^{(1)}\setminus U(1,\delta)$. To this end, let us define
\begin{equation}
\lambda_j^*(z)=\frac{3}{4^{\frac53}}w_j(z)^4-\frac{3}{2^{\frac43}}w_j(z)^2, \qquad j=1,2,3.
\end{equation}
From the behavior of the $\lambda$-functions at infinity given in Proposition \ref{prop: prop of lambda}, it follows from an elementary analysis that
\begin{equation}
|\lambda_j(z)-\lambda_j^*(z)|\leq \varrho s^{-\frac23}\max\{1,|z|^\frac43\}, \qquad z\in \Sigma_1^{(1)}\setminus U(1,\delta),
\end{equation}
for some $\varrho>0$, if $s$ is large enough. This, together with the triangle inequality, implies that it is sufficient to show \eqref{eq:lambda21} for $\lambda_j^*(z)$, $j=1,2$.

For large value of $z\in\Sigma_1^{(1)}$, the estimate follows from the asymptotics of $\lambda_j^*$ at infinity, which can be obtained by taking $s\to +\infty$ in item (c) of Proposition \ref{prop: prop of lambda}. For bounded $z$, the claim is supported by Figure \ref{fig: estimates 1}. The sign of $\Re \left( \lambda_2^*- \lambda_1^*\right)$ remains unchanged in the region bounded by the solid lines in Figure \ref{fig: estimates 1}. By using the asymptotics of $\lambda_1^*$ and $\lambda_2^*$ at infinity,  it is readily seen that $\Re \left( \lambda_2^*(z)- \lambda_1^*(z)\right)<0$ for all $z$ on the right side of the solid curve excluding $[1, \infty)$, which particularly holds for the finite part of $\Sigma_1^{(1)}\setminus U(1,\delta)$.

This completes the proof of Proposition \ref{prop:lambda}.
\end{proof}

\begin{figure}
\hspace{4cm}
\includegraphics[width=340pt]{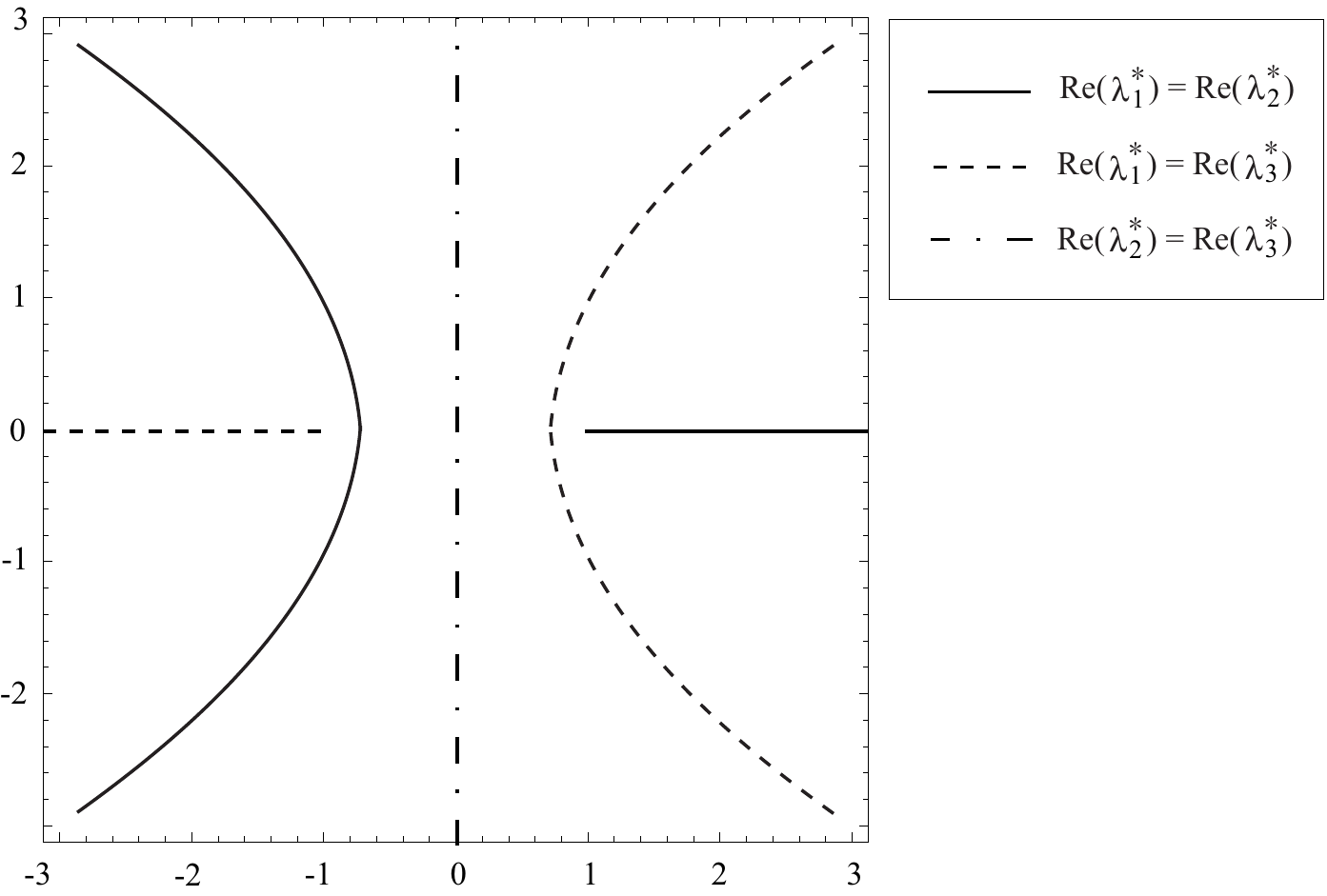}
\caption{Curves where $\Re \lambda_1^*=\Re \lambda_2^*$ (solid lines), $\Re \lambda_1^*=\Re \lambda_3^*$ (dashed lines), and $\Re \lambda_2^*=\Re \lambda_3^*$ (dashed-dotted lines).} 
\label{fig: estimates 1}
\end{figure}

As a consequence of the above proposition, the following corollary about the estimate of $J_S$ is immediate.
\begin{corollary} \label{coro:JS}
    There is a constant $c > 0$ such that
    \[ J_{S}(z) = I + \Boh \left(e^{-c s^{\frac43} |z|^{\frac43}}\right), \qquad \text{ as } s \to +\infty, \]
     uniformly for $z \in
\bigcup\limits_{i=1,2,4,5}\Sigma_i^{(1)}\setminus\{U(1,\delta) \cup U(-1,\delta)\}$.
\end{corollary}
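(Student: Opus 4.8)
The plan is to read off the claim directly from Proposition \ref{prop:lambda} combined with the explicit form of $J_S$ in \eqref{def:JS}. First I would recall that on the six rays $\Sigma_j^{(1)}$, the jump matrix $J_S$ is the identity on $\Sigma_0^{(1)}$ and $\Sigma_3^{(1)}$ (where it is actually a constant permutation-type matrix, but that contour is excluded from the statement anyway), while on each of the remaining four rays $\Sigma_i^{(1)}$, $i=1,2,4,5$, the matrix $J_S - I$ has all its nonzero entries of the form $e^{s^{4/3}(\lambda_a(z)-\lambda_b(z))}$ for appropriate indices $(a,b)$. Specifically, on $\Sigma_1^{(1)}$ the nonzero off-diagonal entries are $e^{s^{4/3}(\lambda_2-\lambda_1)}$ and $e^{s^{4/3}(\lambda_2-\lambda_3)}$; on $\Sigma_2^{(1)}$ they are $e^{s^{4/3}(\lambda_3-\lambda_1)}$ and $e^{s^{4/3}(\lambda_3-\lambda_2)}$; on $\Sigma_4^{(1)}$ and $\Sigma_5^{(1)}$ the same pairs up to signs.

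Next I would invoke Proposition \ref{prop:lambda} to bound the real parts of these exponents. For $z \in (\Sigma_1^{(1)} \cup \Sigma_5^{(1)}) \setminus U(1,\delta)$, part (a) gives $\Re(\lambda_2(z)-\lambda_1(z)) \leq -c_1 |z|^{4/3}$ and part (b) gives $\Re(\lambda_2(z)-\lambda_3(z)) \leq -c_3 |z|^{4/3}$; similarly for $z \in (\Sigma_2^{(1)} \cup \Sigma_4^{(1)}) \setminus U(-1,\delta)$, part (a) gives $\Re(\lambda_3(z)-\lambda_1(z)) \leq -c_2 |z|^{4/3}$ and part (b) gives $\Re(\lambda_3(z)-\lambda_2(z)) \leq -c_3 |z|^{4/3}$. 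Setting $c = \min\{c_1, c_2, c_3\} > 0$, each nonzero entry of $J_S(z) - I$ is therefore bounded in modulus by $e^{-c s^{4/3} |z|^{4/3}}$, and since this bound is uniform over the four rays with the disks $U(1,\delta)$ and $U(-1,\delta)$ removed, we obtain $J_S(z) = I + \Boh(e^{-c s^{4/3}|z|^{4/3}})$ as claimed. One should note that on $\Sigma_1^{(1)} \cup \Sigma_5^{(1)}$ the contour meets $U(1,\delta)$ but stays away from $U(-1,\delta)$ automatically (and vice versa), so removing both disks is harmless and makes the statement uniform across all four rays simultaneously.

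There is essentially no obstacle here: this corollary is a bookkeeping consequence of results already established. The only point requiring a word of care is making sure that the estimate is uniform in $\rho$ over compact subsets of $\mathbb{R}$ — but this is inherited from Proposition \ref{prop:lambda}, whose proof already controls the $\rho$-dependent perturbation $\lambda_j - \lambda_j^*$ by $\varrho s^{-2/3} \max\{1, |z|^{4/3}\}$ with $\varrho$ depending only on a bound for $|\rho|$, so for $s$ large enough this is absorbed into the $-c|z|^{4/3}$ bounds with a slightly smaller constant $c$. Hence the proof is just: expand $J_S - I$ entrywise, apply the four inequalities of Proposition \ref{prop:lambda}, and take $c$ to be the minimum of the constants appearing there.
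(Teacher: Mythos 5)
Your proof is correct and follows the paper's route exactly: the paper presents this corollary as an immediate consequence of Proposition \ref{prop:lambda}, reading off the nonzero entries of $J_S - I$ from \eqref{def:JS} and bounding each by the corresponding exponential estimate, with $c = \min\{c_1,c_2,c_3\}$. Your additional remarks on uniformity in $\rho$ are consistent with what the paper's proof of Proposition \ref{prop:lambda} already establishes.
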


\subsection{Global parametrix}
By Corollary \ref{coro:JS}, if we suppress all entries of the jump matrices for $S$ that decay exponentially as $s \to +\infty$, we are led to the following
RH problem for the global parametrix $N$.
\begin{rhp}\label{rhp:global para}
We look for a $3 \times 3$ matrix-valued function $N$ satisfying
\begin{itemize}
\item[\rm (1)]
$N(z)$ is defined and analytic in $\mathbb{C}\setminus \{(-\infty,-1]\cup[1,+\infty)\}$.
\item[\rm (2)] $N$ satisfies the jump condition
\begin{equation}\label{eq:N-jump}
 N_+(x)=N_-(x)\left\{
                \begin{array}{ll}
                  \begin{pmatrix} 0&1&0 \\ -1&0&0 \\ 0&0&1 \end{pmatrix}, & \qquad \hbox{$x>1$,}  \\
                  \begin{pmatrix} 0&0&1 \\ 0&1&0 \\ -1&0&0 \end{pmatrix}, & \qquad \hbox{$x<-1$.}
                \end{array}
              \right.
\end{equation}

\item[\rm (3)]
As $z \to \infty$, we have
\begin{equation}\label{eq:asyN1}
N(z)=
 \left( I+  \Boh (z^{-1}) \right) \diag \left( z^{-\frac13},1,z^{\frac13} \right) L_{\pm},
\end{equation}
where the constants $L_{\pm}$ are given in \eqref{def:Lpm}.
\end{itemize}
\end{rhp}

The above RH problem can be solved explicitly.  Let the scalar functions $N_j(\cdot)$, $j = 1,2,3,$ be defined as
\begin{align}
N_1(w)=\frac{1-\frac13 w^2}{(1-w^2)^{\frac12}}, \quad N_2(w)=-\frac{i}{\sqrt{6}}\frac{w-\frac{\sqrt{3}}{3}w^2}{(1-w^2)^{\frac12}}, \quad
N_3(w)=-\frac{i}{\sqrt{6}}\frac{w+\frac{\sqrt{3}}{3}w^2}{(1-w^2)^{\frac12}},
\end{align}
where the branch cut for the square root is taken along $\gamma_1^- \cup \gamma_2^-$, i.e., the curve defined by $w_{1,-}\left((-\infty,1] \cup [1,+\infty)\right)$; see Figure \ref{fig: image of w} for an illustration. It is shown in \cite[Section 6.1.5]{Des} that the solution to RH problem \ref{rhp:global para} is given by
\begin{equation} \label{Nz-def}
N(z)= \mathcal{N}_0
\begin{pmatrix}
N_1(w_1(z)) & N_1(w_3(z)) & N_1(w_2(z))
\\
N_2(w_1(z)) & N_2(w_3(z)) & N_2(w_2(z))
\\
N_3(w_1(z)) & N_3(w_3(z)) & N_3(w_2(z))
\end{pmatrix} \Lam,
\end{equation}
where $\Lam$ is defined in \eqref{eq: MatLam} and
\begin{equation} \label{eq: MatN0}
	\mathcal{N}_0= \frac14 \diag\left(\frac{4}{2^{\frac16}},\sqrt{6}, 3\cdot 2^{\frac16} \right)
\begin{pmatrix}
\sqrt{2}i & 1 & -1
\\
0 & 2 & 2
\\
-\sqrt{2}i & 1 & -1
\end{pmatrix}
\end{equation}
is an invertible constant matrix.

It is worthwhile to mention that $\mathcal{N}_0 ^{-1}N(z)$ satisfies the following symmetric relation (see \cite[Equations (2.2.30) and (6.1.38)]{Des})
\begin{equation} \label{eq: Nz+-1}
  \mathcal{N}_0 ^{-1}N(z)  = \Lam  \mathcal{N}_0 ^{-1}N(-z) \Lam,
\end{equation}
or equivalently,
\begin{equation} \label{eq: Nz+-}
    N(z)  = \Upsilon N(-z) \Lam,
\end{equation}
where
\begin{equation} \label{eq: Upsilon-def}
  \Upsilon := \mathcal{N}_0 \Lam  \mathcal{N}_0 ^{-1}.
\end{equation}

Finally, from the asymptotic behaviors of the $w$-functions given in Proposition \ref{prop:w}, it is readily seen that the following proposition regarding the refined asymptotic behaviors of the global parametrix $N$ near $-1$ and $\infty$.
\begin{proposition}
With $N$ defined in \eqref{Nz-def}, we have
\begin{align}
  N(z) = \, & \frac{3^{\frac14}}{2^{\frac34} (z + 1)^{\frac14}}
\begin{pmatrix}
    -2^{\frac13} i & 0 & 2^{\frac13} \\
 - i & 0 & 1 \\
  2^{-\frac43} i & 0 & -2^{-\frac43}
 \end{pmatrix} + \frac{1}{2^{\frac43} \cdot 3^{\frac12}}
\begin{pmatrix}
    0 & -2^{\frac53} & 0 \\
  0 & 2^{\frac73} & 0 \\
 0 & -5 & 0
 \end{pmatrix}
\nonumber
\\
 & + \frac{(z + 1)^{\frac14}}{6^{\frac54}}
 \begin{pmatrix}
   -2^{\frac13} i & 0 & -2^{\frac13} \\
 5 i & 0 & 5 \\
 \frac{25}{2^{\frac43}}  i & 0 & \frac{25}{2^{\frac43}}
 \end{pmatrix}
  + \frac{(z + 1)^{\frac34}}{72 \cdot 6^{\frac34}}
  \begin{pmatrix}
  -13 \cdot 2^{\frac13} i & 0 & 13 \cdot 2^{\frac13} \\
 35 i & 0 & -35 \\
 -\frac{83}{2^{\frac43} } i & 0 & \frac{83}{2^{\frac43} }
 \end{pmatrix}
\nonumber
\\
& + \frac{ 2^{\frac23} (z + 1)}{27 \sqrt{3}}
\begin{pmatrix}
   0 & -2^{\frac53} & 0 \\
  0 & 2^{\frac13} & 0 \\
 0 & 7 & 0
 \end{pmatrix} + \Boh(z+1)^{\frac54},\qquad z \to -1,
\label{Nz-exp-z=-1}
\end{align}
and
\begin{equation}\label{eq:asyN}
N(z)=
\left( I + \frac{ \mathsf{N}_1}{z} + \mathcal \Boh (z^{-2}) \right) \diag \left(z^{-\frac13},1,z^{\frac13} \right)L_{\pm},\qquad z \to \infty,
\end{equation}
where
\begin{equation} \label{N1-coeff}
  \mathsf{N}_1 = \begin{pmatrix}
    0 & -2^{-\frac{5}{3}} & 0 \\
    - \frac{5}{16 \cdot 2^{\frac{1}{3}}} & 0 & 2^{ -\frac{5}{3}} \\
    0 & \frac{5}{16 \cdot 2^{\frac{1}{3}}} & 0
  \end{pmatrix},
\end{equation}
and the constants $L_{\pm}$ are given in \eqref{def:Lpm}.
\end{proposition}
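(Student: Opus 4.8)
The plan is to read off both refined expansions directly from the closed-form expression \eqref{Nz-def} for $N$, by substituting into it the local and asymptotic expansions of the branches $w_1,w_2,w_3$ already recorded in Proposition \ref{prop:w} and then expanding the elementary functions $N_1,N_2,N_3$. No new idea is required; the only genuinely delicate point, flagged at the end, is to keep the branch of $(1-w^2)^{1/2}$ occurring in $N_1,N_2,N_3$ (cut along $\gamma_1^-\cup\gamma_2^-$) consistent with the sheet $\widehat{\mathcal R}_k$ on which each $w_k(z)$ lives.

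For the expansion \eqref{Nz-exp-z=-1} near $z=-1$, I would first assemble the three local expansions. Proposition \ref{prop:w}(d) supplies $w_1(z)$ and $w_3(z)$ as Puiseux series in $(z+1)^{1/2}$ to exactly the order needed. The third branch $w_2(z)$ is analytic at $z=-1$ — there the algebraic equation \eqref{eq: alg eq w} reads $(w+1)^2(w-2)=0$, so $w_1,w_3$ collide at $-1$ while $w_2(-1)=2$ — and its first Taylor coefficients follow most quickly from $w_2=-(w_1+w_3)$, the half-integer powers cancelling in the sum. Substituting into $N_1,N_2,N_3$, one sees that $N_j(w_1(z))$ and $N_j(w_3(z))$ inherit the $(1+w)^{-1/2}$ singularity of $N_j$ at $w=-1$ and hence expand in the powers $(z+1)^{-1/4},(z+1)^{1/4},(z+1)^{3/4},\dots$, whereas $N_j(w_2(z))$ stays analytic and contributes only $(z+1)^0,(z+1)^1,\dots$. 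Because the columns of the matrix in \eqref{Nz-def} are indexed by $(w_1,w_3,w_2)$ and are then permuted by $\Lam$, this produces exactly the alternating pattern in \eqref{Nz-exp-z=-1}: the matrices attached to quarter-integer powers have vanishing middle column, those attached to integer powers have only a middle column. Left-multiplying by $\mathcal{N}_0$ from \eqref{eq: MatN0}, right-multiplying by $\Lam$, and collecting coefficients through $(z+1)^{5/4}$ then yields \eqref{Nz-exp-z=-1}.

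For the expansion \eqref{eq:asyN} near $z=\infty$, I would use Proposition \ref{prop:w}(c) for $w_1(z)$ (in the two cases $\pm\Im z>0$) and for $w_3(z)$, recovering $w_2(z)$ from $w_2(z)=-w_3(-z)$ via \eqref{eq: w2-w3}; equivalently, once the $\Im z>0$ formula is in hand, the $\Im z<0$ one follows from $w_1(-z)=-w_1(z)$ together with \eqref{eq: w2-w3}. Since each $w_j(z)$ grows like a constant times $z^{1/3}$, naively every entry $N_j(w_k(z))$ is of order $z^{1/3}$. The key observation is that the linear combinations of $N_1,N_2,N_3$ dictated by the rows of $\mathcal{N}_0$ collapse the growth: at a single point $w$ the numerator of $\sqrt2\,i\,N_1+N_2-N_3$ is constant in $w$, that of $2(N_2+N_3)$ is linear in $w$, and that of $-\sqrt2\,i\,N_1+N_2-N_3$ is quadratic in $w$; dividing by the common factor $(1-w^2)^{1/2}\sim z^{1/3}$ gives the three rows of $\mathcal{N}_0$ times that matrix sizes $z^{-1/3}$, $1$, $z^{1/3}$, respectively. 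This is precisely the factor $\diag(z^{-1/3},1,z^{1/3})$ appearing in \eqref{eq:asyN}; factoring it out together with $L_{\pm}$ from \eqref{def:Lpm} leaves $I+\mathsf{N}_1 z^{-1}+\Boh(z^{-2})$, and reading off $\mathsf{N}_1$ as in \eqref{N1-coeff} just requires carrying the $w_j$-expansions one order further and performing the matrix arithmetic.

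The principal obstacle is purely administrative: correctly selecting, on each sheet and on each side of the cuts $(-\infty,-1]$ and $[1,+\infty)$, the branch of $(1-w^2)^{1/2}$ defining $N_1,N_2,N_3$ so that $N_j\circ w_k$ is the intended analytic function — this matters in particular for $N_j(w_2(z))$ near $z=-1$, where $w_2(-1)=2\in\widehat{\mathcal R}_2$ — and then tracking all the terms through the two matrix multiplications without slip. Both \eqref{Nz-exp-z=-1} and \eqref{eq:asyN} admit useful cross-checks: the symmetry relation \eqref{eq: Nz+-}, $N(z)=\Upsilon N(-z)\Lam$, ties the behavior near $z=-1$ to that near $z=1$ (so only one of the two need be computed from scratch), while the jump relations \eqref{eq:N-jump} confirm that \eqref{Nz-exp-z=-1} really organizes into powers of $(z+1)^{1/4}$.
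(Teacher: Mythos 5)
Your approach is correct and coincides with the paper's, which gives no detailed proof but merely says the expansions are ``readily seen'' from Proposition~\ref{prop:w} and the explicit formula~\eqref{Nz-def}: one substitutes the Puiseux and asymptotic series for $w_1,w_2,w_3$ into $N_1,N_2,N_3$ and carries out the matrix arithmetic with $\mathcal N_0$ and $\Lam$. The structural shortcuts you flag all check out: at $z=-1$ the cubic factors as $(w+1)^2(w-2)=0$ so $w_2(-1)=2$ and $w_2=-(w_1+w_3)$ is analytic there, and the numerators of $\sqrt2\,i\,N_1+N_2-N_3$, $2(N_2+N_3)$, $-\sqrt2\,i\,N_1+N_2-N_3$ are respectively constant, linear, and quadratic in $w$, which is exactly what produces the $\diag(z^{-1/3},1,z^{1/3})$ structure at infinity and the alternating column pattern near $-1$.
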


\subsection{Local parametrix near $-1$}
Due to the fact that the convergence of the jump matrices to the identity matrices on $\Sigma_2^{(1)}$ and $\Sigma_4^{(1)}$ is not uniform near $-1$, we intend to find a function $P^{(-1)}(z)$ satisfying an RH problem as follows.
\begin{rhp}\label{rhp:localpara1}
We look for a $3 \times 3$ matrix-valued function $P^{(-1)}(z)$ satisfying
\begin{itemize}
\item[\rm (1)]
$P^{(-1)}(z)$ is defined and analytic in $\overline{U(-1,\delta)} \setminus \{\Sigma_2^{(1)} \cup \Sigma_3^{(1)} \cup \Sigma_4^{(1)} \}$.
\item[\rm (2)] $P^{(-1)}(z)$ satisfies the jump condition
\begin{align}\label{eq:P1-jump}
 P^{(-1)}_{+}(z)=P^{(-1)}_{-}(z)J_S(z), \qquad z\in U(-1,\delta)\cap \{\cup\Sigma_{i=2,3,4}^{(1)}\},
\end{align}
where $J_S(z)$ is defined in \eqref{def:JS}.
\item[\rm (3)]As $s \to +\infty$, $P^{(-1)}(z)$ matches $N(z)$ on the boundary $\partial U(-1,\delta)$ of $U(-1,\delta)$, i.e.,
\begin{equation}\label{eq:mathching1}
P^{(-1)}(z)=\left(I+\mathcal \Boh (s^{-\frac43} ) \right)N(z), \qquad z\in \partial U(-1,\delta).
\end{equation}
\end{itemize}
\end{rhp}

The RH problem \ref{rhp:localpara1} for $P^{(-1)}(z)$ can be solved explicitly with the aid of the Bessel parametrix $\Phi^{(\Bes)}_{\alpha}$ described in Appendix \ref{Append: BP}. To this aim, we introduce the local conformal mapping
\begin{equation}\label{def: cm f}
f(z)=\frac{1}{4}\left(\lambda_1(z)-\lambda_3(z) \right)^2, \qquad z\in U(-1,\delta).
\end{equation}
By \eqref{eq:lambda-1 at -1} and \eqref{eq:lambda-3 at -1}, we have that $f(z)$ is analytic in $U(-1,\delta)$ and, as $z\to -1$,
\begin{equation}\label{def: expf-1}
f(z)=C_1^2(z+1)+2C_1 C_3(z+1)^2 + \Boh(z+1)^3
\end{equation}
with the constants $C_1$ and $C_3$ given in \eqref{def:Ci}. Let $\Phi^{(\Bes)}_{0}$ be the Bessel parametrix in \eqref{Phi-B-solution} with $\alpha=0$, we set, for $z\in \overline{U(-1,\delta)} \setminus \{\Sigma_2^{(1)} \cup \Sigma_3^{(1)} \cup \Sigma_4^{(1)} \}$,
\begin{align}\label{eq: local par}
P^{(-1)}(z)=\, & E(z)\begin{pmatrix}
\left( \Phi^{(\Bes)}_{0} \right)_{11}(s^{\frac{8}{3}}f(z))& 0 & \left( \Phi^{(\Bes)}_{0}\right)_{12}(s^{\frac{8}{3}}f(z))
\\
0 & 1 & 0
\\
\left( \Phi^{(\Bes)}_{0}\right)_{21}(s^{\frac{8}{3}}f(z)) & 0 & \left( \Phi^{(\Bes)}_{0}\right)_{22}(s^{\frac{8}{3}}f(z))
\end{pmatrix}
\nonumber
\\
&\times \begin{pmatrix}
e^{\frac{s^{\frac43}}{2}(\lambda_3(z)-\lambda_1(z))} & 0 & 0\\
0 & 1 & 0
\\
0 & 0& e^{\frac{s^{\frac43}}{2}(\lambda_1(z)-\lambda_3(z))}
\end{pmatrix}
\nonumber
\\
& \times \left\{
                \begin{array}{ll}
                 \begin{pmatrix} 1&0&0 \\ 0&1&0 \\ 0&  e^{s^{\frac43}(\lambda_3(z)-\lambda_2(z))}  &1 \end{pmatrix},  &\quad  \hbox{$|\arg f(z)|<  \frac{3\pi}{4}  $,} \\
         I, & \quad  \hbox{$ \frac{3\pi}{4} <|\arg f(z)|<\pi$,}
                \end{array}
              \right.
\end{align}
where $f(z)$ is defined in \eqref{def: cm f} and
\begin{equation}\label{def:E}
E(z)=\frac{1}{ \sqrt{2} }N(z)\begin{pmatrix}
1 & 0 &-i \\
0 & \sqrt{2} & 0
\\
-i & 0& 1
\end{pmatrix}\begin{pmatrix}
 \pi^{\frac{1}{2}}  s^{\frac{2}{3}}f(z) ^{\frac{1}{4}}& 0 &0 \\
0 & 1 & 0
\\
0 & 0&  \pi^{-\frac{1}{2}}  s^{-\frac{2}{3}}f(z) ^{-\frac{1}{4}}
\end{pmatrix}
\end{equation}
with $N(z)$ given in \eqref{Nz-def}.

\begin{proposition}\label{prop:p-1}
The local parametrix $P^{(-1)}(z)$ defined in \eqref{eq: local par} solves the RH problem \ref{rhp:localpara1}.
\end{proposition}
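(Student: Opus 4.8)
The plan is to verify the three defining properties of RH problem~\ref{rhp:localpara1} for the explicit function $P^{(-1)}$ in \eqref{eq: local par}, using the analyticity, jump conditions and large-argument asymptotics of the Bessel parametrix $\Phi^{(\Bes)}_0$ recalled in Appendix~\ref{Append: BP}, the properties of the $\lambda$-functions from Propositions~\ref{prop: prop of lambda} and~\ref{prop:lambda}, and the local expansion \eqref{Nz-exp-z=-1} of the global parametrix $N$ near $-1$. I expect the main obstacle to be the analyticity of the prefactor $E(z)$: checking simultaneously that $E$ has no jump across $(-\infty,-1)$ inside the disk and that its quarter-power singularity at $-1$ cancels requires the explicit numerical structure of $\mathcal{N}_0$ in \eqref{eq: MatN0}, of the constant matrix in \eqref{def:E}, and of the expansion \eqref{Nz-exp-z=-1} to conspire exactly; the jump computation and the matching estimate are then more routine.

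To show that $E(z)$ in \eqref{def:E} is analytic in the whole disk $U(-1,\delta)$, first note that off $(-\infty,-1]$ the matrix $N(z)$ is analytic and, by \eqref{def: expf-1}, $f(z)=C_1^2(z+1)+\Boh\!\big((z+1)^2\big)$ with $C_1\neq 0$ for $s$ large, so $f$ is a conformal map of $U(-1,\delta)$ onto a neighbourhood of the origin and $f(z)^{1/4}$ is analytic and nonvanishing there; hence $E$ is analytic in $U(-1,\delta)\setminus(-\infty,-1]$. Across the arc $\Sigma_3^{(1)}\cap U(-1,\delta)\subset(-\infty,-1)$ one combines the jump \eqref{eq:N-jump} of $N$ for $x<-1$ with the sign change of $f(z)^{1/4}$ along its branch cut and with the fact that the constant matrix in \eqref{def:E} conjugates this $N$-jump into a diagonal matrix, so as to conclude $E_+(x)=E_-(x)$; hence $E$ continues analytically across this arc. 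Finally, the potential quarter-power singularity of $E$ at $z=-1$ is removable: substituting \eqref{Nz-exp-z=-1} into \eqref{def:E}, the $\Boh\!\big((z+1)^{-1/4}\big)$ singular part of $N$ survives only in those entries that get multiplied by $f(z)^{1/4}\sim C_1^{1/2}(z+1)^{1/4}$, whereas the entries multiplied by $f(z)^{-1/4}$ receive no contribution from the $\Boh\!\big((z+1)^{-1/4}\big)$ and $\Boh(1)$ coefficients in \eqref{Nz-exp-z=-1}; thus $E(z)=\Boh(1)$ as $z\to-1$, and analyticity at $-1$ follows from Riemann's removable-singularity theorem.

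Since $E$ is analytic in $U(-1,\delta)$, all jumps of $P^{(-1)}$ on $\Sigma_2^{(1)}\cup\Sigma_3^{(1)}\cup\Sigma_4^{(1)}$ inside the disk come from the Bessel factor together with the two conjugating factors in \eqref{eq: local par}. Because the conformal map $f$ sends these three local arcs onto the (suitably rotated) jump rays of $\Phi^{(\Bes)}_0$, with $\Sigma_3^{(1)}$ mapped to the negative real axis, I would read off the jumps of $\Phi^{(\Bes)}_0\!\big(s^{\frac83}f(z)\big)$ from Appendix~\ref{Append: BP}, conjugate them by $\diag\!\big(e^{\frac{s^{\frac43}}{2}(\lambda_3(z)-\lambda_1(z))},1,e^{\frac{s^{\frac43}}{2}(\lambda_1(z)-\lambda_3(z))}\big)$ and by the $z$-dependent lower-triangular factor, and use $\lambda_{1,\pm}=\lambda_{3,\mp}$ on $(-\infty,-1)$ to check arc by arc that the outcome coincides with $J_S(z)$ in \eqref{def:JS}; in particular, the extra triangular factor present only on $\{|\arg f(z)|<\tfrac{3\pi}{4}\}$ is precisely what converts the Bessel jump on $\Sigma_2^{(1)}\cup\Sigma_4^{(1)}$ into the $\lambda_2$-dependent jumps $J_S$ there.

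For the matching condition \eqref{eq:mathching1}, observe that $s^{\frac83}f(z)=\tfrac14 s^{\frac83}\big(\lambda_1(z)-\lambda_3(z)\big)^2\to\infty$ uniformly on the fixed circle $\partial U(-1,\delta)$ as $s\to+\infty$, since $|f(z)|$ is bounded away from $0$ there. By construction, $E(z)$ times the leading term of $\Phi^{(\Bes)}_0\!\big(s^{\frac83}f(z)\big)$ (embedded as in \eqref{eq: local par}) times the two conjugating factors reproduces $N(z)$ — the exponential in the Bessel asymptotics cancelling the diagonal exponential factor by the very definition \eqref{def: cm f} of $f$, and the triangular factor being $I+\Boh\!\big(e^{-cs^{\frac43}}\big)$ on the circle by Proposition~\ref{prop:lambda}. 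Consequently $P^{(-1)}(z)N(z)^{-1}=I+N(z)\,\Delta(z)\,N(z)^{-1}+\Boh\!\big(e^{-cs^{\frac43}}\big)$, where $\Delta(z)=\Boh\!\big((s^{\frac83}f(z))^{-1/2}\big)=\Boh\!\big(s^{-\frac43}\big)$ comes from the subleading term of the Bessel asymptotics; since $N$ is independent of $s$ and hence, together with $N^{-1}$, bounded on $\partial U(-1,\delta)$, this yields $P^{(-1)}(z)=\big(I+\Boh(s^{-\frac43})\big)N(z)$ on $\partial U(-1,\delta)$, which is \eqref{eq:mathching1} and completes the verification.
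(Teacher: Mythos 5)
Your proof is correct and follows the same three-step structure as the paper's: establish that the prefactor $E(z)$ is analytic in the full disk $U(-1,\delta)$ (by checking it has no jump across $(-1-\delta,-1)$, using the jump of $N$ and the relation $f_+(x)^{1/4}=i\,f_-(x)^{1/4}$), verify the jump condition from the Bessel jumps conjugated by the exponential and triangular factors, and obtain the matching \eqref{eq:mathching1} from the large-argument asymptotics \eqref{eq:Besl-infty} of $\Phi^{(\Bes)}_0$ together with the exponential smallness of the triangular factor on $\partial U(-1,\delta)$. The one point you make explicit that the paper leaves implicit is the removability of the isolated singularity of $E$ at $z=-1$: your computation that the $(z+1)^{-1/4}$ and $\Boh(1)$ parts of $N$ in \eqref{Nz-exp-z=-1} drop out of the third column of $N\cdot\left(\begin{smallmatrix}1&0&-i\\0&\sqrt2&0\\-i&0&1\end{smallmatrix}\right)$, so that multiplication by $f(z)^{-1/4}$ still yields a bounded entry, is correct and is indeed what the subsequent evaluations of $E(-1)$ and $E'(-1)$ in \eqref{eq: E(-1)}--\eqref{eq: Ep(-1)} implicitly rely on.
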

\begin{proof}
From \eqref{Bessel-jump}, it is straightforward to check that $P^{(-1)}(z)$ satisfies the jump condition in \eqref{eq:P1-jump} provided the prefactor $E(z)$ is analytic in $U(-1,\delta)$. To see this, we note that the only possible jump for $E(z)$ is on the interval $(-1 - \delta , -1)$. For $x \in (-1 - \delta , -1)$, it is readily seen from \eqref{def: expf-1} that $f_+(x)^{\frac{1}{4}} = i\, f_-(x)^{\frac{1}{4}}$. Hence, we obtain from the jump of $N(z)$ in \eqref{eq:N-jump} that
\begin{align*}
  E_{-}^{-1} (x) E_+(x) = \, & \begin{pmatrix}
 \pi^{-\frac{1}{2}}  s^{-\frac{2}{3}} f_-(x) ^{-\frac{1}{4}}& 0 &0 \\
0 & 1 & 0
\\
0 & 0&  \pi^{\frac{1}{2}}  s^{\frac{2}{3}} f_-(x) ^{-\frac{1}{4}}
\end{pmatrix} \begin{pmatrix}
\frac{1}{\sqrt{2}} & 0  & \frac{i}{\sqrt{2}}  \\
0 & 1 & 0
\\
\frac{i}{\sqrt{2}}& 0& \frac{1}{\sqrt{2}}
\end{pmatrix} \begin{pmatrix} 0&0&1 \\ 0&1&0 \\ -1&0&0 \end{pmatrix}\\
& \times  \begin{pmatrix}
\frac{1}{\sqrt{2}} & 0  & -\frac{i}{\sqrt{2}}  \\
0 & 1 & 0
\\
-\frac{i}{\sqrt{2}}& 0& \frac{1}{\sqrt{2}}
\end{pmatrix} \begin{pmatrix}
 \pi^{\frac{1}{2}}  s^{\frac{2}{3}}f_+(x) ^{\frac{1}{4}}& 0 &0 \\
0 & 1 & 0
\\
0 & 0&  \pi^{-\frac{1}{2}}  s^{-\frac{2}{3}} f_+(x) ^{-\frac{1}{4}}
\end{pmatrix}=I.
\end{align*}
This shows that $E(z)$ is indeed analytic in $U(-1,\delta)$.

It remains to verify the matching condition \eqref{eq:mathching1}. From the local behaviors of $\lambda_1$ and $\lambda_3$ near $-1$ given in item (d) of Proposition \ref{prop: prop of lambda}, the function $e^{\frac{s^{\frac43}}{2}(\lambda_1(z)-\lambda_3(z))}$ appearing in the last term of \eqref{eq: local par} is exponentially small as $s\to +\infty$ for $z \in \partial U(-1,\delta)$. Thus, it follows from the asymptotic behavior of the Bessel parametrix $\Phi^{(\Bes)}_{0}$ at infinity in \eqref{eq:Besl-infty} that, for $z \in \partial U(-1,\delta)$,
\begin{eqnarray} \label{eq:mathching1-exp}
  P^{(-1)}(z) N^{-1}(z)=  I+ \frac{J_{1}^{(-1)} (z) }{ s^{\frac{4}{3}} } + \mathcal \Boh (s^{-\frac83} ) ,  \qquad  s \to +\infty,
\end{eqnarray}
where
\begin{equation} \label{j1-1-formula}
 J_{1}^{(-1)} (z) = \frac{1}{8  f(z)^{\frac{1}{2}}} N(z) \begin{pmatrix}
  -1 & 0 & -2i \\ 0 & 0 & 0 \\ -2i & 0 & 1
  \end{pmatrix} N^{-1}(z),
\end{equation}
which gives us \eqref{eq:mathching1}.

This completes the proof of Proposition \ref{prop:p-1}.
\end{proof}

We conclude this section by evaluating $E(-1)$ and $E'(-1)$ for later use. The calculations are straightforward and cumbersome by combining
\eqref{def:E} and the asymptotics of $N(z)$ and $f(z)$ given in \eqref{Nz-exp-z=-1} and \eqref{def: expf-1}. We omit the details but present the results below.
\begin{equation} \label{eq: E(-1)}
  E(-1) = \begin{pmatrix}
    -i  2^{\frac{1}{12}} 3^{\frac14} C_1^{\frac12} \pi^{\frac12}  s^{\frac23} & - \frac{2^{\frac13}}{3^{\frac12}} &
    - \frac{1}{2^{\frac5{12}} 3^{\frac54} C_1^{\frac12} \pi^{\frac12} s^{\frac23} } \\
   - i 2^{-\frac14} 3^{\frac14} C_1^{\frac12} \pi^{\frac12} s^{\frac23} & \frac{2}{3^{\frac12}}  &
   \frac{5}{ 2^{\frac34} 3^{\frac54} C_1^{\frac12} \pi^{\frac12} s^{\frac23}} \\
   i\frac{3^{\frac14} C_1^{\frac12} \pi^{\frac12} s^{\frac23} }{
  2^{\frac{19}{12}}} & -\frac{5}{2^{\frac43} 3^{\frac12}} & \frac{25}{
  2^{\frac{25}{12}} 3^{\frac54}  C_1^{\frac12} \pi^{\frac12} s^{\frac23}}
  \end{pmatrix},
\end{equation}
and
\begin{equation} \label{eq: Ep(-1)}
  E'(-1) = \begin{pmatrix}
    -i\frac{  (13 C_1 + 108 C_3) \pi^{\frac12} s^{\frac23}}{36 \cdot 2^{\frac{11}{12}} 3^{\frac34}  C_1^{\frac12}} & * &  * \\
    i \frac{(35 C_1 - 108 C_3)  \pi^{\frac12} s^{\frac23}}{72 \cdot 2^{\frac14} 3^{\frac34}  C_1^{\frac12}} & * & * \\
   -i \frac{(83 C_1 - 108 C_3) \pi^{\frac12} s^{\frac23}}{144 \cdot 2^{\frac{7}{12}} 3^{\frac34}  C_1^{\frac12}} & * & *
  \end{pmatrix},
\end{equation}
where the constants $C_1$ and $C_3$ are given in \eqref{def:Ci}, and $*$ stands for some unimportant entry.

\subsection{Local parametrix near $1$}
Similar to the situation encountered near $z=-1$, we intend to find a function $P^{(1)}(z)$ satisfying the following RH problem near $z=1$.

\begin{rhp}\label{rhp:localpara2}
We look for a $3 \times 3$ matrix-valued function $P^{(1)}(z)$ satisfying
\begin{itemize}
\item[\rm (1)]
$P^{(1)}(z)$ is defined and analytic in $\overline{U(1,\delta)} \setminus \{\Sigma_0^{(1)} \cup \Sigma_1^{(1)} \cup \Sigma_5^{(1)}\}$.
\item[\rm (2)] $P^{(1)}(z)$ satisfies the jump condition
\begin{equation}\label{eq:P2-jump}
 P^{(1)}_{+}(z)=P^{(1)}_{-}(z)J_S(z), \qquad z\in U(-1,\delta)\cap \{\cup\Sigma_{i=0,1,5}^{(1)}\},
\end{equation}
where $J_S(z)$ is defined in \eqref{def:JS}.
\item[\rm (3)]As $s \to +\infty$, we have
\begin{equation}\label{eq:mathching2}
P^{(1)}(z)=\left(I+\mathcal \Boh (s^{-\frac43} ) \right)N(z),\qquad z\in \partial U(1,\delta).
\end{equation}
\end{itemize}
\end{rhp}
Again, the RH problem \ref{rhp:localpara2} can be solved with the help of the Bessel parametrix $\Phi^{(\Bes)}_{0}$, following the same spirit
in the construction of $P^{(-1)}(z)$.
The conformal mapping now reads
\begin{equation}\label{def: cm f-2}
\widetilde{f}(z)=\frac{1}{4} ( \lambda_1(z)-\lambda_2(z) )^2,\qquad z\in  U(1,\delta).
\end{equation}
From \eqref{eq: lambda1z+-} and \eqref{eq: lambda23z}, one can see that
\begin{equation} \label{eq: fz+-}
  \widetilde{f}(z) = f(-z),
\end{equation}
where $f(z)$ is defined in \eqref{def: cm f}. In view of \eqref{def: expf-1}, this also implies that, as $z\to 1$,
\begin{equation}\label{def: expf-2}
\widetilde f(z)= - C_1^2(z-1)+2C_1 C_3(z-1)^2+\Boh(z-1)^3,
\end{equation}
with the constants $C_1$ and $C_3$ given in \eqref{def:Ci}. For $z\in \overline{U(1,\delta)} \setminus \{\cup\Sigma_{i=0,1,5}^{(1)}\}$,
we then set
\begin{align}\label{eq: local par-2}
P^{(1)}(z)=&\, \widetilde E(z)\begin{pmatrix}
\left( \Phi^{(\Bes)}_{0} \right)_{11}(s^{\frac{8}{3}} \widetilde f(z))& -\left( \Phi^{(\Bes)}_{0} \right)_{12}(s^{\frac{8}{3}}\widetilde f(z)) & 0
\\
-\left( \Phi^{(\Bes)}_{0} \right)_{21}(s^{\frac{8}{3}}\widetilde f(z)) & \left( \Phi^{(\Bes)}_{0} \right)_{22}(s^{\frac{8}{3}}\widetilde f(z)) & 0
\\
0 & 0 & 1
\end{pmatrix}
\nonumber
\\
& \times
\begin{pmatrix}
e^{\frac{s^{\frac43}}{2}(\lambda_2(z)-\lambda_1(z))} & 0 & 0
\\
0 & e^{\frac{s^{\frac43}}{2}(\lambda_1(z)-\lambda_2(z))} & 0 \\
0 & 0 & 1
\end{pmatrix}
\nonumber
\\
& \times \left\{
                \begin{array}{ll}
                 \begin{pmatrix} 1&0&0 \\ 0&1& e^{s^{\frac43}(\lambda_2(z)-\lambda_3(z))} \\ 0&  0  &1 \end{pmatrix},  &\quad  \hbox{$|\arg \widetilde f(z)|<  \frac{3\pi}{4}  $,} \\
         I, & \quad  \hbox{$  \frac{3\pi}{4} <|\arg \widetilde f(z)|<\pi$,}
                \end{array}
              \right.
\end{align}
where $\widetilde f(z)$ is defined in \eqref{def: cm f-2} and
\begin{equation}\label{def:E-2}
\widetilde E(z)=\frac{1}{ \sqrt{2} }N(z)\begin{pmatrix}
1 & i & 0
\\
i & 1 & 0 \\
0 & 0 & \sqrt{2}
\end{pmatrix}\begin{pmatrix}
 \pi^{\frac{1}{2}}  s^{\frac{2}{3}} \widetilde f(z) ^{\frac{1}{4}}& 0 & 0 \\
0 & \pi^{-\frac{1}{2}}  s^{-\frac{2}{3}} \widetilde f(z) ^{-\frac{1}{4}} & 0 \\
0 & 0 & 1
\end{pmatrix}.
\end{equation}

It comes out that $P^{(1)}(z)$ is closely related to $P^{(-1)}(z)$. To see the relation, we observe from \eqref{eq: Nz+-} and \eqref{eq: fz+-} that
\begin{equation}
  \widetilde E(z) = \Upsilon   E(-z)  \Lam,
\end{equation}
where the constant matrices $\Lam$ and $\Upsilon$ are given in \eqref{eq: MatLam} and \eqref{eq: Upsilon-def}, respectively.
A further appeal to the symmetric relations \eqref{eq: lambda1z+-} and \eqref{eq: lambda23z} then implies
\begin{equation}
  P^{(1)}(z) = \Upsilon    P^{(-1)}(-z)  \Lam.
\end{equation}
This in turn shows that the $ P^{(1)}(z)$ fulfills the jump condition \eqref{eq:P2-jump} and the matching condition \eqref{eq:mathching2}. In particular, we have, as  $s \to +\infty$,
\begin{eqnarray} \label{eq:mathching2-exp}
  P^{(1)}(z) N^{-1}(z)=  I+ \frac{J_{1}^{(1)} (z) }{ s^{\frac{4}{3}} } + \mathcal \Boh (s^{-\frac83} ) ,  \qquad z \in \partial U(1,\delta),
\end{eqnarray}
where
\begin{equation} \label{j1-2-formula}
 J_{1}^{(1)} (z) = \frac{1}{8  \widetilde f(z)^{\frac{1}{2}}}  N(z)
\begin{pmatrix}
-1 & 2i & 0 \\ 2i & 1 & 0 \\ 0 & 0 & 0
\end{pmatrix} N^{-1}(z).
\end{equation}
Again, using \eqref{eq: Nz+-} and \eqref{eq: fz+-}, it is readily seen from \eqref{j1-1-formula} that
\begin{align} \label{j1-symmetric}
   J_{1}^{(1)} (z) 
&=\frac{1}{8   f(-z)^{\frac{1}{2}}} \Upsilon N(-z) \Lambda \begin{pmatrix}
-1 & 2i & 0 \\ 2i & 1 & 0 \\ 0 & 0 & 0
\end{pmatrix} \Lambda^{-1} N^{-1}(-z) \Upsilon^{-1}
\nonumber
\\
&=\Upsilon J_1^{(-1)}(-z) \Upsilon,
\end{align}
where we have made use of the fact that $\Upsilon^{-1}=\Upsilon$; see \eqref{eq: Upsilon-def}. In summary, we have proved the following proposition.
\begin{proposition}\label{prop:p-2}
The local parametrix $P^{(1)}(z)$ defined in \eqref{eq: local par-2} solves the RH problem \ref{rhp:localpara2}.
\end{proposition}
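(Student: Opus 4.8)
The plan is to avoid repeating the Bessel‑parametrix construction from scratch and instead obtain $P^{(1)}$ from $P^{(-1)}$, already shown to solve RH problem \ref{rhp:localpara1} in Proposition \ref{prop:p-1}, by exploiting the reflection $z\mapsto -z$. The inputs are the identity $\widetilde f(z)=f(-z)$ from \eqref{eq: fz+-}, the $\lambda$‑symmetries $\lambda_1(-z)=\lambda_1(z)$ and $\lambda_2(-z)=\lambda_3(z)$ from \eqref{eq: lambda1z+-}--\eqref{eq: lambda23z}, and the global‑parametrix symmetry $N(z)=\Upsilon N(-z)\Lam$ from \eqref{eq: Nz+-}. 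First I would feed these into the definition \eqref{def:E-2} of $\widetilde E$ and compare with \eqref{def:E}; this yields $\widetilde E(z)=\Upsilon E(-z)\Lam$, so in particular $\widetilde E$ is analytic in $U(1,\delta)$ because $E$ is analytic in $U(-1,\delta)$. If one prefers a self‑contained argument, the only candidate jump of $\widetilde E$ lies on $(1,1+\delta)$ and is trivialized by $\widetilde f_+^{1/4}=i\,\widetilde f_-^{1/4}$ together with the jump \eqref{eq:N-jump} of $N$, exactly as in the proof of Proposition \ref{prop:p-1}.

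Substituting the same relations into \eqref{eq: local par-2} then matches it termwise with \eqref{eq: local par}: the Bessel block supported on rows and columns $\{1,2\}$, the conjugating diagonal exponential, and the lens factor are each obtained from their counterparts in \eqref{eq: local par} by the index swap $2\leftrightarrow 3$ combined with $z\mapsto -z$, i.e.\ by conjugation with $\Lam$ evaluated at $-z$. Collecting the factors gives the key identity $P^{(1)}(z)=\Upsilon\,P^{(-1)}(-z)\,\Lam$ on $\overline{U(1,\delta)}\setminus\{\Sigma_0^{(1)}\cup\Sigma_1^{(1)}\cup\Sigma_5^{(1)}\}$.

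From this identity the two remaining conditions in RH problem \ref{rhp:localpara2} are immediate. The rays $\Sigma_0^{(1)},\Sigma_1^{(1)},\Sigma_5^{(1)}$ are the $z\mapsto -z$ images of $\Sigma_3^{(1)},\Sigma_2^{(1)},\Sigma_4^{(1)}$, and, after rewriting the exponents via $\lambda_1(-z)=\lambda_1(z)$, $\lambda_2(-z)=\lambda_3(z)$, the matrices $J_S$ of \eqref{def:JS} on the first triple are the $\Lam$‑conjugates of those on the second triple (with an inversion coming from the orientation reversal under reflection), so $z\mapsto\Upsilon P^{(-1)}(-z)\Lam$ inherits the jump \eqref{eq:P2-jump} from $P^{(-1)}$. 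For the matching, inserting $N(z)=\Upsilon N(-z)\Lam$ into \eqref{eq:mathching1-exp} turns $P^{(-1)}N^{-1}=I+J_1^{(-1)}s^{-4/3}+\Boh(s^{-8/3})$ on $\partial U(-1,\delta)$ into $P^{(1)}N^{-1}=I+J_1^{(1)}s^{-4/3}+\Boh(s^{-8/3})$ on $\partial U(1,\delta)$, and a direct substitution of $\widetilde f(z)=f(-z)$ and $N(z)=\Upsilon N(-z)\Lam$ into \eqref{j1-1-formula} produces both \eqref{j1-2-formula} and the relation $J_1^{(1)}(z)=\Upsilon J_1^{(-1)}(-z)\Upsilon$ of \eqref{j1-symmetric}, using $\Upsilon^{-1}=\Upsilon$. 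The only real work, and the step I expect to be the main obstacle, is the orientation and sign bookkeeping in this last comparison of the six $J_S$‑blocks near $1$ with those near $-1$, together with checking that the lens factors in \eqref{eq: local par-2} and \eqref{eq: local par} correspond under the reflection; everything else is mechanical substitution.
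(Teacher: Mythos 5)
Your proposal is correct and follows exactly the paper's own route: the key step in both is the derivation of $\widetilde E(z)=\Upsilon E(-z)\Lam$ from the symmetries $\widetilde f(z)=f(-z)$ and $N(z)=\Upsilon N(-z)\Lam$, leading to $P^{(1)}(z)=\Upsilon P^{(-1)}(-z)\Lam$, from which the jump and matching conditions for $P^{(1)}$ are inherited from those of $P^{(-1)}$ (with the resulting $J_1^{(1)}(z)=\Upsilon J_1^{(-1)}(-z)\Upsilon$ since $\Upsilon^{-1}=\Upsilon$). Your explicit tracking of the orientation reversal under $z\mapsto -z$ in the jump check — the only part the paper leaves implicit — is correct.
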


\subsection{Final transformation}
Our final transformation is defined by
\begin{equation}\label{def:StoR}
 R(z)=\left\{
                \begin{array}{ll}
 S(z)N^{-1}(z), &\quad  \hbox{$z\in \mathbb{C}\setminus \{U(-1,\delta) \cup U(1,\delta)\cup  \Sigma_S\}$,} \\
        S(z)(P^{(-1)}(z))^{-1},        &\quad  \hbox{$z \in U(-1,\delta)$,} \\
        S(z)(P^{(1)}(z))^{-1},  &\quad  \hbox{$z \in U(1,\delta)$.}
                \end{array}
              \right.
\end{equation}
It is then easily seen that $R(z)$ satisfies the following RH problem.

\begin{rhp}\label{rhp:R}
The $3 \times 3$ matrix-valued function $R(z)$ defined in \eqref{def:StoR} has the following properties:
\begin{itemize}
\item[\rm (1)]
$R(z)$ is analytic in $\mathbb{C}\setminus \Sigma_R$, where the contour $\Sigma_R$ is shown in Figure \ref{fig:ContourR}.
\item[\rm (2)] $R(z)$ satisfies the jump condition
\begin{equation}\label{eq:R-jump}
 R_{+}(z)=R_{-}(z)J_R(z),\qquad z\in\Sigma_R,
\end{equation}
where
\begin{equation}
J_R(z)=
 \left\{
                \begin{array}{ll}
 P^{(-1)}(z)N^{-1}(z),        &\quad  \hbox{$z\in \partial U(-1,\delta)$,} \\
         P^{(1)}(z)N^{-1}(z), &\quad  \hbox{$z\in \partial U(1,\delta)$,}\\
       N(z)J_S(z)N^{-1}(z),   &\quad  \hbox{$z \in \cup\Sigma_{i=1,2,4,5}^{(1)}\setminus \{U(-1,\delta)\cup U(1,\delta)\}$.}
                \end{array}
              \right.
\end{equation}
\item[\rm (3)]As $z \to \infty$, we have
\begin{equation}\label{eq:asyR1}
R(z)=I+\Boh(z^{-1}).
\end{equation}
\end{itemize}
\end{rhp}

\begin{figure}[t]
\begin{center}
   \setlength{\unitlength}{1truemm}
   \begin{picture}(100,70)(-5,2)
       \put(37.5,41){\line(-2,1){28}}
       \put(37.5,39){\line(-2,-1){28}}
       \put(62.5,41){\line(2,1){28}}
       \put(62.5,39){\line(2,-1){28}}
       \put(40,40){\thicklines\circle*{1}}
       \put(40,40){\circle{5}}

       \put(60,40){\thicklines\circle*{1}}
       \put(60,40){\circle{5}}

       \put(41,42.3){\thicklines\vector(1,0){.0001}}
       \put(61,42.3){\thicklines\vector(1,0){.0001}}
       \put(38,34){$-1$}
       \put(59,34){$1$}


       \put(20,50){\thicklines\vector(2,-1){.0001}}
       \put(20,30.5){\thicklines\vector(2,1){.0001}}
       \put(80,30.5){\thicklines\vector(2,-1){.0001}}
       \put(80,50){\thicklines\vector(2,1){.0001}}

  \end{picture}
   \vspace{-16mm}
   \caption{Contour  $\Sigma_R$ for the RH problem \ref{rhp:R} for $R$.}
   \label{fig:ContourR}
\end{center}
\end{figure}
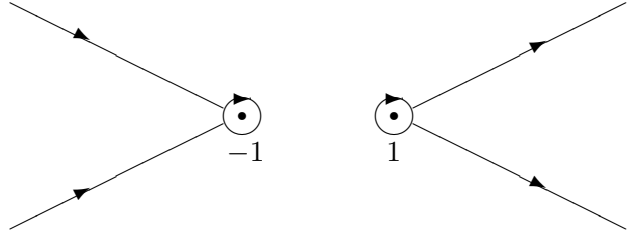

In view of Corollary \ref{coro:JS}, it follows that the jumps of $R(z)$ tend to the identity matrix exponentially fast as $s\to +\infty$, except for those on $\partial U(-1,\delta)\cup \partial U(1,\delta)$; see also \eqref{eq:mathching1-exp} and \eqref{eq:mathching2-exp} for the expansions of $J_R(z)$ on $\partial U(1,\delta)$ and $\partial U(-1,\delta)$.
Then, by a standard argument (cf. \cite{DeiftBook}), we conclude that, as $s\to +\infty$,
\begin{equation}\label{eq:estR}
R(z)=I+ \frac{R_1(z)}{s^{\frac{4}{3}}}+ \Boh(s^{-\frac{8}{3}}) \quad \textrm{and} \quad \frac{\ud}{\ud z}R(z)=\frac{R_1'(z)}{s^{\frac{4}{3}}}+ \Boh(s^{-\frac{8}{3}}),
\end{equation}
uniformly for $z \in \mathbb{C} \setminus \Sigma_R$. Moreover, a combination of \eqref{eq:estR} and RH problem \ref{rhp:R} shows that $R_1(z)$ solves the following RH problem.
\begin{rhp}\label{rhp:R1 z}
The $3 \times 3$ matrix-valued function $R_1(z)$ appearing in \eqref{eq:estR} satisfies the following properties:
\begin{itemize}
\item[\rm (1)]
$R_1(z)$ is defined and analytic in $\mathbb{C}\setminus \{\partial U(-1,\delta)\cup \partial U(1,\delta)\}$.
\item[\rm (2)] $R_1$ satisfies the jump condition
\begin{equation}\label{eq:R1-jump}
 R_{1,+}(z)-R_{1,-}(z) = \begin{cases} J_{1}^{(-1)}(z), & z \in \partial U(-1,\delta),  \\
 J_{1}^{(1)}(z), & z \in \partial U(1,\delta),  \end{cases}
\end{equation}
where $J_{1}^{(-1)}(z)$ and $J_{1}^{(1)}(z)$ are given in \eqref{j1-1-formula} and \eqref{j1-2-formula}, respectively.

\item[\rm (3)]As $z \to \infty$, we have $R_1(z) = \Boh(z^{-1}).$
\end{itemize}
\end{rhp}

By Cauchy's residue theorem, the solution to the above RH problem is given by
\begin{align}
& R_1(z)
\nonumber
\\
&= \frac{1}{2 \pi i} \oint_{\partial U(-1,\delta)} \frac{J_{1}^{(-1)}(\zeta)}{z-\zeta } \ud \zeta + \frac{1}{2 \pi i} \oint_{\partial U(1,\delta)} \frac{J_{1}^{(1)}(\zeta)}{z- \zeta} \ud \zeta \nonumber  \\
&= \begin{cases}
\ds \frac{\mathop{\textrm{Res}}\limits_{\zeta = -1}J_{1}^{(-1)}(\zeta)}{z+1} + \frac{\mathop{\Res}\limits_{\zeta = 1}J_{1}^{(1)}(\zeta)}{z-1} , & z \in \mathbb{C}\setminus \{U(-1,\delta)\cup  U(1,\delta)\}, \medskip \\
\ds \frac{\mathop{\textrm{Res}}\limits_{\zeta = -1} J_{1}^{(-1)}(\zeta)}{z+1}  + \frac{\mathop{\Res}\limits_{\zeta = 1}J_{1}^{(1)}(\zeta)}{z-1} - J_{1}^{(-1)}(z) , & z \in U(-1,\delta), \medskip \\
\ds \frac{\mathop{\Res}\limits_{\zeta = -1}J_{1}^{(-1)}(\zeta)}{z+1}  + \frac{\mathop{\Res}\limits_{\zeta = 1}J_{1}^{(1)}(\zeta)}{z-1} - J_{1}^{(1)}(z), & z \in U(1,\delta).
\end{cases}
\label{R1-expression}
\end{align}

For our purpose, we need to know the exact value of $R_1'(-1)$. From the local behaviors of $N(z)$ and $f(z)$ near $z=-1$ given in \eqref{Nz-exp-z=-1} and \eqref{def: expf-1}, we obtain from \eqref{j1-1-formula} that
\begin{align}\label{J1-1-expan}
  J_{1}^{(-1)}(z)
  &= \frac{1}{8  f(z)^{\frac{1}{2}}} N(z)
  \begin{pmatrix}
  -1 & 0 & -2i \\ 0 & 0 & 0 \\ -2i & 0 & 1
  \end{pmatrix} N^{-1}(z)
  \nonumber \\
  &=\frac{\mathcal{J}_{-1}}{z+1} + \mathcal{J}_0 + \mathcal{J}_1 \cdot (z+1) + \Boh(z+1)^2, \qquad  z \to -1,
\end{align}
where
\begin{equation} \label{j1-1-residue}
  \mathcal{J}_{-1}= \Res_{\zeta = -1}J_{1}^{(-1)}(\zeta) = \frac{1}{16 \sqrt{6} \, C_1} \begin{pmatrix}
    1 & - 2^{\frac43} & - 2^{\frac53} \\
    2^{-\frac13} & -2 & -2^{\frac43} \\
    -2^{-\frac53} & 2^{-\frac13} & 1
  \end{pmatrix}
\end{equation}
with $C_1$ given in \eqref{def:Ci}. Although the explicit formulas of $\mathcal{J}_0$ and $\mathcal{J}_1$ in \eqref{J1-1-expan} are also available and $\mathcal{J}_{1}$ is indeed involved in our later calculation, we decide not to include the exact formulas here due to their complicated forms. Also note that $J_1^{(1)}(z)=\Upsilon J_1^{(-1)}(-z) \Upsilon$ (see \eqref{j1-symmetric}), it is then readily seen that
\begin{align}
  \Res_{\zeta = 1}J_{1}^{(1)}(\zeta) &= - \Upsilon \, \Res_{\zeta = -1}J_{1}^{(-1)}(\zeta) \, \Upsilon = -\Upsilon \, \mathcal{J}_{-1} \, \Upsilon \nonumber \\
  & =  \frac{1}{16 \sqrt{6} C_1} \begin{pmatrix}
    -1 &  -2^{\frac43} & 2^{\frac53}  \\
    2^{-\frac13} & 2 & - 2^{\frac43} \\
    2^{-\frac53} & 2^{-\frac13} & -1
  \end{pmatrix}. \label{j1-2-residue}
\end{align}
Hence, in view of \eqref{R1-expression}, we arrive at
\begin{equation} \label{Rp(-1)-expression}
  R_1'(-1) = - \mathcal{J}_{1} - \frac{1}{4} \Res_{\zeta = 1}J_{1}^{(1)}(\zeta)= -\mathcal{J}_{1}-\frac{1}{64 \sqrt{6} C_1}
  \begin{pmatrix}
    -1 &  -2^{\frac43} & 2^{\frac53}  \\
    2^{-\frac13} & 2 & - 2^{\frac43} \\
    2^{-\frac53} & 2^{-\frac13} & -1
  \end{pmatrix},
\end{equation}
where $\mathcal{J}_{1}$ is given in \eqref{J1-1-expan}.

Finally, we point out that,
\begin{equation}\label{eq:asyR}
R(z)=I + \frac{\mathsf{R}_1}{z} +\Boh(z^{-2}), \qquad z\to \infty,
\end{equation}
where
$$
\mathsf{R}_1=\frac{i}{2 \pi}\int_{\Sigma_R} R_-(w)\left(J_{ R}(w)-I\right)\ud w.
$$
Comparing  \eqref{eq:estR}, \eqref{R1-expression} with \eqref{eq:asyR}, we obtain
\begin{align} \label{R1-coeff}
\mathsf{R}_1 &=\frac{1}{s^{\frac{4}{3}}} \left(\Res_{\zeta = -1}J_{1}^{(-1)}(\zeta)+\Res_{\zeta = 1}J_{1}^{(1)}(\zeta)\right) + \Boh(s^{-\frac{8}{3}})
\nonumber
\\
&= \frac{1}{s^{\frac{4}{3}}} \left(\mathcal{J}_{-1} - \Upsilon \mathcal{J}_{-1}  \Upsilon \right) + \Boh(s^{-\frac{8}{3}}),
\end{align}
on account of \eqref{j1-1-residue} and \eqref{j1-2-residue}.

We are now ready to prove Theorem \ref{main-thm}.

\section{Proof of Theorem \ref{main-thm}}\label{sec:proof}
Our strategy is to find the large $s$ asymptotics of $\frac{\partial}{\partial s} F(s;\rho)$ and $\frac{\partial}{\partial \rho} F(s;\rho)$ by making use of the differential identities established in Proposition \ref{prop:derivativeandX}, which will in turn give us the asymptotics of $F(s;\rho)$.

We start with deriving the asymptotics of $\frac{\partial}{\partial s} F(s;\rho)$.
Substituting \eqref{def:XtoT}  into the differential identity  \eqref{eq:derivativeinsX-2} gives us
\begin{align}
\frac{\partial}{\partial s} F(s;\rho)&=-\frac{1}{ \pi i s} \lim_{z \to -1} \left[\left(T^{-1}(z)T'(z)\right)_{21}+\left(T^{-1}(z) T'(z)\right)_{31}\right],
\label{eq:derivativeinsT}
\end{align}
where $'$ denotes the derivative with respect to $z$. Tracing back the invertible transformations $T\mapsto S$ and $S\mapsto R$ in \eqref{def:TtoS} and \eqref{def:StoR}, it follows that
\begin{multline}\label{eq:TtoR}
T(z)=e^{D_0s^{\frac43}} \sqrt{\frac{2\pi}{3}}i e^{\frac{\rho^2}{6}} \Psi_0 \diag \left(s^{-\frac13},1,s^{\frac13} \right) S_0^{-1}
\\ \times R(z)P^{(-1)}(z)\diag\left(e^{s^{\frac43}\lambda_1(z)},e^{s^{\frac43}\lambda_2(z)},e^{s^{\frac43}\lambda_3(z)}\right), \qquad z \in U(-1,\delta).
\end{multline}
In addition, from the explicit expression of $P^{(-1)}(z)$ in \eqref{eq: local par}, we further obtain
\begin{align} \label{Tformula-A-B}
T(z)&=e^{D_0s^{\frac43}} \sqrt{\frac{2\pi}{3}}i e^{\frac{\rho^2}{6}} \Psi_0 \diag \left(s^{-\frac13},1,s^{\frac13} \right) S_0^{-1}  R(z)E(z) \mathscr{B}(s^{\frac{8}{3}}f(z)) \mathscr{A}(z)
\end{align}
for $z \in U(-1,\delta)$ and $|\arg f(z)|<\frac34\pi$, where
\begin{equation} \label{eq: A(z) def}
	\mathscr{A}(z) := \begin{pmatrix}
e^{\frac{s^{\frac43}}{2}(\lambda_1(z)+\lambda_3(z))} & 0 & 0\\
0 & e^{s^{\frac43}\lambda_2(z)} & 0
\\
0 &  e^{\frac{s^{\frac43}}{2}(\lambda_1(z)+\lambda_3(z))} & e^{\frac{s^{\frac43}}{2}(\lambda_1(z)+\lambda_3(z))}
\end{pmatrix}
\end{equation}
and
\begin{equation}\label{def:Bz}
	\mathscr{B}(z):=\begin{pmatrix}
\left( \Phi^{(\Bes)}_{0}\right)_{11}(z)& 0 & \left( \Phi^{(\Bes)}_{0}\right)_{12}(z)
\\
0 & 1 & 0
\\
\left( \Phi^{(\Bes)}_{0}\right)_{21}(z) & 0 & \left( \Phi^{(\Bes)}_{0} \right)_{22}(z)
\end{pmatrix}.
\end{equation}
Since the prefactor $e^{D_0s^{\frac43}} \sqrt{\frac{2\pi}{3}}i e^{\frac{\rho^2}{6}} \Psi_0 \diag \left(s^{-\frac13},1,s^{\frac13} \right) S_0^{-1}$ in \eqref{Tformula-A-B} is independent of $z$, we have
\begin{align}
	T^{-1}(z)T'(z)  = \, & \mathscr{A}^{-1}(z) \mathscr{A}'(z) + s^{\frac{8}{3}}f'(z) \mathscr{A}^{-1}(z) \mathscr{B}^{-1}(s^{\frac{8}{3}}f(z)) \mathscr{B}'(s^{\frac{8}{3}}f(z)) \mathscr{A}(z) \nonumber \\
 &+ \mathscr{A}^{-1}(z) \mathscr{B}^{-1}(s^{\frac{8}{3}}f(z)) E^{-1}(z) E'(z) \mathscr{B}(s^{\frac{8}{3}}f(z)) \mathscr{A}(z)\nonumber \\
	& + \mathscr{A}^{-1}(z) \mathscr{B}^{-1}(s^{\frac{8}{3}}f(z)) E^{-1}(z) R^{-1}(z) R'(z) E(z) \mathscr{B}(s^{\frac{8}{3}}f(z)) \mathscr{A}(z). \label{T'formula-A-B}
\end{align}

We next evaluate the four terms on the right hand side of the above formula one by one. For the first term in \eqref{T'formula-A-B}, we see from \eqref{eq: A(z) def} that
\begin{eqnarray} \label{Term1-final}
  \mathscr{A}^{-1}(z) \mathscr{A}'(z) = \displaystyle \frac{s^{\frac{4}{3}}}{2} \begin{pmatrix}
    \lambda_1'(z) + \lambda_3'(z) & 0 & 0 \\
    0 & 2\lambda_2'(z) & 0 \\
   0 & \lambda_1'(z) +\lambda_3'(z) - 2\lambda_2'(z)  & \lambda_1'(z)+ \lambda_3'(z)  \end{pmatrix}.
\end{eqnarray}
Thus,
\begin{equation}\label{eq:term1}
\left(\mathscr{A}^{-1}(z) \mathscr{A}'(z)\right)_{21}=\left(\mathscr{A}^{-1}(z) \mathscr{A}'(z)\right)_{31}=0.
\end{equation}
For the second term in \eqref{T'formula-A-B}, we recall the following properties of the modified  Bessel functions (cf. \cite[Chapter 10]{DLMF}):
\begin{align*}
  I_0(z)&= \sum_{k=0}^{\infty}\frac{(z/2)^{2k}}{(k!)^2}, \\
  K_0(z)&= -\left(\ln(z/2)+\gamma\right) I_0(z)+\Boh(z^2), \qquad z \to 0,
\end{align*}
where $\gamma$ is the Euler's constant. This, together with \eqref{def:Bz} and \eqref{Phi-B-solution}, implies that, as $z \to 0$,
\begin{equation}\label{eq:Bzero}
\mathscr{B}(z)=
\begin{pmatrix}
1+\Boh(z)& 0 &  \Boh(\ln z) \\
0 & 1 & 0
\\
\frac{\pi i}{2}z+\Boh(z^2) & 0 &1+\Boh(z \ln z)
\end{pmatrix}
\end{equation}
and
\begin{multline}\label{eq:B-1zero}
\mathscr{B}^{-1}(z)=\begin{pmatrix}
\left( \Phi^{(\Bes)}_{0}\right)_{22}(z)& 0 & -\left( \Phi^{(\Bes)}_{0} \right)_{12}(z)
\\
0 & 1 & 0
\\
-\left( \Phi^{(\Bes)}_{0}\right)_{21}(z) & 0 & \left( \Phi^{(\Bes)}_{0} \right)_{11}(z)
\end{pmatrix}
\\
=
\begin{pmatrix}
1+\Boh(z\ln z)& 0 &  \Boh(\ln z) \\
0 & 1 & 0
\\
-\frac{\pi i}{2}z +\Boh(z^2) & 0 &1+\Boh(z)
\end{pmatrix}.
\end{multline}
A combination of these two formulas shows
\begin{equation}\label{eq:B-1B}
\lim_{z\to 0}\left( \mathscr{B}^{-1}(z)\mathscr{B}'(z) \right)_{21}=0, \qquad \lim_{z\to 0}\left( \mathscr{B}^{-1}(z)\mathscr{B}'(z) \right)_{31} =\frac{\pi i}{2}.
\end{equation}
To this end, we note that for an arbitrary $3\times 3$ matrix $ M = (m_{ij})_{i,j=1}^3$, it is readily seen from \eqref{eq: A(z) def} that
\begin{equation} \label{eq: A-1MA}
  \begin{split}
    &\lim_{z \to -1} \left(\mathscr{A}^{-1}(z)  M  \mathscr{A}(z) \right)_{21} =  m_{21} e^{\frac{s^{\frac{4}{3}}}{2} [\lambda_1(-1) + \lambda_3(-1) -2 \lambda_2(-1)]}, \\
  &\lim_{z \to -1} \left(\mathscr{A}^{-1}(z) M  \mathscr{A}(z) \right)_{31} = m_{31} - m_{21} e^{\frac{s^{\frac{4}{3}}}{2} [\lambda_1(-1) + \lambda_3(-1) -2 \lambda_2(-1)]}.
  \end{split}
\end{equation}
We then obtain from \eqref{eq:B-1B}, \eqref{eq: A-1MA} and the facts $f(-1) = 0, f'(-1) = C_1^2$ (see \eqref{def: expf-1}) that
\begin{align} \label{Term2-final}
&\lim_{z \to -1}  \left(s^{\frac{8}{3}}f'(z) \mathscr{A}^{-1}(z) \mathscr{B}^{-1}(s^{\frac{8}{3}}f(z)) \mathscr{B}'(s^{\frac{8}{3}}f(z)) \mathscr{A}(z) \right)_{21} = 0,
\\
\label{Term2-final2}
&\lim_{z \to -1}  \left(s^{\frac{8}{3}}f'(z) \mathscr{A}^{-1}(z) \mathscr{B}^{-1}(s^{\frac{8}{3}}f(z)) \mathscr{B}'(s^{\frac{8}{3}}f(z)) \mathscr{A}(z) \right)_{31} = \frac{C_1^2 \pi i}{2} s^{\frac{8}{3}},
\end{align}
where $C_1$ is given in \eqref{def:Ci}.


Regrading the last two terms in \eqref{T'formula-A-B}, we first observe from items (d) and (f) of Proposition \ref{prop: prop of lambda}
that
\begin{align}
\lambda_1(-1) + \lambda_3(-1) - 2 \lambda_2(-1)
&=
\lambda_1(-1) + \lambda_3(-1)
-2\left( - \frac{9}{2^{\frac{7}{3}}} + \frac{3 \rho}{2^{\frac{2}{3}} s^{\frac{2}{3}}} - \lambda_1(-1) - \lambda_3(-1)\right) \nonumber
\\
&=6C_0+\frac{9}{2^{\frac43}}-\frac{3\cdot 2^{\frac13} \rho}{s^{\frac23}}= -\frac{9}{2^{\frac73}} - \frac{3 \rho}{2^{\frac23} s^{\frac23}}. \label{lambda-123-value}
\end{align}
This means the exponential terms on the right hand side of \eqref{eq: A-1MA} are exponentially small as $s \to + \infty$ for $\rho$ in any compact subset of $\mathbb{R}$. Next, it is readily seen from \eqref{eq:Bzero}, \eqref{eq:B-1zero} and \eqref{def: expf-1} that, for an arbitrary $3\times 3$ matrix $ M = (m_{ij})_{i,j=1}^3$,
\begin{equation} \label{eq: B-1MB}
\begin{split}
  &\lim_{z \to -1} \left(\mathscr{B}^{-1}(s^{\frac{8}{3}}f(z)) M  \mathscr{B}(s^{\frac{8}{3}}f(z)) \right)_{21} = m_{21},
   \\
  &\lim_{z \to -1} \left(\mathscr{B}^{-1}(s^{\frac{8}{3}}f(z)) M  \mathscr{B}(s^{\frac{8}{3}}f(z)) \right)_{31} = m_{31}.
\end{split}
\end{equation}
This, together with \eqref{eq: A-1MA} and \eqref{lambda-123-value}, implies that
\begin{align}
&\lim_{z\to -1}\left(\mathscr{A}^{-1}(z) \mathscr{B}^{-1}(s^{\frac{8}{3}}f(z)) E^{-1}(z) E'(z) \mathscr{B}(s^{\frac{8}{3}}f(z)) \mathscr{A}(z)\right)_{21}  = \Boh(e^{-( 9 \cdot 2^{-\frac{10}{3}} - \delta ) s^{\frac43}}), \nonumber \\
& \lim_{z\to -1}\left( \mathscr{A}^{-1}(z) \mathscr{B}^{-1}(s^{\frac{8}{3}}f(z)) E^{-1}(z) R^{-1}(z) R'(z) E(z) \mathscr{B}(s^{\frac{8}{3}}f(z)) \mathscr{A}(z)\right)_{21}    \\
& \hspace{10cm} =\Boh(e^{-( 9 \cdot 2^{-\frac{10}{3}} - \delta ) s^{\frac43}}) \nonumber
\end{align}
for an arbitrary small constant $\delta>0$, i.e., the above two terms are exponentially small as $s \to +\infty$.
Similarly, we also have
\begin{multline}\label{eq:3term1}
\lim_{z\to -1}\left(\mathscr{A}^{-1}(z) \mathscr{B}^{-1}(s^{\frac{8}{3}}f(z)) E^{-1}(z) E'(z) \mathscr{B}(s^{\frac{8}{3}}f(z)) \mathscr{A}(z)\right)_{31}
\\
=\lim_{z \to -1}\left( E^{-1}(z) E'(z) \right)_{31} + \Boh(e^{-( 9 \cdot 2^{-\frac{10}{3}} - \delta ) s^{\frac43}}),
\end{multline}
\begin{multline}\label{eq:4term1}
\lim_{z\to -1}\left( \mathscr{A}^{-1}(z) \mathscr{B}^{-1}(s^{\frac{8}{3}}f(z)) E^{-1}(z) R^{-1}(z) R'(z) E(z) \mathscr{B}(s^{\frac{8}{3}}f(z)) \mathscr{A}(z) \right)_{31}
\\
=\lim_{z \to -1}\left( E^{-1}(z) R^{-1}(z) R'(z) E(z) \right)_{31}+ \Boh(e^{-( 9 \cdot 2^{-\frac{10}{3}} - \delta ) s^{\frac43}}).
\end{multline}
For the limit of the right hand side on of \eqref{eq:3term1}, it follows from  the explicit expressions of $E(-1)$ and $E'(-1)$ given in \eqref{eq: E(-1)} and \eqref{eq: Ep(-1)} that
\begin{equation} \label{Term3-final}
  \lim_{z \to -1}\left(E^{-1}(z) E'(z) \right)_{31} = 0.
\end{equation}
For the limit of the right hand side on of \eqref{eq:4term1}, we note from \eqref{eq:estR} that, as $s\to +\infty$,
\begin{equation}
  E^{-1}(z) R^{-1}(z) R'(z) E(z)  = E^{-1}(z) \left( \frac{R_1'(z)}{s^{\frac43}} + \Boh(s^{-\frac{8}{3}}) \right) E(z).
\end{equation}
In view of the explicit expressions of $E(-1)$ and $R_1'(-1)$ in \eqref{eq: E(-1)} and \eqref{Rp(-1)-expression}, it is then readily seen that
\begin{equation} \label{Term4-final}
  \lim_{z \to -1}\left( E^{-1}(z) R^{-1}(z) R'(z) E(z) \right)_{31} = \pi i   \frac{C_1 - 12 C_3 }{32 C_1} + \Boh(s^{-\frac{4}{3}}), \quad s\to +\infty,
\end{equation}
where the constants $C_i$, $i=1,3$, are given in \eqref{def:Ci}.

Finally, by \eqref{eq:derivativeinsT}, \eqref{T'formula-A-B}, \eqref{eq:term1}, \eqref{Term2-final}, \eqref{eq:3term1}, \eqref{eq:4term1}, \eqref{Term3-final} and \eqref{Term4-final}, we obtain
\begin{align}
\frac{\partial}{\partial s} F(s;\rho)&=-\frac{1}{ \pi i s} \lim_{z \to -1} \left[\left(T^{-1}(z)T'(z)\right)_{21}+\left(T^{-1}(z) T'(z)\right)_{31}\right]
\nonumber
\\
& = -\frac{1}{s} \left( \frac{1}{2}s^{\frac{8}{3}} C_1^2 + \frac{C_1 - 12 C_3 }{32 C_1} + \Boh(s^{-\frac{4}{3}}) \right)
\nonumber
\\
&=-\frac{3 s^{\frac53}}{2^{\frac83}} + \frac{\rho s}{2} - \frac{\rho^2 s^{\frac13}}{3 \cdot 2^{\frac43}} - \frac{2}{9 s} + \Boh(s^{-\frac{5}{3}}).
\end{align}
as $s \to +\infty$. Integrating the above formula gives us
\begin{equation} \label{main-F-asy-2}
 F(s;\rho)= -\frac{9 s^{\frac83}}{2^{\frac{17}3}} + \frac{\rho s^2}{4} - \frac{\rho^2 s^{{\frac43}}}{2^{{\frac{10}3}}} - \frac{2}{9} \ln s + \kappa(\rho)  + \Boh(s^{-\frac{2}{3}}),
\end{equation}
uniformly for $\rho$ in any compact subset of $\mathbb{R}$, where $\kappa(\rho)$ is the constant of integration that might be dependent on $\rho$.

To find more information about $\kappa(\rho)$, we come to $\frac{\partial}{\partial \rho} F(s;\rho)$. From \eqref{eq:derivativein-rho-X} and \eqref{asyS:coeff}, we have
\begin{equation} \label{eq:derivativeinRhoS}
  \frac{\partial}{\partial \rho} F(s;\rho) = -\frac{s^{\frac{2}{3}}}{2} \left( (  \mathsf{S}_1 )_{12} + ( \mathsf{S}_1 )_{23} \right) - D_1 s^2 + \frac{\rho^3}{54},
\end{equation}
where $\mathsf{S}_1$ and $D_1$ are given in \eqref{asyS:coeff} and \eqref{D0-def}, respectively. Recall that (see \eqref{def:StoR})
$$S(z) = R(z) N(z), \qquad z\in \mathbb{C}\setminus \{U(-1,\delta) \cup U(1,\delta) \cup  \Sigma_S\},$$
it follows from the large $z$ behaviors of $S(z)$,  $N(z)$ and $R(z)$  in \eqref{eq:asyS}, \eqref{eq:asyN} and \eqref{eq:asyR} that
\begin{equation}
  \mathsf{S}_1 = \mathsf{R}_1 + \mathsf{N}_1,
\end{equation}
where $\mathsf{R}_1$ and $\mathsf{N}_1$ are the coefficients of $1/z$ for $R(z)$ and $N(z)$ at infinity. A combination of the above formula and the expressions of $\mathsf{R}_1$ and $\mathsf{N}_1$ in \eqref{N1-coeff} and \eqref{R1-coeff} gives us
\begin{equation}
  (  \mathsf{S}_1 )_{12} + ( \mathsf{S}_1 )_{23} = \Boh(s^{-\frac{4}{3}}), \qquad  s \to +\infty.
\end{equation}
This, together with \eqref{eq:derivativeinRhoS} and $D_1$ given in \eqref{D0-def}, further implies
\begin{equation}
  \frac{\partial}{\partial \rho} F(s;\rho) = \frac{s^2}{4} - \frac{\rho s^{\frac{4}{3}}}{2^{\frac{7}{3}}} + \frac{\rho^3}{54} + \Boh(s^{-\frac{2}{3}}).
\end{equation}
Comparing this approximation with the asymptotics of $F(s;\rho)$ given in \eqref{main-F-asy-2}, it is easily seen that
\begin{equation}\label{eq:kapparho}
  \kappa'(\rho) = \frac{\rho^3}{54} \Longrightarrow \kappa(\rho) = \frac{\rho^4}{216} + C,
\end{equation}
where $C$ is an undetermined constant independent of $s$ and $\rho$. Inserting \eqref{eq:kapparho} into \eqref{main-F-asy-2} leads to our final asymptotic result \eqref{main-F-asy}.

This completes the proof of Theorem \ref{main-thm}.
\qed

\begin{appendices}
\section{Bessel parametrix} \label{Append: BP}
Define
\begin{equation}\label{Phi-B-solution}
\Phi^{(\Bes)}_{\alpha}(z)=\left\{
                             \begin{array}{ll}
                               \begin{pmatrix}
I_{\alpha}(z^{1/2}) & \frac{i}{\pi}K_{\alpha}(z^{1/2}) \\
\pi iz^{1/2}I'_{\alpha}(z^{1/2}) &
-z^{1/2}K_{\alpha}'(z^{1/2})
\end{pmatrix}, & z\in \texttt{I},\\ [4mm]
                              \begin{pmatrix}
I_{\alpha}(z^{1/2}) & \frac{i}{\pi}K_{\alpha}(z^{1/2}) \\
\pi iz^{1/2}I'_{\alpha}(z^{1/2}) &
-z^{1/2}K_{\alpha}'(z^{1/2})
\end{pmatrix}\begin{pmatrix}
                                1 & 0\\
                               -e^{\alpha \pi i} & 1
                                \end{pmatrix}, & z\in \texttt{II}, \\ [4mm]
                                \begin{pmatrix}
I_{\alpha}(z^{1/2}) & \frac{i}{\pi}K_{\alpha}(z^{1/2}) \\
\pi iz^{1/2}I'_{\alpha}(z^{1/2}) &
-z^{1/2}K_{\alpha}'(z^{1/2})
\end{pmatrix}\begin{pmatrix}
1 & 0 \\
e^{-\alpha\pi i} & 1
\end{pmatrix}, &  z\in \texttt{III},
                             \end{array}
                           \right.
\end{equation}
where  $I_\alpha(z)$ and $K_\alpha(z)$ denote the  modified Bessel functions (cf. \cite[Chapter 10]{DLMF}), the principle branch is taken for $z^{1/2}$ and the regions $\texttt{I-III}$ are illustrated in Fig. \ref{fig:jumps-Phi-B}. By \cite{KMVV},  we have that $\Phi^{(\Bes)}_{\alpha}(z)$ satisfies the RH problem below.

   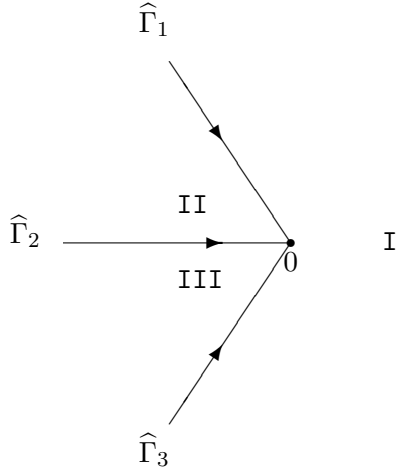
\begin{figure}[t]
\begin{center}
   \setlength{\unitlength}{1truemm}
   \begin{picture}(80,70)(-5,2)
       \put(40,40){\line(-2,-3){16}}
       \put(40,40){\line(-2,3){16}}
       \put(40,40){\line(-1,0){30}}

       \put(30,55){\thicklines\vector(2,-3){1}}
       \put(30,40){\thicklines\vector(1,0){1}}
       \put(30,25){\thicklines\vector(2,3){1}}

       \put(39,36.3){$0$}
       \put(20,11){$\widehat \Gamma_3$}
       \put(20,68){$\widehat \Gamma_1$}
       \put(3,40){$\widehat  \Gamma_2$}

       \put(52,39){$\texttt{I}$}
       \put(25,44){$\texttt{II}$}
       \put(25,34){$\texttt{III}$}

       \put(40,40){\thicklines\circle*{1}}

   \end{picture}
   \caption{The jump contours and regions for the RH problem for $\Phi^{(\Bes)}_{\alpha}$.}
   \label{fig:jumps-Phi-B}
\end{center}
\end{figure}

\subsection*{RH problem for $\Phi^{(\Bes)}_\alpha$}
\begin{description}
\item(a) $\Phi^{(\Bes)}_{\alpha}(z)$ is defined and analytic in $\mathbb{C}\setminus \{\cup^3_{j=1}\widehat \Gamma_j\cup\{0\}\}$, where the contours $\widehat \Gamma_j$, $j=1,2,3$,  are indicated  in Figure \ref{fig:jumps-Phi-B}.

\item(b) $\Phi^{(\Bes)}_{\alpha}(z)$ satisfies the jump condition
\begin{equation} \label{Bessel-jump}
 \Phi^{(\Bes)}_{\alpha,+}(z)=\Phi^{(\Bes)}_{\alpha,-}(z)
 \left\{
 \begin{array}{ll}
   \begin{pmatrix}
                                1 & 0\\
                               e^{\alpha \pi i} & 1
                                \end{pmatrix},  &  \qquad z \in \widehat \Gamma_1, \\
   \begin{pmatrix}
                                0 & 1\\
                               -1 & 0
                                \end{pmatrix},  &  \qquad z \in \widehat \Gamma_2, \\
   \begin{pmatrix}
                                 1 & 0 \\
                                 e^{-\alpha\pi i} & 1 \\
                                \end{pmatrix}, &   \qquad z \in \widehat \Gamma_3.
 \end{array}  \right .
 \end{equation}

\item(c) $\Phi^{(\Bes)}_{\alpha}(z)$ satisfies the following asymptotic behavior at infinity:
\begin{multline}\label{eq:Besl-infty}
 \Phi^{(\Bes)}_{\alpha}(z)=
 \frac{( \pi^2 z )^{-\frac{1}{4} \sigma_3}}{\sqrt{2}}
 \begin{pmatrix}
 1 & i
 \\
 i & 1
 \end{pmatrix}
 \\
 \times  \left( I + \frac{1}{8 z^{1/2}}  \begin{pmatrix}
   -1-4\alpha^2 & -2i \\
   -2i & 1+4\alpha^2
 \end{pmatrix} +  \Boh\left(\frac{1}{z}\right)
 \right)e^{z^{1/2}\sigma_3},\quad z\to \infty.
   \end{multline}

\item(d) $\Phi^{(\Bes)}_{\alpha}(z)$ satisfies the  following asymptotic behaviors near the origin:
\newline
    If $\alpha<0$,
    \begin{equation}
\Phi^{(\Bes)}_{\alpha}(z)=
\Boh \begin{pmatrix}
|z|^{\alpha/2} & |z|^{\alpha/2}
\\
|z|^{\alpha/2} & |z|^{\alpha/2}
\end{pmatrix}, \qquad \textrm{as $z \to 0$}.
\end{equation}
If $\alpha=0$,
    \begin{equation}
\Phi^{(\Bes)}_{\alpha}(z)=
\Boh \begin{pmatrix}
\ln|z| & \ln|z|
\\
\ln|z| & \ln|z|
\end{pmatrix}, \qquad \textrm{as $z \to 0$}.
\end{equation}
If $\alpha>0$,
     \begin{equation}
\Phi^{(\Bes)}_{\alpha}(z)= \left\{
                             \begin{array}{ll}
                                \Boh\begin{pmatrix}
|z|^{\alpha/2} & |z|^{-\alpha/2}
\\
|z|^{\alpha/2} & |z|^{-\alpha/2}
\end{pmatrix}, & \hbox{as $z \to 0$ and $z\in \texttt{I}$,}
\\
 \Boh \begin{pmatrix}
|z|^{-\alpha/2} & |z|^{-\alpha/2}
\\
|z|^{-\alpha/2} & |z|^{-\alpha/2}
\end{pmatrix}, & \hbox{as $z \to 0$ and $z\in \texttt{II}\cup \texttt{III} $.}
                             \end{array}
                           \right.
\end{equation}
\end{description}

\end{appendices}

\section*{Acknowledgements}
Dan Dai was partially supported by a grant from the City University of Hong Kong (Project No. 7005252), and grants from the Research Grants Council of the Hong Kong Special Administrative Region, China (Project No. CityU 11303016, CityU 11300520). Shuai-Xia Xu was partially supported by National Natural Science Foundation of China under grant numbers 11971492, 11571376 and 11201493. Lun Zhang was partially supported by National Natural Science Foundation of China under grant numbers 11822104 and 11501120, by The Program for Professor of Special Appointment (Eastern Scholar) at Shanghai Institutions of Higher Learning, and by Grant EZH1411513 from Fudan University. He also thanks Marco Bertola for helpful discussions related to this work.



\begin{thebibliography}{10}
\bibitem{ACV}
M. Adler, M. Cafasso and P. van Moerbeke, From the Pearcey to the Airy process, Electron. J. Probab. 16 (2011), 1048--1064.

\bibitem{AOV}
M. Adler, N. Orantin and P. van Moerbeke,
Universality for the Pearcey process, Phys. D 239 (2010), 924--941.

\bibitem{AM}
M. Adler and P. van Moerbeke, PDEs for the Gaussian ensemble with external source and the Pearcey distribution, Comm. Pure Appl. Math. 60 (2007), 1261--1292.

\bibitem{AjEK2017} O. H. Ajanki, L. Erd\H{o}s and T. Kr\"{u}ger, Singularities of solutions to quadratic vector equations on the complex upper half-plane, Comm. Pure Appl. Math. 70  (2017), 1672--1705.

\bibitem{AlEK2018}
J. Alt, L. Erd\H{o}s and T. Kr\"{u}ger, The Dyson equation with linear self-energy: spectral bands, edges and cusps,  Doc. Math. 25 (2020), 1421--1539.


\bibitem{BRD08}
J. Baik, R. Buckingham and J. DiFranco, Asymptotics of Tracy-Widom distributions and the
total integral of a Painlev\'{e} II function, Comm. Math. Phys. 280 (2008), 463--497.

\bibitem{BE}
E.~L. Basor and T. Ehrhardt, On the asymptotics of certain Wiener-Hopf-plus-Hankel determinants,
New York J. Math. 11 (2005), 171--203.

\bibitem{BC1}
      M. Bertola and M. Cafasso, The transition between the gap
      probabilities from the Pearcey to the Airy process--a
      Riemann-Hilbert approach,  Int. Math. Res. Not. IMRN 2012 (2012), 1519--1568.

\bibitem{BK3}
    P. M. Bleher and A. B. J. Kuijlaars,
    Large $n$ limit of Gaussian random matrices with external
    source, part III: double scaling limit,
    Comm. Math. Phys. 270 (2007), 481--517.

\bibitem{Bor:Dei2002}
A. Borodin and P. Deift, Fredholm determinants, Jimbo-Miwa-Ueno $\tau$-functions, and representation theory, Comm. Pure Appl. Math. 55 (2002), 1160--1230.


\bibitem{BH}
      E. Br\'ezin and S. Hikami,
      Level spacing of random matrices in an external source,
      Phys. Rev. E. 58 (1998), 7176--7185.

\bibitem{BH1}
      E. Br\'ezin and S. Hikami,
      Universal singularity at the closure of a gap in a random matrix theory, Phys. Rev. E. 57 (1998), 4140--4149.

\bibitem{BH97a}
   E. Br\'ezin and S. Hikami,
   Extension of level-spacing universality, Phys. Rev. E 56 (1997), 264--269.

\bibitem{BH97b}
     E. Br\'ezin and S. Hikami,
     Spectral form factor in a random matrix theory, Phys. Rev. E 55 (1997), 4067--4083.

\bibitem{BH96a}
   E. Br\'ezin and S. Hikami,
   Correlations of nearby levels induced by a random potential, Nucl. Phys. B
   479 (1996), 697--706.

\bibitem{CET95}
Y. Chen, K. Eriksen and C.~A. Tracy, Largest eigenvalue distribution in the double scaling limit of matrix models: a Coulomb fluid approach, J. Phys. A 28 (1995), L207--L211.



\bibitem{DXZ2020} D. Dai, S.-X. Xu and L. Zhang, On the deformed Pearcey determinant, arXiv:2007.12691.


\bibitem{DeiftBook}
P. Deift,  Orthogonal Polynomials and Random Matrices: A Riemann-Hilbert Approach, Courant Lecture Notes 3, New York University, 1999.

\bibitem{DIK2008}
P. Deift, A. Its and I. Krasovsky, Asymptotics of the Airy-kernel determinant, Comm. Math. Phys. 278 (2008), 643--678.

\bibitem{DIKZ}
P. Deift, A. Its, I. Krasovsky and X. Zhou, The Widom-Dyson constant for the gap probability in random
matrix theory,  J. Comput. Appl. Math. 202 (2007), 26--47.

\bibitem{DIZ97}
P. Deift, A. Its and X. Zhou, A Riemann-Hilbert approach to asymptotic problems arising in the
theory of random matrix models, and also in the theory of integrable statistical mechanics,
Ann. of Math. (2) 146 (1997), 149--235.

\bibitem{dkv}
P. Deift, I. Krasovsky and J. Vasilevska, Asymptotics for a determinant with a confluent hypergeometric kernel, Int. Math. Res. Not. IMRN 2011 (2011), 2117--2160.

\bibitem{DZ93}
P. Deift and X. Zhou, A steepest descent method for oscillatory Riemann-Hilbert problems. Asymptotics for the MKdV equation, Ann. of Math. (2) 137 (1993), 295--368.

\bibitem{Des}
    K. Deschout,
    Multiple orthogonal polynomial ensembles, Ph.D. Thesis,
    KU Leuven, 2012.

\bibitem{E10}T. Ehrhardt, The asymptotics of a Bessel-kernel determinant which arises in random matrix theory, Adv. Math. 225 (2010), 3088--3133.


\bibitem{E06}T. Ehrhardt, Dyson's constant in the asymptotics of the Fredholm determinant of the sine kernel, Comm. Math. Phys.
262 (2006), 317--341.

\bibitem{EKS}
    L. Erd\H{o}s, T. Kr\"{u}ger and D. Schr\"{o}der,
    Cusp universality for random matrices I: local Law and the complex Hermitian case, Comm. Math. Phys. 378 (2020), 1203--1278.

\bibitem{Forrester}
   P.~J. Forrester,
   Log-gases and Random Matrices, London Mathematical Society Monographs Series, 34., Princeton University Press, Princeton, NJ, 2010.

\bibitem{Forrester93}
P.~J. Forrester, The spectrum edge of random matrix ensembles, Nucl. Phys. B 402 (1993),
709--728.

\bibitem{GZ}
    D. Geudens and L. Zhang,
    Transitions between critical kernels: from the tacnode kernel and critical kernel in the two-matrix model to
    the Pearcey kernel, Int. Math. Res. Not. IMRN 2015 (2015), 5733--5782.

\bibitem{HHNa}
   W. Hachem, A. Hardy and J. Najim, Large complex correlated Wishart matrices: fluctuations and asymptotic independence at the edges, Ann. Probab. 44 (2016), 2264--2348.

\bibitem{HHNb}
  W. Hachem, A. Hardy and J. Najim, Large complex correlated Wishart matrices: the Pearcey kernel and expansion at the hard edge,
  Electron. J. Probab. 21 (2016), Paper No. 1, 36 pp.

\bibitem{IIKS90}
      A.~R. Its, A.~G. Izergin, V.~E. Korepin and N.~A. Slavnov, Differential
      equations for quantum correlation functions, Internat. J. Modern Phys.
      B 4 (1990), 1003--1037.

\bibitem{Kra1}
I. Krasovsky, Large Gap Asymptotics for Random Matrices, XVth International Congress
on Mathematical Physics, New Trends in Mathematical Physics, Springer,
2009, 413--419.

\bibitem{Kra2}
I. Krasovsky, Gap probability in the spectrum of random matrices and asymptotics of polynomials orthogonal on an arc of the unit circle, Int.
Math. Res. Not. IMRN 2004 (2004), 1249--1272.

\bibitem{KMVV}
     A.~B.~J. Kuijlaars, K. T-R. McLaughlin, W. Van Assche and M. Vanlessen,  The
     Riemann-Hilbert approach to strong asymptotics for orthogonal polynomials on $[-1,1]$,
     Adv. Math. 188 (2004),  337--398.

\bibitem{metha}
        M.~L. Mehta, Random Matrices, 3rd ed., Elsevier/Academic Press, Amsterdam, 2004.

\bibitem{Miyamoto}
      T. Miyamoto, On an Airy function of two variables, Nonlinear Anal. 54 (2003), 755--772.


\bibitem{OR}
        A. Okounkov and N. Reshetikhin,
        Random skew plane partitions and the Pearcey process,
        Comm. Math. Phys. 269 (2007), 571--609.

\bibitem{DLMF}
        F.~W.~J. Olver, A.~B. Olde Daalhuis, D.~W. Lozier, B.~I. Schneider, R.~F. Boisvert, C.~W. Clark, B.~R. Miller and B.~V. Saunders, eds, NIST Digital Library of Mathematical Functions, http://dlmf.nist.gov/, Release 1.0.21 of 2018-12-15.


\bibitem{Pastur}
        L.~A.~Pastur, The spectrum of random matrices, Teoret. Mat. Fiz. 10 (1972), 102--112.


\bibitem{Pear}
       T. Pearcey, The structure of an electromagnetic field in the
       neighborhood of a cusp of a caustic, Philos. Mag. 37 (1946),
       311--317.

\bibitem{TW}
      C. Tracy and H. Widom, The Pearcey process, Comm. Math. Phys 263 (2006), 381--400.

\bibitem{Zinn}

      P. Zinn-Justin, Random Hermitian matrices in an external field, Nucl. Phys. B 497 (1997), 725--732.

\end{thebibliography}
\end{document}